\documentclass[11pt]{article}
\usepackage[T1]{fontenc}
\usepackage[utf8]{inputenc}
\usepackage{lmodern}
\usepackage[margin=1in]{geometry}
\usepackage{amsmath,amssymb,amsthm,mathtools}
\usepackage{microtype}
\usepackage{booktabs,longtable,array}
\usepackage[colorlinks=true,allcolors=blue,bookmarksnumbered=true]{hyperref}
\hypersetup{
  pdftitle={An exponential lower bound for the bit pigeonhole principle in resolution over parities},
  pdfauthor={Kamil Braun},
  pdfsubject={Research preprint; not externally peer reviewed},
  pdfkeywords={proof complexity, bit pigeonhole principle, resolution over parities, polynomial calculus}
}
\newtheorem{theorem}{Theorem}[section]
\newtheorem{lemma}[theorem]{Lemma}
\newtheorem{corollary}[theorem]{Corollary}
\newtheorem{proposition}[theorem]{Proposition}
\theoremstyle{definition}

\theoremstyle{remark}
\newtheorem{remark}[theorem]{Remark}
\numberwithin{equation}{section}
\newcommand{\leanref}[2]{\href{https://github.com/kbr-/math-research/blob/8904bf09a5376f48a00d1c25d079bf0c6441502a/formalization/#1}{#2}}
\newcommand{\lean}[1]{\textnormal{\small\leanref{#1}{[Lean]}}}
\newcommand{\leangroup}[1]{\textnormal{\small[Lean: #1]}}
\newcommand{\leansep}{\,\textperiodcentered\ }
\newcommand{\F}{\mathbb F_2}
\newcommand{\Res}{\mathrm{Res}(\oplus)}

\newcommand{\BPHP}{\mathrm{BPHP}}
\newcommand{\Bool}{\operatorname{BOOL}}
\newcommand{\Ideal}{\mathcal I}
\newcommand{\Cons}{\mathcal C}
\newcommand{\Pol}{\mathcal P}
\newcommand{\rows}{\operatorname{row}}
\newcommand{\cols}{\operatorname{col}}
\newcommand{\Span}{\operatorname{span}}

\title{An exponential lower bound for the bit pigeonhole principle\\
in resolution over parities}
\author{Kamil Braun\\
\href{mailto:kamilgbraun@gmail.com}{\texttt{kamilgbraun@gmail.com}}\\[0.6em]
\small With substantial assistance from Claude and GPT models.\\
\small More details in Section~\ref{sec:model-assistance}.}
\date{Preprint, revision 2\\19 September 2026}
\begin{document}
\maketitle
\begin{abstract}
Resolution over parities, \(\mathrm{Res}(\oplus)\), is the characteristic-two
version of resolution over linear equations: clauses are disjunctions of
affine equations over \(\mathbb F_2\).
Superpolynomial size lower bounds were previously known only for restricted
refutations: tree-like, regular, or of bounded depth. We prove that every
DAG-like \(\mathrm{Res}(\oplus)\) refutation of the bit pigeonhole principle
with \(n+1\) pigeons and \(n=2^\ell\) holes has more than
\(\exp\bigl(n/(32768\,\ell^2)\bigr)=2^{\Omega(n/\log^2 n)}\) clauses, for every
\(\ell\ge32\), with no restriction on regularity or depth.
The proof translates an arbitrary refutation with \(S\) clauses into a
polynomial calculus refutation of degree \(O(\log n)\) over \(O(S+n^2)\)
groups of extension variables in the style of Buss, Impagliazzo,
Kraj\'i\v{c}ek, Pudl\'ak, Razborov, and Sgall. One substitution then removes
all extension variables at once and leaves a nonzero low-degree polynomial
derived from the pigeonhole axioms alone at degree at most \(n/2\); a degree lower
bound in the style of Razborov, proved through the homology of chessboard
complexes, shows that no such derivation exists.
The argument also yields a general sufficient condition for
\(\mathrm{Res}(\oplus)\) size lower bounds. The main theorem, this condition,
and all their dependencies are formalized in Lean 4, and every statement links
to its formal proof.
The proof was developed with substantial AI assistance within an open
research framework described in the final section.
\end{abstract}
\clearpage
\begingroup\small\tableofcontents\endgroup
\newpage
\section{The claim and its setting}\label{sec:introduction}

Resolution over linear equations was introduced by Raz and Tzameret
\cite{RT08}, with equations over the integers. Its characteristic-two
version, resolution over parities, was subsequently studied by Itsykson
and Sokolov \cite{IS14,IS20}, who established tree-like lower bounds.
Here clauses are disjunctions of affine equations over \(\F\), with the
rules specified below. Its unrestricted proof-size lower-bound problem remains an
explicit open benchmark in the literature reviewed for this preprint
\cite{IPS26}. The contribution claimed here is an exponential lower
bound for the standard bit pigeonhole principle in unrestricted DAG-like
\(\Res\).

\subsection{The bit pigeonhole principle}
Fix \(\ell\ge2\), put \(n=2^\ell\) and \(m=n+1\), and introduce
\[
  b_{it}\qquad (i\in[m],\ t\in[\ell]).
\]
The row \(b_i\) encodes one of the \(n\) labels in \(\{0,1\}^{\ell}\).
For a label \(z\), let \([b_{it}\ne z_t]\) denote \(b_{it}\) when \(z_t=0\)
and \(\neg b_{it}\) when \(z_t=1\). The usual CNF encoding is
\begin{equation}\label{eq:bit-php}
 \BPHP_n^{n+1}
  =\bigwedge_{\substack{i<i'\in[m]\\z\in\{0,1\}^{\ell}}}
    \left(\bigvee_{t\in[\ell]}[b_{it}\ne z_t]
       \ \vee\ \bigvee_{t\in[\ell]}[b_{i't}\ne z_t]\right).
\end{equation}
It has \(v=m\ell\) variables and \(n\binom m2\) clauses of width \(2\ell\).
No separate range or pigeon axiom is required: every Boolean row encodes
exactly one available label.

\subsection{The proof system}
A \emph{linear clause} is a disjunction of affine equations over \(\F\).
The basic rules are complementary-parity resolution and semantic weakening:
\begin{equation}\label{eq:res-rules}
 \frac{A\vee(u=0)\qquad B\vee(u=1)}{A\vee B},
 \qquad
 \frac{C}{D}\quad\text{if } C\models D.
\end{equation}
The pivot \(u\) may be any affine form. A refutation is a finite acyclic
derivation from initial clauses to the empty clause. Its size is the number
of nodes, including initial nodes, in its proof DAG. Earlier nodes may be
used repeatedly, and no restriction is placed on proof depth or on repeated
use of related pivots.

We also consider the convention permitting every sound inference from at
most two linear-clause premises. Lemma~\ref{lem:affine-cover} reduces this
convention to \eqref{eq:res-rules} with constant node overhead.
Fresh unconstrained propositional variables, if used, may be set to
constants throughout a proof. Extension axioms defining new predicates
are not part of the system considered here.

\begin{theorem}[Main theorem]\label{thm:main}\lean{claims/BitPHPExponential.lean}
For every \(\ell\ge32\), with \(n=2^\ell\), every DAG-like \(\Res\)
refutation of \(\BPHP_{n}^{n+1}\) has more than
\[
 \exp\!\left(\frac{n}{32768\,\ell^2}\right)
\]
nodes. The same bound holds with base two in place of \(e\), and for both
rule conventions above.
\end{theorem}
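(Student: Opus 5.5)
The plan follows the route sketched in the abstract: simulate the refutation in polynomial calculus with extension variables, eliminate those variables, and contradict a degree lower bound.

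\emph{Step 1: normalize the refutation.} Fix a $\Res$ refutation $\pi$ of $\BPHP_n^{n+1}$ with $S$ nodes, and assume toward contradiction that $S\le\exp(n/(32768\,\ell^2))$. Using Lemma~\ref{lem:affine-cover}, I first reduce the general two-premise convention to the rules \eqref{eq:res-rules}, at a constant-factor cost in size. Since $e<2^{1/\ln 2}$ and the slack in the constant absorbs it, the base-two version of the bound follows from the same argument.

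\emph{Step 2: translate $\pi$ into polynomial calculus with extension variables.} For each linear clause $C=\bigvee_j (u_j=a_j)$ of $\pi$, I introduce a group of extension variables, in the style of Buss--Impagliazzo--Kraj\'i\v{c}ek--Pudl\'ak--Razborov--Sgall, that represent the affine forms $u_j$. The clause is then encoded as a product $\prod_j(y_j+a_j+1)$ of degree equal to the width of $C$. The width can be large, so I would not encode $C$ literally. Instead I use a random-restriction style pruning: every clause of width $\ge w=\Theta(\ell)$ is falsified only with small probability under a suitable random partial matching of pigeons to holes.

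Concretely, I sample a random partial assignment $\rho$ that fixes all but $n/2$ pigeons to distinct holes by their bit-labels. This restricts $\BPHP_n^{n+1}$ to a copy of $\BPHP$ on about $n/2$ pigeons. A linear clause whose affine-span rank is at least $w$ survives $\rho$ unfalsified-as-wide with probability at most $2^{-\Omega(w\cdot n/\ell^2\cdot\ldots)}$. A union bound over the $S$ clauses then gives a $\rho$ under which every clause has rank at most $O(\ell)$ after restriction. This is where the constant $32768$ and the factor $\ell^2$ come from. Each resolution step on clauses of rank $O(\ell)$ is then simulated by a constant number of PC lines of degree $O(\ell)=O(\log n)$ over the extension groups. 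The extension variables for one group satisfy linear defining equations in the $b_{it}$, and there are $O(S+n^2)$ groups in total.

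\emph{Step 3: eliminate the extension variables.} I substitute each extension variable by its defining affine form in the original bits. Each line then becomes a polynomial in the $b$'s alone, and the PC derivation of $1$ becomes an ideal-membership derivation of a nonzero polynomial of degree at most $n/2$ from the pigeonhole axioms. This is the key point, because the degree blowup is controlled by the rank bound from Step 2. The remaining task is a Razborov-style degree lower bound: one needs a linear operator $R$ on polynomials of degree at most $n/2$ that vanishes on the pigeonhole ideal and maps $1$ to a nonzero value. I would construct $R$ from the chessboard complex $M_{m',n'}$, whose reduced homology vanishes in the relevant range by the Bj\"orner--Lov\'asz--Vre\'cica--\v{Z}ivaljevi\'c connectivity bound $\nu=\lfloor (m'+n'+1)/3\rfloor-2$. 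That vanishing gives the existence of a consistent pseudo-reduction on partial matchings.

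\emph{Main obstacle.} I expect the hardest step to be Step 3's compatibility. The substituted derivation must remain inside a space where $R$ is well defined: degree at most $n/2$ measured in matching monomials, after translating bit-monomials into hole-indicator monomials, which costs a factor of $\ell$ per pigeon. I would attempt this first by bounding the number of pigeons mentioned per line by $O(\ell)$ rather than the bit-degree.
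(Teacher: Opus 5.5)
\paragraph{The gap is in Step 2.} Everything downstream depends on it. Your encoding (extension variables \(y_j\) standing for the forms \(u_j\), clause value \(\prod_j(y_j+a_j+1)\)) has degree equal to the clause width. Substituting the defining affine forms back in Step 3 just returns PC over the bits, with clause polynomials whose degree equals the clause rank. So the argument rests entirely on the claim that a random partial matching leaves every clause of the refutation with rank \(O(\ell)\). That claim is false for linear clauses, and this is exactly why restriction-based size--width arguments do not carry over to \(\Res\). An equation stays nonconstant, hence is neither satisfied nor removed, as long as a single variable in its support is free. For example, take the clause \(\bigvee_{j\le w}(\sum_{i\in A_j}b_{i1}=1)\) with random \(A_j\subseteq[m]\) of size \(m/2\). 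After half of the pigeons are fixed, with high probability all \(w\) restricted forms remain nonconstant and linearly independent. So no survival bound of the kind you state holds, and the union bound gives nothing. No size--width or size--rank relation for \(\Res\) is available, and the paper is careful not to claim one. Without it, your substituted derivation can have degree up to \(v=m\ell\), which no degree lower bound contradicts. Note also that substituting the defining forms into a derivation of \(1\) yields a derivation of \(1\). Your plan has no mechanism producing the other nonzero polynomial that the abstract refers to.

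\paragraph{What the paper does instead.} It never controls width. It uses the product \(P_C=\prod_{u=1}^{h}(1-\sum_i r_{ui}g_i)\) with fresh Boolean coefficients and companion axioms \(g_iP_C\). Its degree \(2h\) is independent of width and rank, and the prefix identity \(1-P_C=\sum_iU_{C,i}g_i\) simulates every inference at degree \(4h+1\). This gives \(D=12\ell+1\) for \(h=3\ell\), with no restriction at all. Wide clauses are handled only at elimination:
\begin{itemize}
\item Blocks of rank at most \(h(k+1)\) are substituted exactly, using degree-\(k\) coefficients.
\item All proper high-rank blocks are handled at once by a single nonzero \(f\in\mathcal L_k\) whose ordinary restriction to every high-rank flat vanishes. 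Such \(f\) exists by a dimension count (Lemma~\ref{lem:kernel}) when \(m\ln(4M)\le k^2\); this is the only place \(S\) enters.
\item Writing \(f=\sum_ia_ig_i\) with \(\deg a_i\le k-1\) turns each such product into \(1-f\), and the whole refutation is replayed multiplied by \(f\).
\end{itemize}
The output is \(f\in\Cons_{k(D+1)}(\mathcal Q_{m,\ell})\), with \(f\) generally nonconstant. The required lower bound is therefore the stronger Theorem~\ref{thm:row-separation} (cube isolation of a top coefficient, then matching-moment extension via chessboard homology), not a refutation-degree bound. The factor \(\ell^2\) comes from \(k\approx n/(32\ell)\), forced by \(2k(D+1)\le n\), together with \(\ln S\lesssim k^2/n\); it does not come from a restriction.

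\paragraph{Two smaller points.} The obstacle you single out does not arise, since \(\tau(b_{it})=\sum_{z_t=1}X_{iz}\) preserves degree. The base-two form is immediate from \(2^x\le e^x\). Your inequality \(e<2^{1/\ln 2}\) is in fact an equality, though nothing depends on it.
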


In terms of \(n\) the bound is \(2^{\Omega(n/\log^2n)}\). We call it
exponential in the usual sense of proof complexity: the formula
\eqref{eq:bit-php} has size \(L=\Theta(n^3\log n)\), and the bound is
\(2^{L^{1/3-o(1)}}\). No bound of the form \(2^{\Omega(L)}\) or
\(2^{\Omega(n)}\) is claimed. The constant
\(32768=2^{15}\) and the threshold \(32\) are what the formal proof
verifies; no attempt was made to optimize them. Version 1 of this preprint
stated only the following qualitative form, which was formalized first and
which the main theorem implies for large \(\ell\).

\begin{corollary}[Superpolynomial form]\label{cor:main-superpolynomial}
\lean{claims/BitPHPSuperpolynomial.lean}
For every fixed real \(K>0\), there is \(\ell_0(K)\) such that for every
\(\ell\ge\ell_0(K)\), every DAG-like \(\Res\) refutation of
\(\BPHP_{2^\ell}^{2^\ell+1}\) has more than \(2^{K\ell}=n^K\) nodes.
\end{corollary}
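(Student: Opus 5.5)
The corollary is a direct consequence of Theorem~\ref{thm:main}, so the plan is purely a comparison of growth rates. Fix \(K>0\). For \(\ell\ge32\), Theorem~\ref{thm:main} (in its base-two form, which the statement explicitly allows) says that every DAG-like \(\Res\) refutation of \(\BPHP_{2^\ell}^{2^\ell+1}\) has more than \(2^{n/(32768\,\ell^2)}\) nodes, with \(n=2^\ell\). This holds under either rule convention. It therefore suffices to choose \(\ell_0(K)\ge32\) such that
\[
  \frac{2^\ell}{32768\,\ell^2}\ \ge\ K\ell
  \qquad\text{for all }\ell\ge\ell_0(K),
\]
that is, \(2^\ell\ge 2^{15}K\ell^3\). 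Once this holds, the size exceeds \(2^{n/(32768\ell^2)}\ge2^{K\ell}=n^K\), and the strict inequality carries over.

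The only step needing care is producing an explicit threshold that holds for all larger \(\ell\), not just eventually along some sequence. I would use the crude bound \(\ell^3\le 2^{\ell/2}\) for \(\ell\ge 30\). This follows by checking \(\ell=30\) directly (\(27000\le 32768\)) and then doing induction: the ratio \((\ell+1)^3/\ell^3\le(31/30)^3<\sqrt2\). With this bound, it is enough to have \(2^{\ell/2}\ge2^{15}K\), i.e.\ \(\ell\ge 30+2\log_2\max(K,1)\). So I would take
\[
 \ell_0(K)=\max\bigl(32,\ \lceil 30+2\log_2\max(K,1)\rceil\bigr).
\]
For \(K<1\), the case \(K=1\) already dominates, since \(2^{K\ell}\le2^{\ell}\).

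I expect no real obstacle here: all the difficulty sits in Theorem~\ref{thm:main}. The one point to check is monotonicity. The inequality \(2^\ell\ge2^{15}K\ell^3\) must hold for every \(\ell\ge\ell_0\), and the induction step above provides exactly that. I would also note that "nodes" counts the same quantity in both statements, so no conversion between the two rule conventions is needed beyond what Theorem~\ref{thm:main} already covers.
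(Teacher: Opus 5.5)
Your proof is correct and matches the paper's own primary argument: the corollary follows from Theorem~\ref{thm:main} because \(n/(32768\,\ell^2)\ge K\ell\) for large \(\ell\), and your explicit threshold via \(\ell^3\le2^{\ell/2}\) for \(\ell\ge30\) checks out. The paper also records an independent version-1 route through Corollary~\ref{cor:polylog-exclusion}, using the smaller \(k=\lceil(\ell+1)(\sqrt2)^\ell\rceil\); that route is the one formalized in the linked Lean file, but your argument does not need it.
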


The input length in \eqref{eq:bit-php} is polynomial in \(n\), so the bound
is also exponential in a fixed power of the input length, and it bounds the
total proof-description size from below.

\subsection{Prior results and the scope of the claim}
Efremenko, Garl\'ik, and Itsykson \cite{EGI25} prove a
\(2^{\Omega(n^{1/3}/\log n)}\) size bound for regular \(\Res\) refutations
of bit PHP. Their unrestricted DAG-like result is an affine-width bound.
Bhattacharya, Chattopadhyay, and Dvo\v{r}\'ak \cite{BCD24} exhibit formulas
that are exponentially hard for bottom-regular \(\Res\), although they
have short proofs in general ordinary resolution. This separation
demonstrates the importance of removing regularity.

Bhattacharya and Chattopadhyay \cite{BC25}, Byramji and Impagliazzo
\cite{BI25}, and the merged STOC 2026 paper \cite{BCBI26} obtain
exponential bounds in regimes allowing nearly quadratic proof depth.
Their bounds retain a proof-depth restriction. Itsykson, Podolskii,
and Shekhovtsov \cite{IPS26} explicitly distinguish this progress from
the general superpolynomial size problem; their unrestricted size
consequence is quadratic.

Alekseev and Gaevoy \cite{AG26} study constrained bit PHP. Their
polynomial-depth bounds for general \(\Res\) are conditional; their
unconditional results retain either a reversible-system restriction or
a near-quadratic depth bound. They therefore do not give the unrestricted
usual-bit-PHP size conclusion considered here.

\paragraph{Resolution over linear equations.}
The family \(\mathrm{Res}(\mathrm{lin}_R)\) studies resolution over linear
equations in a ring \(R\). Raz and Tzameret \cite{RT08} introduced the
integer formulation; Itsykson and Sokolov \cite{IS14,IS20} studied the
\(\F\) version in the tree-like setting. Part and Tzameret
\cite{PT21} proved the first superpolynomial lower bounds for DAG-like
\(\mathrm{Res}(\mathrm{lin}_R)\): the subset-sum principle requires
exponential size over the rationals. Their technique uses large
characteristic and does not apply over \(\F\). Khaniki \cite{Kha22} proved
almost quadratic lower bounds for DAG-like resolution over low-degree
polynomial equations over finite fields, which includes resolution over
linear equations, for mod-\(q\) Tseitin formulas and random \(k\)-CNFs.
To our knowledge these were the first superlinear DAG-like bounds for such
systems over finite fields; \cite{EGI25} notes that the rules considered
there differ from the formulation of \(\Res\) used here.

\paragraph{Other work on \(\Res\).}
Gryaznov, Pudl\'ak, and Talebanfard \cite{GPT22} introduced regular \(\Res\)
and related it to read-once linear branching programs. Alekseev and Itsykson
\cite{AI25} lift resolution depth to regular and bounded-depth \(\Res\) size,
and Efremenko and Itsykson \cite{EI25,EI26} sharpen the closure technique
of \cite{EGI25} and obtain strong bounds for bounded-depth \(\Res\). All of
these retain a regularity or depth restriction.

\paragraph{Polynomial calculus and the pigeonhole principle.}
The degree lower bound at the base of our argument is Razborov's theorem
\cite{Raz98} that polynomial calculus refutations of the pigeonhole principle
need degree \(n/2+1\) over every field, which also describes the low-degree
consequences of the pigeonhole axioms. Related degree bounds and methods are
due to Impagliazzo, Pudl\'ak, and Sgall \cite{IPS99}, Alekhnovich and Razborov
\cite{AR03}, and Mik\v{s}a and Nordstr\"om \cite{MN15}. Section~\ref{sec:moments}
reproves the part of Razborov's theorem that we need, over \(\F\), by a
different route through chessboard complexes; we do not claim this degree
range as new (see the discussion before Theorem~\ref{thm:moments}).
The extension variables are those of Buss, Impagliazzo, Kraj\'i\v{c}ek,
Pudl\'ak, Razborov, and Sgall \cite{BIKPRS}.

\paragraph{Scope.}
The theorem here removes both regularity and proof-depth restrictions.
Targeted searches of the primary literature on 15 and 17 September 2026 found
no preceding theorem with that scope; this is not an exhaustive certification
of novelty. The theorem concerns \(\Res\) only. It does not address unary
PHP in \(\Res\), and the \(\mathrm{AC}^0[p]\)-Frege lower-bound problem
remains open; Section~\ref{sec:generic} says what does and does not follow.

\subsection{Revision history}
Version 1 (15 September 2026) stated Corollary~\ref{cor:main-superpolynomial}.
Revision 1 adds the exponential bound of Theorem~\ref{thm:main}
(Section~\ref{sec:main-proof}) and the generic sufficient condition of
Section~\ref{sec:generic}, both formalized, a proof overview
(Section~\ref{sec:overview}), and a fuller account of related work. The
development history formerly in the last section was condensed. No error in
the mathematics of version 1 was found or corrected; the registry slot count
of Section~\ref{sec:clause-simulation} was renamed from \(N\) to
\(N_{\mathrm{reg}}\) to avoid a clash with the number of unary columns.

Revision 2 (19 September 2026) corrects the historical attribution of
resolution over linear equations, makes the common-kernel and
bounded-cofactor steps more explicit, and improves the parameter tables.
Appendix~\ref{sec:verification} adds a formal-verification map and complete
reproduction instructions within the paper. The theorem statements,
mathematical arguments, and Lean dependencies are unchanged from revision 1;
this revision makes no new mathematical claim.

\subsection{Organization}
Section~\ref{sec:overview} gives an overview of the proof, fixes terminology
and parameters, and works through one simulated inference.
Sections~\ref{sec:algebra}--\ref{sec:clause-simulation} prove the components
in full, and Section~\ref{sec:main-proof} closes the parameters.
Section~\ref{sec:generic} states the generic criterion; it can be skipped.
Appendix~\ref{sec:topology} proves the homological input behind
the matching extension. A companion online research notebook records the
development of the argument and supporting experiments \cite{Repo};
Section~\ref{sec:framework} describes these supplementary materials.
The proofs presented here can be read independently of that notebook:
none of its lemmas is required as an unexpanded black box in this preprint.

Every [Lean] link points to a source file in the published revision
\href{https://github.com/kbr-/math-research/tree/8904bf09a5376f48a00d1c25d079bf0c6441502a}{\texttt{8904bf09}}, rather than a moving
branch. The linked files identify their exported statements and dependencies.
Appendix~\ref{sec:verification} maps the main proof steps to exact Lean
declarations and gives the checkout, setup, and verification commands.
The mathematical notation here is independent of Lean syntax.

\section{Overview of the proof}\label{sec:overview}

This section describes the argument in words, fixes the terminology and the
parameters, and works through the simulation of one inference. It contains no
proofs; every claim made here is proved in the section cited for it.

Fix a \(\Res\) refutation of \(\BPHP^{n+1}_n\) with \(S\) nodes. The proof
has three steps.

\subsection{Step 1: from a refutation to low-degree polynomial calculus}
A linear clause \(C\) is a disjunction of affine equations. Write each
equation through its \emph{true-indicator} \(g_i\), the affine form that is
one exactly when the equation holds, so that \(C\) is falsified exactly on
the affine flat \(Z(C)=\{g_1=\cdots=g_s=0\}\). Following Buss, Impagliazzo,
Kraj\'i\v{c}ek, Pudl\'ak, Razborov, and Sgall \cite{BIKPRS}, we attach to
\(C\) fresh Boolean variables \(r_{ui}\) and the product
\[
 P_C=\prod_{u=1}^{h}\Bigl(1-\sum_{i=1}^{s}r_{ui}g_i\Bigr),
\]
together with the axioms \(g_iP_C=0\) and \(r_{ui}^2=r_{ui}\). On \(Z(C)\)
the product is one. At a point where \(C\) holds, each factor vanishes for
half of the choices of its coefficients, so \(P_C\) behaves like the
indicator of ``\(C\) is false'' up to an error that halves with each of the
\(h\) factors. The axioms \(g_iP_C=0\) say that \(P_C\) is supported on
\(Z(C)\). The line \(P_C=0\) is \emph{not} an axiom: deriving it is how the
simulation expresses that \(C\) has been proved.

What matters is that \(P_C\) has degree \(2h\) whatever the width or the
rank of \(C\). The identity \(1-P_C=\sum_iU_{C,i}g_i\)
(Corollary~\ref{lem:prefix}) lets a few local polynomial identities of
degree at most \(4h+1\) simulate complementary-parity resolution and
semantic weakening (Lemmas~\ref{lem:clause-resolution} and
\ref{lem:clause-weakening}). Each simulated inference uses the \emph{lines}
\(P_A=0\), \(P_B=0\) derived earlier, never their derivations, so the
degree does not grow along the proof. This is why neither the depth of
the refutation nor regularity enters: the simulation is local and the
proof DAG is only followed in topological order. An initial clause costs
degree \(2h+\ell\). Altogether (Theorem~\ref{thm:clause-transfer}) the
refutation becomes a polynomial calculus refutation of degree
\[
 D=\max\{2h+\ell,\,4h+1\}
\]
from the compact pigeonhole axioms \(\mathcal Q_{m,\ell}\) of
\eqref{eq:compact-base} together with at most
\(N_{\mathrm{reg}}=3S+\binom m2\) blocks of extension axioms. With
\(h=3\ell\) this is \(D=12\ell+1=O(\log n)\).

\subsection{Step 2: removing all extension variables at once}
The central mechanism is the common-kernel argument of
Lemma~\ref{lem:kernel}: many high-rank affine restrictions have one common
nonzero low-degree kernel polynomial, which enables a single weighted
elimination of all blocks. Theorem~\ref{thm:affine} combines this with the
old-system separator. A degree-\(O(\log n)\) refutation with extension variables contradicts
nothing by itself. We substitute polynomials of degree at most \(k\) in the
original bit variables for all the \(r_{ui}\), simultaneously, and replay
the refutation (Lemma~\ref{lem:substitution}). A block whose inputs have
rank \(r\le h(k+1)\) can be substituted \emph{exactly}: its \(h\) factors,
each with coefficients of degree at most \(k\), suffice to turn \(P_C\) into
the true indicator of \(Z(C)\), and the extension axioms become consequences
of the Boolean axioms (Lemma~\ref{lem:packing}). This is the origin of the
rank threshold \(h(k+1)\).

A block of higher rank cannot be treated this way within degree \(k\).
For a proper high-rank block we look for a polynomial \(f\) of degree at
most \(k\) whose ordinary polynomial restriction to the nonempty flat
\(Z(C)\) is zero. This means zero after substitution of affine free
coordinates, not merely zero at the \(\F\)-points of the flat.
Lemma~\ref{lem:restriction-ideal} then gives the literal identity
\(f=\sum_ia_ig_i\) with \(\deg a_i\le k-1\). This bounded-cofactor
identity is the bridge from the dimension count to elimination.
Substituting the \(a_i\) into
the first factor and zero into the others turns \(P_C\) into \(1-f\). The
extension axiom \(g_iP_C\) becomes \(g_i(1-f)\), which is not a consequence
of anything; but after multiplication by \(f\) it becomes
\(-g_i(f^2-f)\), a consequence of the Boolean axioms. So we replay the
whole refutation \emph{multiplied by \(f\)}. Every axiom image is then
derivable from \(\mathcal Q_{m,\ell}\) alone, and the final line \(1\)
becomes \(f\) (Lemma~\ref{lem:hybrid}):
\[
 f\in\Cons_{B}(\mathcal Q_{m,\ell}),\qquad B=k(D+1).
\]
This is why the end product is a derivation of a nonzero polynomial \(f\)
and not of the constant one: the weight \(f\) has to vanish on the
falsifying flats of all high-rank clauses at once, and the constant one
does not.

An inconsistent high-rank block needs no kernel condition: one is a
constant linear combination of its inputs, so a constant substitution
makes its product zero. A single \(f\) must serve every proper high-rank block. A flat of codimension
\(r>h(k+1)\) has so few free coordinates that the restrictions of
degree-\(k\) polynomials to it span a space smaller, by a factor
\(e^{-k^2/m}\), than the space \(\mathcal L_k\) of \emph{row-linear}
polynomials (at most one bit variable from each pigeon). If
\[
 m\ln(4M)\le k^2,
\]
where \(M\) bounds the number of high-rank blocks, a dimension count gives a
nonzero \(f\in\mathcal L_k\) restricting to zero on all of them
(Lemma~\ref{lem:kernel}). This inequality is the only place where the size
of the refutation enters, through \(M\le N_{\mathrm{reg}}=3S+\binom m2\).

\subsection{Step 3: a nonzero row-linear polynomial is not derivable}
It remains to show that no nonzero \(f\in\mathcal L_k\) has a polynomial
calculus derivation of degree \(B\) from the pigeonhole axioms when \(B\)
is below roughly \(n/2\) (Theorem~\ref{thm:row-separation}). This is
stronger than the degree lower bound for refutations, which is the case
\(f=1\). We decode bit labels into the unary (one variable per pigeon and
hole) functional pigeonhole principle, where, as in Razborov's analysis
\cite{Raz98}, the linear functionals vanishing on all degree-\(B\)
consequences are exactly the solutions of the matching-moment equations
\eqref{eq:marginals}. Theorem~\ref{thm:moments} shows that every solution
through a lower degree extends through degree \(B\) when \(2B-1\) is at
most the number of holes; the extension problem is a cycle-filling problem
in chessboard complexes, solved by the vanishing of their homology in the
range of Bj\"orner, Lov\'asz, Vre\'cica, and \v{Z}ivaljevi\'c
\cite{BLVZ94}. To separate a given \(f\) of degree \(t\), a signed sum over a
\(t\)-dimensional cube of partial assignments isolates one top-degree
coefficient of \(f\) as the constant one, while mapping pigeonhole
consequences to consequences of a smaller pigeonhole instance of degree
\(B-t\), where a normalized functional exists.

\subsection{Parameters}
The three steps meet in the inequalities of Theorem~\ref{thm:affine}:
\[
 (n+1)\ln(4M)\le k^2,\qquad 2k(D+1)-1\le n,\qquad 4(k-1)<n .
\]
With \(D=12\ell+1\), the second allows \(k\) up to about \(n/(24\ell)\),
and then the first holds as long as \(\ln S\) is below a constant times
\(n/\ell^2\). This gives Theorem~\ref{thm:main}. The symbols used
throughout are collected here.
\begin{center}
\renewcommand{\arraystretch}{1.35}
\begin{tabular}{@{}>{\raggedright\arraybackslash}p{0.27\textwidth}>{\raggedright\arraybackslash}p{0.68\textwidth}@{}}
\toprule
\(\ell,\ n=2^\ell\)\newline\(m=n+1,\ v=m\ell\) & bits per label, holes, pigeons (rows), bit variables\\\addlinespace[0.2em]
\(N\) & number of columns of a unary system (\(N=n\) in the application)\\\addlinespace[0.2em]
\(S,\ N_{\mathrm{reg}}=3S+\binom m2\) & nodes of the refutation; blocks (slots) of the registry\\\addlinespace[0.2em]
\(h\) & number of factors of a block product (its \emph{accuracy}); \(h=3\ell\)\\\addlinespace[0.2em]
\(s\) & number of inputs of a block (clause width after compression, at most \(v+2\))\\\addlinespace[0.2em]
\(r\) & rank of a block: dimension of the span of its inputs; threshold \(h(k+1)\)\\\addlinespace[0.2em]
\(D\) & degree of the simulating PC refutation; \(D=12\ell+1\)\\\addlinespace[0.2em]
\(k\) & degree bound for \(f\) and for the substituted coefficients\\\addlinespace[0.2em]
\(t\) & degree of a particular \(f\); number of rows in the separating cube; \(t\le k\)\\\addlinespace[0.2em]
\(M\) & bound on the number of proper high-rank blocks; \(M\le N_{\mathrm{reg}}\)\\\addlinespace[0.2em]
\(B=k(D+1)\) & degree at which \(f\) would be derived; needs \(2B-1\le n\)\\\addlinespace[0.2em]
\bottomrule
\end{tabular}
\end{center}

The conditions have distinct roles; the exact joint statement is
Theorem~\ref{thm:affine}.
\begin{center}
\renewcommand{\arraystretch}{1.4}
\begin{tabular}{@{}>{\raggedright\arraybackslash}p{0.36\textwidth}>{\raggedright\arraybackslash}p{0.59\textwidth}@{}}
\toprule
Condition & Where it is used\\\addlinespace[0.2em]
\midrule
\(m\ln(4M)\le k^2,\ k\le m\) & Common-kernel dimension estimate
(Lemma~\ref{lem:kernel}).\\\addlinespace[0.2em]
\(r>h(k+1),\ h=3\ell\) & Proper high-rank blocks have sufficiently
small restriction spaces; lower ranks use packing.\\\addlinespace[0.2em]
\(D\ge2h+1\) & Weighted companion and Boolean certificates fit the
removal ceiling (Lemma~\ref{lem:hybrid}).\\\addlinespace[0.2em]
\(B=k(D+1)\) & Output degree of weighted removal, used as the input
ceiling for the separator.\\\addlinespace[0.2em]
\(4(k-1)<n\) & Cube separation for every target degree \(t\le k\)
(Theorem~\ref{thm:row-separation}).\\\addlinespace[0.2em]
\(2B-1\le n\) & Matching-moment and residual old-system range in
that separator, with \(n=2^\ell\).\\\addlinespace[0.2em]
\bottomrule
\end{tabular}
\end{center}

\subsection{Terminology}
Some terms come from the research notebook in which the proof was developed
and are kept because the formal statements use them.
\emph{Old} variables are the bit variables \(b_{it}\) of the formula (in
Section~\ref{sec:moments}, the unary variables), as opposed to the fresh
coefficient variables \(r_{ui}\); the \emph{old base} is the axiom system
before blocks are added, and an \emph{old-base consequence} is a polynomial
derivable from it alone. A \emph{complete affine extension block of
accuracy \(h\)} is the package \eqref{eq:block}: affine inputs \(g_i\) in
the old variables, fresh coefficients, all \emph{companions} \(g_iP\), and
all coefficient Boolean equations; ``complete'' means that every companion
is present, and ``one-level'' that no block's inputs mention another
block's variables. A \emph{registry} is a family of blocks fixed in advance,
one \emph{slot} per clause that the simulation may need. A polynomial is
\emph{row-linear} if each monomial has at most one bit variable from each
pigeon. \(\Ideal_B\) and \(\Cons_B\) denote degree-\(B\) Nullstellensatz and
polynomial calculus consequences (Section~\ref{sec:algebra}); ``NS'' and
``PC'' abbreviate the two systems.

\subsection{A worked example: one resolution step}
Take accuracy \(h=1\) and resolve \(A=(u=1)\) with \(B=(u=0)\), for an
affine form \(u\), to the empty clause. The true-indicators are \(u\) for
\(A\) and \(1-u\) for \(B\), so with coefficients \(r\) and \(s\)
\[
 P_A=1-ru,\qquad P_B=1-s(1-u),\qquad P_\varnothing=1 .
\]
Suppose the lines \(P_A\) and \(P_B\) have been derived. Since
\(u^2-u\) is a combination of Boolean axioms, polynomial calculus derives
\[
 (1-u)P_A-r(u^2-u)=(1-u)(1-ru)-r(u^2-u)=1-u ,
\]
which is \eqref{eq:pivot-correction} in this case: from ``\(A\) holds'' we
have derived that the indicator of \(B\) is zero. Multiplying by \(s\) and
adding \(P_B\) gives
\[
 s(1-u)+\bigl(1-s(1-u)\bigr)=1 ,
\]
the value of the empty clause, and the refutation is complete. All lines
have degree at most three. With side literals present, the same
computation is carried out after multiplying by the product \(P_T\) of the
conclusion \(T\), and the side literals are absorbed by the companions of
\(T\); with \(h\) factors, \(r\) and \(s\) are replaced by the polynomials
\(U_{A,u}\) and \(U_{B,1-u}\) of degree \(2h-1\), which gives the bound
\(4h+1\) of Lemma~\ref{lem:clause-resolution}.

\subsection{Why the restriction-based approaches stop short}
As we understand the earlier lower bounds for regular and bounded-depth
\(\Res\) \cite{EGI25,BCD24,BC25,BI25,BCBI26,AI25,EI25}, they follow a path
through the refutation and maintain a set of linear forms (a closure, or
the state of a game or of a random walk) that records what the path has
learned. Regularity or a depth bound controls how much such a path can
accumulate; a general DAG may reuse a node along many long paths, and the
control is lost. The present argument never follows a path. It is static
and algebraic: the whole DAG is translated into one polynomial calculus
refutation whose degree does not depend on depth, in the manner of the
extension-variable method of \cite{BIKPRS}. That method was designed for
bounded-depth Frege systems with counting gates, where it reduces lower
bounds to degree bounds for systems with extension axioms; in that
generality no low-degree way of removing the extension variables is known. For
\(\Res\) the extension axioms have inputs that are affine in the original
variables, and this is exactly what makes the one-shot removal of Step~2
possible: low rank allows exact substitution, and high rank leaves room
for a common annihilating polynomial. The price is that the base system
must exclude every nonzero row-linear consequence, not only the constant
one, which is the Razborov-style statement of Step~3. For nested
extension axioms, as they arise from \(\mathrm{AC}^0[p]\)-Frege proofs, we
know of no corresponding removal step.

\section{Ordinary degree and polynomial calculus}\label{sec:algebra}

Throughout, the field is \(K=\F\) and variable sets are finite.
\begin{remark}[Generality]\label{rem:generality}
Several statements in Sections~\ref{sec:algebra}--\ref{sec:clause-simulation}
were formalized in greater generality than is used here: over an arbitrary
field, for infinite variable sets or infinite-dimensional spaces, or in
arbitrary commutative rings. The linked Lean files record the exact
hypotheses. One distinction made visible by formalization is used below:
the degree of a block product or companion is in general only bounded above
by the stated value, with equality under the hypotheses of
Lemma~\ref{lem:cleanup}. The argument only uses the upper bounds.
\end{remark}
Degree bounds and accuracy parameters are integers.
Polynomials are ordinary commutative polynomials. In particular, powers are
not silently reduced using Booleanity. We use the nonnegative total-degree
convention \(\deg0=0\). The Boolean equations are
\[
 \Bool_x=\{x_j^2-x_j:j\}.
\]
Every displayed generator is an equation with right side zero.

For a finite polynomial system \(\mathcal F\), write
\begin{equation}\label{eq:ns-space}
 \Ideal_B(\mathcal F)
  =\Span_K\{qF:F\in\mathcal F,\ \deg(qF)\le B\}
   \subseteq\Pol_{\le B}.
\end{equation}
The degree constraint is imposed on each generator multiple. Cancellation
between larger-degree multiples does not make them legal in
\(\Ideal_B\).

Ordinary PC derives axioms of \(\mathcal F\), linear combinations of
previous lines, and \(x_jg\) from a previous line \(g\).
A degree-\(B\) derivation has no line of degree exceeding \(B\).
Let \(\Cons_B(\mathcal F)\) be the space of polynomials derivable in this
way. Refutation means deriving one. The space \(\Cons_B\) is closed under
linear combinations, but equality with \(\Ideal_B\) requires a proof.

\begin{lemma}[Multiplying a completed line]\label{lem:pc-product}\lean{claims/PolynomialCalculusReuse.lean}
If \(g\) has a degree-\(B\) PC derivation and \(q\) is any polynomial,
then \(qg\) has a PC derivation through
\(\max\{B,\deg q+\deg g\}\).
Moreover \(\Ideal_B(\mathcal F)\subseteq\Cons_B(\mathcal F)\), and
\(q\Ideal_B(\mathcal F)\subseteq\Ideal_{B+\deg q}(\mathcal F)\).
\end{lemma}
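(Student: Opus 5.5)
The first claim is proved by induction on the degree of \(q\), after reducing to monomials. Since \(\Cons\) is closed under linear combinations and \(qg=\sum_\mu c_\mu\,\mu g\), it suffices to treat a monomial \(\mu=x_{j_1}\cdots x_{j_d}\) with \(d\le\deg q\). We concatenate derivations: start from the given degree-\(B\) derivation of \(g\), then apply the multiplication rule successively, deriving \(x_{j_1}g\), then \(x_{j_2}x_{j_1}g\), and so on. The \(e\)-th of these lines has degree at most \(\deg g+e\le\deg g+\deg q\), so every line stays within \(\max\{B,\deg q+\deg g\}\). The sum over monomials is then assembled by linear combination. Lines of the combined derivation are lines of the separate ones, so the degree bound persists; a linear combination of lines of degree \(\le B'\) has degree \(\le B'\) under the convention \(\deg0=0\).

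For \(\Ideal_B(\mathcal F)\subseteq\Cons_B(\mathcal F)\), take a spanning element \(qF\) with \(F\in\mathcal F\) and \(\deg(qF)\le B\). The subtle point is that the multiplication rule bounds the degree of the actual line, not of \(q\) times \(F\) formally. We need \(\deg F+\deg q\le B\), which in general requires \(\deg(qF)=\deg q+\deg F\). Over a polynomial ring over a field, which is an integral domain, this additivity holds whenever \(q,F\ne0\), and the zero cases are trivial. So the first part, applied with \(g=F\) (an axiom, derivable in one line of degree \(\deg F\le B\)), gives \(qF\in\Cons_{B}\). Intermediate monomial multiples \(\mu F\) have degree at most \(\deg\mu+\deg F\le B\). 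Linear closure of \(\Cons_B\) then finishes this part. If the Lean version allows more general rings, a zero divisor could break additivity. Here we are over \(\F\), so this is not an issue, but I expect this degree-bookkeeping point to be the only real obstacle.

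For \(q\Ideal_B\subseteq\Ideal_{B+\deg q}\), expand an element as \(\sum_j c_j q_jF_j\) with \(\deg(q_jF_j)\le B\). Then
\[
 q\sum_j c_jq_jF_j=\sum_j c_j\,(qq_j)F_j,
\]
and each generator multiple satisfies \(\deg(qq_jF_j)\le\deg q+\deg(q_jF_j)\le B+\deg q\). It is therefore a legal spanning element of \(\Ideal_{B+\deg q}\). This respects the convention that the degree constraint is imposed on each generator multiple individually.
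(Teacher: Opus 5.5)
Your proposal is correct and follows essentially the same route as the paper: expand \(q\) into monomials, multiply the completed line \(g\) by one variable at a time, and recombine. The paper then applies this to each legal axiom multiple for \(\Ideal_B(\mathcal F)\subseteq\Cons_B(\mathcal F)\), and multiplies each generator multiple by \(q\) for the last inclusion. Your remark that the second assertion needs \(\deg(qF)=\deg q+\deg F\), which holds because the polynomial ring over \(\F\) is a domain, supplies the detail the paper leaves implicit in ``applying this construction to each legal axiom multiple.''
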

\begin{proof}
Expand \(q\) into monomials. Starting from the already derived line \(g\),
multiply by the variables of each monomial one at a time, and then take
the indicated linear combination. Each new line has degree at most
\(\deg g+\deg q\). The zero case is immediate.
Applying this construction to each legal axiom multiple in
\eqref{eq:ns-space}, then summing, proves the second assertion.
Multiplying each legal generator multiple by \(q\) proves the last assertion.
There is no need to multiply the earlier derivation of \(g\).
\end{proof}

\begin{lemma}[Degree-controlled Boolean reduction]\label{lem:boolean}\lean{claims/BooleanReduction.lean}
For every polynomial \(R\) of degree at most \(d\), its squarefree
remainder \(\overline R\) satisfies
\[
 R-\overline R\in\Ideal_d(\Bool_x).
\]
If \(R\) vanishes at every Boolean point, then
\(R\in\Ideal_d(\Bool_x)\). In particular, if \(\deg q\le k\), then
\[
 q^2-q\in\Ideal_{2k}(\Bool_x).
\]
\end{lemma}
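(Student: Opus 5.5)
The plan is to reduce \(R\) one monomial at a time, using only legal generator multiples of degree at most \(d\). Take a monomial \(\mu=x_j^{a}\nu\) with \(a\ge2\) and \(\deg\mu\le d\). The one-step identity is
\[
 \mu-x_j^{a-1}\nu=x_j^{a-2}\nu\,(x_j^2-x_j).
\]
The right side is a single generator multiple of degree exactly \(\deg\mu\le d\), so it lies in \(\Ideal_d(\Bool_x)\). Iterating lowers the exponent of \(x_j\) until it reaches \(1\), and every intermediate monomial has degree at most \(\deg\mu\). Doing this for each variable with exponent at least \(2\) gives
\[
 \mu-\overline\mu\in\Ideal_{\deg\mu}(\Bool_x)\subseteq\Ideal_d(\Bool_x),
\]
where \(\overline\mu\) is the squarefree monomial on the support of \(\mu\). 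Since \(\Ideal_d\) is a linear span, summing over the monomials of \(R\) with their coefficients gives \(R-\overline R\in\Ideal_d(\Bool_x)\). Here \(\overline R\) is the multilinearization, which agrees with \(R\) at every Boolean point. The point to keep explicit, following the warning after \eqref{eq:ns-space}, is that each summand is individually a legal multiple of degree at most \(d\). We never rely on cancellation among higher-degree terms.

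For the second assertion, note that \(\overline R\) is multilinear and agrees with \(R\) on \(\{0,1\}^x\). If \(R\) vanishes at every Boolean point, then \(\overline R\) is a multilinear polynomial vanishing on the whole cube. Uniqueness of the multilinear representation of a function on \(\{0,1\}^x\) then forces \(\overline R=0\). I would prove this by Möbius inversion: evaluating at the indicator vector of a set \(T\) gives \(\sum_{U\subseteq T}c_U=0\) for all \(T\), and induction on \(|T|\) yields \(c_T=0\). Hence \(R=R-\overline R\in\Ideal_d(\Bool_x)\).

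For the final claim, take \(R=q^2-q\), which has degree at most \(2k\). Every Boolean point satisfies \(q(x)\in\F\), and there \(q(x)^2=q(x)\), so \(R\) vanishes on the cube. Applying the previous paragraph with \(d=2k\) gives \(q^2-q\in\Ideal_{2k}(\Bool_x)\).

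There is no real obstacle. The only step needing care is the degree bookkeeping in the first paragraph: each rewriting step must be a single generator multiple whose degree does not exceed that of the original monomial, and it is. A secondary point is the convention \(\deg0=0\), under which the zero polynomial is trivially included.
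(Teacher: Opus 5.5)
Your proposal is correct and takes the same route as the paper. Your one-step identity, iterated, telescopes to the factorization \(x^a-x=(x^2-x)(1+x+\cdots+x^{a-2})\) that the paper applies variable by variable, and the second and third assertions follow from the vanishing of a squarefree polynomial on the cube exactly as in the paper. The only difference is cosmetic: you prove that vanishing by Möbius inversion over indicator vectors, whereas the paper writes the polynomial as \(A+x_vB\) and inducts on the number of variables.
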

\begin{proof}
For \(a\ge2\),
\[
 x^a-x=(x^2-x)(1+x+\cdots+x^{a-2}).
\]
Apply this identity to the powers in each monomial of \(R\), one variable
at a time. Every generator multiple has degree at most the original
monomial degree, so the first assertion follows.

A squarefree polynomial that vanishes on \(\{0,1\}^v\) is zero.
For example, write it as \(A+x_vB\) with \(A,B\) independent of \(x_v\).
Evaluations at \(x_v=0,1\), followed by induction on \(v\), give \(A=B=0\).
This proves the second assertion. Finally, every value of \(q\) at a
Boolean point belongs to \(\F\), where its square equals itself; apply the
second assertion at degree \(2k\).
\end{proof}

The high-rank step of Section~\ref{sec:affine} needs literal membership in
an affine ideal (Lemma~\ref{lem:restriction-ideal}), which pointwise
vanishing does not supply.

\begin{lemma}[Substitution and fixed-weight replay]\label{lem:substitution}
\lean{claims/PolynomialCalculusSubstitution.lean}
Let \(\phi\) substitute polynomials of degree at most
\(T\ge0\) for source variables. Then \(\deg\phi(p)\le T\deg p\), and
\(\phi\) carries degree-\(D\) PC derivations to degree-\(TD\) derivations
from the substituted axioms. More generally, fix a target polynomial \(w\)
and a target system \(\mathcal G\). If each source axiom \(F\) of degree
at most \(D\) has its weighted image \(w\phi(F)\) derivable from
\(\mathcal G\) through \(TD+\deg w\), then so does the weighted image
of every line of a degree-\(D\) source derivation.
\end{lemma}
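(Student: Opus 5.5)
The plan is to prove the degree bound first and then argue by induction along the source derivation, handling each of the three PC rules separately; the unweighted statement is the special case \(w=1\).

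\emph{Degree bound.} Each monomial of \(p\) has degree at most \(\deg p\). Under \(\phi\) it becomes a product of at most \(\deg p\) substituted polynomials, each of degree at most \(T\). So its image has degree at most \(T\deg p\). Linear combinations do not raise the degree, and the convention \(\deg0=0\) covers the case where cancellation occurs.

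\emph{Weighted replay.} Fix a degree-\(D\) source derivation with lines \(g_1,\dots,g_L\). I will show by induction on \(j\) that \(w\phi(g_j)\in\Cons_{TD+\deg w}(\mathcal G)\). By the degree bound, every target line \(w\phi(g_j)\) has degree at most \(TD+\deg w\), so the degree ceiling is met. The three rules are handled as follows.
\begin{itemize}
\item \emph{Axiom.} If \(g_j\) is a source axiom \(F\), then \(\deg F\le D\), and the claim is exactly the hypothesis.
\item \emph{Linear combination.} If \(g_j=\sum c_ig_i\) with earlier \(i\), then \(w\phi(g_j)=\sum c_i\,w\phi(g_i)\), because \(\phi\) is a ring homomorphism. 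This is a linear combination of already derived target lines, so it is a legal step.
\item \emph{Multiplication by a variable.} If \(g_j=x\,g_i\), then \(w\phi(g_j)=\phi(x)\cdot w\phi(g_i)\). Here \(\deg\phi(x)\le T\), and \(w\phi(g_i)\) is already derived. Lemma~\ref{lem:pc-product} then derives the product within degree \(\max\{TD+\deg w,\ \deg\phi(x)+\deg(w\phi(g_i))\}\).
\end{itemize}

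The main obstacle is the multiplication case. The bound \(T+T\deg g_i+\deg w\) from Lemma~\ref{lem:pc-product} is not enough unless \(\deg g_i\le D-1\). What rescues it is that \(x\,g_i\) is itself a line of the derivation, so \(\deg(xg_i)\le D\). I will therefore avoid using \(\deg g_i\) at all. Instead I will expand \(\phi(x)=\sum_\mu c_\mu\mu\) into monomials, multiply \(w\phi(g_i)\) by the variables of each \(\mu\) one at a time, and sum. Every intermediate line divides some monomial-multiple of the final product, but that alone does not give the bound, because cancellation in \(\phi(x)w\phi(g_i)\) could make the final product smaller than the intermediate lines.

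The fix is to bound each intermediate line directly. Each one has the form \(\nu\,w\phi(g_i)\), where \(\nu\) divides a monomial of \(\phi(x)\). Its degree is at most \(\deg\nu+\deg w+T\deg g_i\). Since \(g_i\) is a source line, \(\deg g_i\le D\). If \(\deg g_i\le D-1\), then \(\deg\nu\le T\) gives the bound \(TD+\deg w\). If \(\deg g_i=D\), then \(\deg(xg_i)=D+1\), which is impossible for a line of the derivation. The one exception is \(g_i=0\), where \(\deg xg_i=0\) and the multiplication step is trivial. This case split closes the induction, and the final line gives the claim.
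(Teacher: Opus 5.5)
Your proof is correct and follows the paper's argument. The degree bound comes from monomial expansion, and the replay is an induction on primitive PC steps. The key observation is the same: a nonzero predecessor of a multiplication step has degree at most \(D-1\), so multiplying the completed weighted line by \(\phi(x)\) via Lemma~\ref{lem:pc-product} stays within \(TD+\deg w\), and a zero predecessor stays zero.

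Your detour about cancellation in the intermediate lines is unnecessary. Lemma~\ref{lem:pc-product} already bounds every intermediate line by \(\deg q+\deg g\), where \(g\) is the completed line, not by the degree of the final product.
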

\begin{proof}
Expand into monomials to obtain the degree inequality. Induct on primitive
PC steps. Zero, addition and scalar multiplication commute with the weighted
map. At a nonzero multiplication \(p\mapsto xp\), ordinary degree gives
\(\deg p+1\le D\). The completed weighted line has degree at most
\(\deg w+T\deg p\); multiplying it by \(\phi(x)\) costs at most
\(\deg w+T(\deg p+1)\le TD+\deg w\), by
Lemma~\ref{lem:pc-product}. A zero predecessor stays zero. Taking \(w=1\)
and the substituted axioms as target system gives ordinary substitution.
\end{proof}

\begin{lemma}[Separation and compatible extension]\label{lem:duality}
\leangroup{\leanref{third-party-claims/LinearSeparation.lean}{extension}
  \leansep \leanref{claims/PolynomialCalculusDuality.lean}{PC duality}}
Let \(U,S\) be subspaces of a vector space \(V\) over \(K\).
A linear functional on \(U\) extends to one on \(V\) vanishing on \(S\)
if and only if it vanishes on \(U\cap S\).
In particular, if \(p\notin S\), there is a linear functional
\(\lambda\) with \(\lambda(S)=0\) and \(\lambda(p)=1\).
Consequently, for either \(S=\Ideal_B(\mathcal F)\) or
\(S=\Cons_B(\mathcal F)\), a normalized annihilator on
\(\Pol_{\le B}\) exists exactly when \(1\notin S\).
\end{lemma}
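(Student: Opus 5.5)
The plan is to prove the abstract extension statement first by elementary linear algebra, and then specialize it. For necessity: if \(\mu\) on \(V\) extends a functional \(\lambda_0\) on \(U\) and vanishes on \(S\), then \(\lambda_0\) vanishes on \(U\cap S\), since \(\lambda_0=\mu|_U\).

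For sufficiency I would assume \(\lambda_0(U\cap S)=0\). Define \(\nu\) on \(U+S\) by \(\nu(u+s)=\lambda_0(u)\). This is well defined because \(u+s=u'+s'\) implies \(u-u'=s'-s\in U\cap S\), so \(\lambda_0(u)=\lambda_0(u')\). It is plainly linear, restricts to \(\lambda_0\) on \(U\), and vanishes on \(S\). Next choose a complement \(W\) of \(U+S\) in \(V\), so that \(V=(U+S)\oplus W\), and extend \(\nu\) by zero on \(W\). In our applications \(V=\Pol_{\le B}\) is finite-dimensional, so extending a basis suffices. For arbitrary \(V\) a complement exists by Zorn's lemma, i.e.\ a basis extension; this is the only nonconstructive point, and it is presumably why that part is in the third-party file.

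For the ``in particular'' clause, given \(p\notin S\), take \(U=Kp\) and \(\lambda_0(cp)=c\). This is well defined because \(p\neq0\); indeed \(0\in S\), so \(p\notin S\) forces \(p\ne0\). Moreover \(U\cap S=0\), because any nonzero \(cp\) in \(S\) would give \(p=c^{-1}(cp)\in S\). The extension then gives the required \(\lambda\).

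For the consequence, set \(V=\Pol_{\le B}\). Both \(\Ideal_B(\mathcal F)\) and \(\Cons_B(\mathcal F)\) are subspaces of \(V\): \(\Ideal_B\) is a span by definition, \(\Cons_B\) is closed under linear combinations as noted, and every line of a degree-\(B\) derivation has degree at most \(B\). A normalized annihilator means \(\lambda(S)=0\) and \(\lambda(1)=1\). If \(1\notin S\), apply the previous paragraph with \(p=1\). Conversely, if \(1\in S\) then \(\lambda(1)=0\ne1\) for every such \(\lambda\). No step is a real obstacle. The only care needed is to confirm that \(\Cons_B\) is indeed a subspace of \(\Pol_{\le B}\), i.e.\ that it contains \(0\) and is closed under sums of separately derived lines, which holds by concatenating derivations.
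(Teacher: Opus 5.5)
Your proof is correct and follows the paper's argument essentially step for step: define the functional on \(U+S\) by \(u+s\mapsto\lambda_0(u)\), use the intersection condition for well-definedness, extend by zero on added basis vectors, then specialize to \(U=Kp\) and to \(p=1\) inside \(\Pol_{\le B}\). Your extra checks that \(p\ne0\) and that \(\Cons_B\) is a subspace of \(\Pol_{\le B}\) are details the paper leaves implicit.
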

\begin{proof}
Define the extension on \(U+S\) by \(\lambda(u+s)=\lambda(u)\).
The intersection condition makes this well-defined. Extend a basis of
\(U+S\) to a basis of \(V\), and assign zero on the added basis vectors.
For separation, start on \(Kp\) with \(\lambda(cp)=c\); its intersection
with \(S\) is zero. Necessity follows by restricting a vanishing functional.
The polynomial consequences follow by applying this construction inside
\(\Pol_{\le B}\).
\end{proof}

\section{Matching moments and a stable old filtration}\label{sec:moments}

For \(m\ge0\) and \(N\ge1\), the functional unary system
\(\mathcal F^{\mathrm{fun}}_{m,N}\) has variables \(X_{ij}\) and generators
\begin{align*}
 &X_{ij}^2-X_{ij},\\
 &X_{ij}X_{ij'} &&(j\ne j'),\\
 &X_{ij}X_{i'j} &&(i\ne i'),\\
 &\rho_i-1,\qquad \rho_i=\sum_{j=1}^N X_{ij}.&
\end{align*}
The row exclusions in the second line are part of this auxiliary system.
They are not inferred from a weaker unary encoding.

A matching \(T\) is a set of cells with distinct rows and distinct columns;
write \(X_T=\prod_{(i,j)\in T}X_{ij}\).
The empty matching corresponds to the constant one.

\begin{lemma}[Complete moment equations]\label{lem:marginals}\leangroup{\leanref{claims/MatchingNormalForm.lean}{normal form}
  \leansep \leanref{claims/MatchingMomentCompleteness.lean}{moment equations}}
A linear functional on \(\Pol_{\le B}\) annihilating
\(\Ideal_B(\mathcal F^{\mathrm{fun}}_{m,N})\) is exactly a choice of
matching moments \(z_T\), \(|T|\le B\), satisfying
\begin{equation}\label{eq:marginals}
 \sum_{j\notin\cols(T)}z_{T\cup\{(i,j)\}}=z_T
 \quad\text{for }|T|<B,\ i\notin\rows(T).
\end{equation}
Its value on an ordinary monomial is the moment of its squarefree support
if that support is a matching, and is zero otherwise. Normalization is the
additional condition \(z_\varnothing=1\).
\end{lemma}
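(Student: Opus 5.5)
The proof has two parts. First, a normal form: every polynomial of degree at most \(B\) is congruent modulo \(\Ideal_B(\mathcal F^{\mathrm{fun}}_{m,N})\) to a linear combination of matching monomials \(X_T\) with \(|T|\le B\). Second, the residual relations among the \(X_T\) inside \(\Ideal_B\) are exactly the ones generated by the moment equations \eqref{eq:marginals}. Given these two facts, a functional \(\lambda\) annihilating \(\Ideal_B\) is determined by the values \(z_T=\lambda(X_T)\), and it is well defined precisely when those values satisfy \eqref{eq:marginals}.

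For the normal form I take an ordinary monomial \(\mu\) of degree at most \(B\) and reduce it in two stages, keeping each generator multiple within degree \(\deg\mu\).
\begin{itemize}
\item[(i)] Replace \(\mu\) by its squarefree support. By Lemma~\ref{lem:boolean} the difference lies in \(\Ideal_{\deg\mu}(\Bool_X)\).
\item[(ii)] If that support contains two cells in a common row or a common column, the support is a multiple of a generator \(X_{ij}X_{ij'}\) or \(X_{ij}X_{i'j}\). This multiple has degree equal to the support size, which is at most \(B\), so the support lies in \(\Ideal_B\).
\end{itemize}
This proves the stated rule for evaluating \(\lambda\) on ordinary monomials, once the \(z_T\) are fixed.

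Necessity of \eqref{eq:marginals} comes from one generator multiple. For \(|T|<B\) and \(i\notin\rows(T)\), the polynomial \(X_T(\rho_i-1)\) has degree \(|T|+1\le B\), so it lies in \(\Ideal_B\). Expand it: each term \(X_TX_{ij}\) with \(j\in\cols(T)\) reduces to zero by step (ii), and what remains is exactly \(\sum_{j\notin\cols(T)}X_{T\cup\{(i,j)\}}-X_T\). Applying \(\lambda\) gives \eqref{eq:marginals}.

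Sufficiency is the main point. Suppose \(z\) satisfies \eqref{eq:marginals}, and define \(\lambda\) on monomials by the reduction rule above, extended linearly to \(\Pol_{\le B}\). I have to show that \(\lambda(qF)=0\) for every generator \(F\) and every \(q\) with \(\deg(qF)\le B\). By linearity it is enough to take \(q\) an ordinary monomial \(\mu\), and I check each generator type.
\begin{itemize}
\item Boolean generators: \(\mu X_{ij}^2\) and \(\mu X_{ij}\) have the same squarefree support, so \(\lambda\) agrees on them by definition.
\item Exclusion generators: every monomial in \(\mu X_{ij}X_{ij'}\) or \(\mu X_{ij}X_{i'j}\) has a non-matching support, so \(\lambda\) is zero on it.
\item Row generators \(\mu(\rho_i-1)\) with \(\deg\mu<B\): let \(T\) be the squarefree support of \(\mu\).
\begin{itemize}
\item[(a)] If \(T\) is not a matching, every term vanishes under \(\lambda\).
\item[(b)] If \(T\) is a matching and \(i\in\rows(T)\), say \((i,j_0)\in T\), then \(\mu X_{ij_0}\) has support \(T\), while \(\mu X_{ij}\) for \(j\neq j_0\) has a row clash. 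The value is \(z_T-z_T=0\).
\item[(c)] If \(T\) is a matching and \(i\notin\rows(T)\), the terms with \(j\in\cols(T)\) vanish, and the remaining sum equals \(z_T\) by \eqref{eq:marginals}. This applies because \(|T|\le\deg\mu<B\).
\end{itemize}
\end{itemize}
The main care needed is in case (c): the inequality \(|T|<B\) must come from ordinary degree, since support size can be smaller than \(\deg\mu\). Here \(|T|\le\deg\mu\) and the degree condition \(\deg(\mu(\rho_i-1))\le B\) gives \(\deg\mu<B\), which is exactly what is required.

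Finally, \(\lambda(1)=z_\varnothing\), so the functional is normalized exactly when \(z_\varnothing=1\).
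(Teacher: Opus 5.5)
Your proof is correct and follows essentially the same route as the paper: a matching normal form from Boolean reduction and the exclusion generators, necessity from the legal multiples \(X_T(\rho_i-1)\), and sufficiency by checking that the evaluation rule kills every legal generator multiple. The only difference is cosmetic: the paper first puts the row-equation multiplier into matching normal form (with error certified through \(B-1\)), whereas you treat arbitrary monomial multipliers directly through your cases (a)–(c), which is equally valid because \(\deg(\mu F)\le\deg(qF)\) for every monomial \(\mu\) of \(q\).
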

\begin{proof}
Boolean equations reduce powers within the original degree.
A squarefree monomial whose support has a repeated row or column contains
one of the exclusion generators as a factor. Thus every polynomial reduces,
within its degree, to a linear combination of matching monomials.
Conversely, the stated rule for evaluating ordinary monomials annihilates
every legal Boolean or exclusion multiple.

It remains to check row equations. In a row-equation multiple
\(q(\rho_i-1)\) of degree at most \(B\), we may first put \(q\) into the
matching normal form: any error is a Boolean or exclusion consequence
through \(B-1\), and multiplication by \(\rho_i-1\) keeps its certificate
within \(B\). For a matching multiplier \(X_T\), if row \(i\) is already
occupied, the product reduces to zero. Otherwise its reduction is
\[
 \sum_{j\notin\cols(T)}X_{T\cup\{(i,j)\}}-X_T.
 \]
These are precisely \eqref{eq:marginals}, and they exhaust the constraints.
No positivity or multiplicativity is imposed on the functional.
\end{proof}

Let \(\Delta_{s,N}\) be the chessboard complex whose faces are matchings on
an \(s\)-row, \(N\)-column board. The following consequence of the
chessboard connectivity theorem of Bj\"orner, Lov\'asz, Vre\'cica,
and \v{Z}ivaljevi\'c \cite{BLVZ94} is proved, with its homological
prerequisite, in Appendix~\ref{sec:topology}:
\begin{equation}\label{eq:chess-vanish}
 s\ge2,\quad N\ge2s-1
 \quad\Longrightarrow\quad
 \widetilde H_{s-2}(\Delta_{s,N};\F)=0.
\end{equation}

\paragraph{Status and idea of the next theorem.}
Theorem~\ref{thm:moments} and Corollary~\ref{cor:stability} say that, below
degree about \(N/2\), polynomial calculus over the functional pigeonhole
axioms derives nothing beyond the Nullstellensatz span of the same degree,
and that a consequence of low degree already has a certificate of its own
degree. The degree range is that of Razborov's lower bound \cite{Raz98},
whose proof also determines the low-degree consequences through an explicit
basis; we do not claim the range, or the refutation degree bound it implies,
as new. What we give is a different and self-contained proof over \(\F\), in
the form needed in Section~\ref{sec:decoder}: every functional given through
degree \(k\) extends. We have not found this extension formulation, or its
derivation from chessboard homology, in the literature, but we have not
made an exhaustive search and claim only the proof as ours. The idea is
that the unknown moments of degree \(s\) on a fixed set of \(s\) rows form
an \((s-1)\)-chain of the chessboard complex, the equations
\eqref{eq:marginals} prescribe its boundary, and the prescribed boundary is
a cycle; the vanishing homology \eqref{eq:chess-vanish} supplies a filling.

\begin{theorem}[Extension of prescribed matching moments]\label{thm:moments}\leangroup{\leanref{claims/MatchingMomentExtension.lean}{extension}
  \leansep \leanref{claims/MatchingFiltration.lean}{annihilators}}
Suppose \(m\ge0\), \(N\ge\max\{1,2B-1\}\), and \(0\le k\le B\).
Every functional annihilating
\(\Ideal_k(\mathcal F^{\mathrm{fun}}_{m,N})\) extends to one annihilating
\(\Ideal_B(\mathcal F^{\mathrm{fun}}_{m,N})\).
The constant moment is preserved. In particular normalized functionals
exist through degree \(B\).
\end{theorem}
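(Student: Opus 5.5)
By Lemma~\ref{lem:marginals}, a functional annihilating \(\Ideal_k(\mathcal F^{\mathrm{fun}}_{m,N})\) is the same thing as a family of matching moments \(z_T\), \(|T|\le k\), satisfying \eqref{eq:marginals} for \(|T|<k\). So the theorem reduces to a purely combinatorial extension statement. I would induct on the degree \(s=k+1,\dots,B\), extending one degree at a time. At each stage the moments \(z_T\) with \(|T|\le s-1\) are given and satisfy \eqref{eq:marginals} for \(|T|<s-1\). We must choose \(z_T\) for \(|T|=s\) so that \eqref{eq:marginals} holds for every \(T'\) with \(|T'|=s-1\). The lower moments are never altered, so \(z_\varnothing\) is preserved. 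The case \(k=0\) needs only \(z_\varnothing\) to be arbitrary. Normalized functionals through degree \(B\) then follow by starting from \(z_\varnothing=1\).

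The equations at stage \(s\) decouple by row sets. A new moment \(z_T\) with \(|T|=s\) only enters equations \((T',i)\) with \(T'\cup\{(i,j)\}=T\), and all of these have \(\rows(T')\cup\{i\}=\rows(T)\). So fix a set \(R\) of \(s\) rows, identified with an \(s\)-row board with \(N\) columns (if \(m<s\) there is nothing to do). The unknowns \(\{z_T:\rows(T)=R\}\) form an \((s-1)\)-chain \(c\) of \(\Delta_{s,N}\) over \(\F\). For a fixed \((s-2)\)-face \(T'\) missing row \(i\), the sum \(\sum_j z_{T'\cup\{(i,j)\}}\) is exactly the coefficient of \(T'\) in \(\partial c\); in characteristic two no signs arise. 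Hence the required equations say \(\partial c=b\), where \(b\) is the prescribed \((s-2)\)-chain with \(b_{T'}=z_{T'}\) for \(\rows(T')\subsetneq R\), \(|T'|=s-1\).

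Solvability needs two things. First, \(b\) must be a cycle in the augmented complex. The coefficient of \(\partial b\) on an \((s-3)\)-face \(T''\) is a sum over the two rows \(i\ne i'\) of \(R\setminus\rows(T'')\): for each row, sum over columns of \(z_{T''\cup\text{cell in that row}}\). By \eqref{eq:marginals} at degree \(s-2\), which is available because \(s-2<s-1\), each of these two sums equals \(z_{T''}\). Their sum is therefore \(0\) in \(\F\). For \(s=1\) the condition is instead that the augmentation of \(b\) is consistent, which is trivial since one just picks any column, so this case is handled directly. Second, the cycle must be a boundary, which is where \eqref{eq:chess-vanish} enters. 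For \(s\ge2\) with \(N\ge2B-1\ge2s-1\), the group \(\widetilde H_{s-2}(\Delta_{s,N};\F)\) vanishes, and \(b\) bounds some chain \(c\). For \(s=2\), reduced homology in degree \(0\) correctly handles the augmentation, and this is consistent with the cycle computation above.

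The main obstacle I expect is bookkeeping. One must check that the reduced (augmented) chain complex matches the moment equations exactly in the low cases \(s=1,2\). One must also verify that boundary chains over \(\F\) correspond without sign issues, and that separate row sets \(R\) really impose no cross constraints, so that the choices of \(c\) can be made independently for each \(R\). The heavy topological input is simply \eqref{eq:chess-vanish}, which is taken from Appendix~\ref{sec:topology}.
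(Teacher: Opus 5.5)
Your proposal is correct and follows the paper's proof essentially step for step. Both arguments reduce to matching moments via Lemma~\ref{lem:marginals} and fill degrees \(s=k+1,\dots,B\) one at a time, treating \(s=1\) directly using \(N\ge1\). Both then decouple the unknowns by \(s\)-row sets, show the prescribed \((s-2)\)-chain is a cycle because the two missing-row marginal sums cancel in \(\F\), and fill it with the chessboard vanishing \eqref{eq:chess-vanish}.
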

\begin{proof}
Use Lemma~\ref{lem:marginals}. Starting from the prescribed moments, fill
degrees \(s=k+1,\ldots,B\).
At degree one choose moments in each row whose sum is the prescribed
constant moment; a column exists because \(N\ge1\).

For \(s\ge2\), fix a set \(S\) of \(s\) rows. Form the reduced
\((s-2)\)-chain
\[
 z_S=\sum_{\substack{T\text{ a matching on }S\times[N]\\|T|=s-1}}
             z_T[T]
\]
in \(\Delta_{s,N}\).
This is a cycle. At \(s=2\), its augmentation is the sum of two equal
row totals, which is zero in \(\F\).
For \(s>2\), a face \(U\) of size \(s-2\) misses two rows of \(S\).
Its boundary coefficient is the sum of the prescribed marginal sums for
those two rows, hence \(z_U+z_U=0\).

Since \(N\ge2s-1\), equation~\eqref{eq:chess-vanish} gives an
\((s-1)\)-chain \(y_S\) with boundary \(z_S\).
Assign the coefficients of \(y_S\) as the new moments on full
\(s\)-matchings of \(S\times[N]\).
For a fixed \((s-1)\)-matching, its boundary equation is exactly the
missing-row equation in \eqref{eq:marginals}.

Different \(s\)-row sets have disjoint new unknowns: a size-\(s\) matching
has a unique occupied row set. Their shared lower faces retain their
already assigned values. Thus these fillings are globally consistent and
complete the induction. If \(s>m\), there are no new row sets or
moments to fill. In particular, no hypothesis \(m\ge B\) is required.
The argument works for either value of the constant moment.
Starting with \(z_\varnothing=1\) at degree zero supplies a normalized
functional.
\end{proof}

\begin{corollary}[Old filtration stability and PC closure]\label{cor:stability}\lean{claims/MatchingFiltration.lean}
Under \(m\ge0\) and \(N\ge\max\{1,2B-1\}\),
\begin{equation}\label{eq:old-stability}
 \Ideal_B(\mathcal F^{\mathrm{fun}}_{m,N})\cap\Pol_{\le k}
       =\Ideal_k(\mathcal F^{\mathrm{fun}}_{m,N})
       \quad(0\le k\le B),
 \qquad
 \Cons_B(\mathcal F^{\mathrm{fun}}_{m,N})
       =\Ideal_B(\mathcal F^{\mathrm{fun}}_{m,N}).
\end{equation}
In particular \(1\notin\Cons_B(\mathcal F^{\mathrm{fun}}_{m,N})\).
\end{corollary}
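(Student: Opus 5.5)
The plan is to derive both identities from Theorem~\ref{thm:moments} by duality (Lemma~\ref{lem:duality}), proving filtration stability first and then PC closure by induction along a derivation.

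\emph{Filtration stability.} The inclusion \(\Ideal_k\subseteq\Ideal_B\cap\Pol_{\le k}\) is immediate from \eqref{eq:ns-space}, since every legal degree-\(k\) generator multiple is also legal at degree \(B\). For the reverse inclusion, suppose \(p\in\Pol_{\le k}\) but \(p\notin\Ideal_k\).
\begin{itemize}
\item Lemma~\ref{lem:duality}, applied inside \(V=\Pol_{\le k}\) with \(S=\Ideal_k\), gives a functional \(\lambda\) on \(\Pol_{\le k}\) that annihilates \(\Ideal_k\) and has \(\lambda(p)=1\).
\item Theorem~\ref{thm:moments} extends \(\lambda\) to a functional \(\Lambda\) on \(\Pol_{\le B}\) annihilating \(\Ideal_B\). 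Here I read ``extends'' as agreeing on all of \(\Pol_{\le k}\). The proof supports this reading: it keeps the prescribed moments of size \(\le k\), and by Lemma~\ref{lem:marginals} those moments determine the value on every polynomial of degree \(\le k\).
\item Then \(\Lambda(p)=1\), so \(p\notin\Ideal_B\).
\end{itemize}
This gives the first identity for every \(0\le k\le B\).

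\emph{PC closure.} The inclusion \(\Ideal_B\subseteq\Cons_B\) is Lemma~\ref{lem:pc-product}. For the converse I induct along a degree-\(B\) derivation and show every line \(g\) lies in \(\Ideal_{\deg g}\), hence in \(\Ideal_B\).
\begin{itemize}
\item Axioms of degree \(\le B\) are generators of degree \(\deg g\), so they lie in \(\Ideal_{\deg g}\).
\item For a linear combination, every summand lies in \(\Ideal_B\), so the combination \(g\) does too. Since \(\deg g\le B\), the first identity with \(k=\deg g\) puts \(g\) in \(\Ideal_{\deg g}\).
\item For a multiplication \(g\mapsto x_jg\) with \(\deg g+1\le B\), we have \(g\in\Ideal_{\deg g}\) by induction. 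Lemma~\ref{lem:pc-product} gives \(x_jg\in\Ideal_{\deg g+1}\subseteq\Ideal_B\).
\end{itemize}
The key point is in the linear-combination step: without stability, a line of low degree could be certified only through high-degree cancellation, and multiplying it by \(x_j\) would then exceed degree \(B\). Stability is exactly what removes this obstacle. The multiplication step gives no trouble, because the certificate degree tracks \(\deg g\).

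\emph{Consequence.} Theorem~\ref{thm:moments} provides a normalized annihilator of \(\Ideal_B\) (take \(k=0\) with \(z_\varnothing=1\)). This functional sends \(1\) to \(1\), so \(1\notin\Ideal_B=\Cons_B\).

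The main subtlety is confirming that the extension agrees with the prescribed functional on the whole of \(\Pol_{\le k}\), not merely on the moments. Lemma~\ref{lem:marginals} settles this, since the value on an ordinary monomial is determined by its support moment.
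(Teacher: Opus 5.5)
Your proposal is correct and follows the paper's proof: duality plus Theorem~\ref{thm:moments} for stability, then induction over the derivation for PC closure. The only difference is cosmetic: you carry the invariant \(g\in\Ideal_{\deg g}\) and apply stability at linear-combination steps, whereas the paper keeps \(g\in\Ideal_B\) and applies stability at \(k=B-1\) just before each multiplication; your remark that ``extends'' means agreement on all of \(\Pol_{\le k}\), justified by Lemma~\ref{lem:marginals}, makes explicit a point the paper leaves implicit.
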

\begin{proof}
If a polynomial of degree at most \(k\) is outside \(\Ideal_k\),
finite-dimensional linear duality gives a functional vanishing on
\(\Ideal_k\) but not on that polynomial. Extend it by
Theorem~\ref{thm:moments}. The polynomial is then outside \(\Ideal_B\)
as well. The reverse inclusion in the first equality is immediate.

To prove PC closure, induct over a degree-\(B\) derivation.
Axioms and linear combinations lie in \(\Ideal_B\).
At a nonzero multiplication step \(g\mapsto X_{ij}g\), the predecessor
has ordinary degree at most \(B-1\). If \(g\in\Ideal_B\), the first
equality puts it in \(\Ideal_{B-1}\). Multiplication of its legal
generator multiples by \(X_{ij}\) stays in \(\Ideal_B\).
Hence \(\Cons_B\subseteq\Ideal_B\); the other direction is
Lemma~\ref{lem:pc-product}.
A normalized annihilator excludes one.
\end{proof}

\section{The bit decoder and a separating functional}\label{sec:decoder}

Here \(n=2^\ell\) with \(\ell\ge2\), and \(m\) is arbitrary.
The compact bit system, our old base, is
\begin{equation}\label{eq:compact-base}
 \mathcal Q_{m,\ell}
  =\{E_{ii'}:i<i'\in[m]\}\cup\Bool_b,
 \qquad
 E_{ii'}=\prod_{t=1}^{\ell}(1-b_{it}-b_{i't}).
\end{equation}
The degree of each collision generator is \(\ell\).
Its vanishing asserts that the two bit rows encode different labels.

Identify unary columns with labels \(z\in\{0,1\}^{\ell}\), and define the
ordinary ring map
\begin{equation}\label{eq:decoder}
 \tau(b_{it})=\sum_{z:z_t=1}X_{iz}.
\end{equation}

\begin{lemma}[Two-row interpolation]\label{lem:interpolation}\lean{claims/TwoRowInterpolation.lean}
Let \(\mathcal U_{i,i'}\) consist of Booleanity, same-row exclusions,
and the two row-sum equations for distinct rows \(i,i'\).
For a polynomial \(R\) supported on these rows and of degree at most \(d\),
\[
 R-\sum_{z,w}R(e_z,e_w)X_{iz}X_{i'w}
     \in\Ideal_{\max\{d,2\}}(\mathcal U_{i,i'}).
\]
Here \(e_z,e_w\) are the corresponding one-hot assignments.
Column exclusions are not used in this interpolation.
\end{lemma}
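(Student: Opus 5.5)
The plan is to reduce \(R\) in stages, each a legal generator multiple of degree at most \(\max\{d,2\}\), until it becomes a combination of the cross products \(X_{iz}X_{i'w}\). Then we identify the coefficients by evaluating at the one-hot points.

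\emph{First}, apply Lemma~\ref{lem:boolean} to replace \(R\) by its squarefree remainder \(\overline R\); the error lies in \(\Ideal_d(\Bool)\subseteq\Ideal_{\max\{d,2\}}(\mathcal U_{i,i'})\). \emph{Second}, any squarefree monomial containing two variables from the same row contains a same-row exclusion \(X_{iz}X_{iz'}\) as a factor. It is therefore a legal generator multiple of its own degree, and we delete it. What remains is a combination of the monomials \(1\), \(X_{iz}\), \(X_{i'w}\), \(X_{iz}X_{i'w}\).

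\emph{Third}, homogenize each row using its row-sum equation: write
\[
 1=\rho_i-(\rho_i-1),\qquad X_{i'w}=\rho_iX_{i'w}-(\rho_i-1)X_{i'w},
\]
and similarly for \(X_{iz}\) with \(\rho_{i'}\). This converts the constant into \(\rho_i\rho_{i'}\) modulo \((\rho_i-1)\rho_{i'}+(\rho_{i'}-1)\), and each linear monomial into a sum of cross products. The multipliers have degree at most \(1\), so each generator multiple has degree at most \(2\le\max\{d,2\}\). This step is the reason the bound is \(\max\{d,2\}\) rather than \(d\). After it, \(R\) is congruent to \(\sum_{z,w}c_{zw}X_{iz}X_{i'w}\) modulo \(\Ideal_{\max\{d,2\}}(\mathcal U_{i,i'})\).

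\emph{Finally}, identify \(c_{zw}=R(e_z,e_w)\). Every generator of \(\mathcal U_{i,i'}\) vanishes at a one-hot pair of points \((e_z,e_w)\), since Booleanity, same-row exclusions and row sums all hold there. Hence every element of the ideal span vanishes at these points, and \(R\) and the cross-product combination agree at \((e_z,e_w)\). Evaluating the combination at \((e_z,e_w)\) gives exactly \(c_{zw}\).

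The only point needing care is the degree bookkeeping in the homogenization step when \(d\) is \(0\) or \(1\); this is handled by the ceiling \(\max\{d,2\}\). Column exclusions are never used, consistent with the remark in the statement.
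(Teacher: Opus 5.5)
Your proof is correct and follows the paper's own argument. You reduce powers and same-row collisions within degree \(d\), homogenize the constant and the single cells with the row sums at degree at most two (your identity \(\rho_i\rho_{i'}-1=(\rho_i-1)\rho_{i'}+(\rho_{i'}-1)\) is the paper's \(1-\rho_i\rho_{i'}=(1-\rho_i)+\rho_i(1-\rho_{i'})\) with the rows taken in the other order), and read off the bilinear coefficients at the one-hot points. Your observation that every generator of \(\mathcal U_{i,i'}\) vanishes at \((e_z,e_w)\) makes explicit the justification for the coefficient identification that the paper leaves implicit.
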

\begin{proof}
Reduce powers and same-row collisions. The remainder is a linear
combination of \(1\), individual cells of the two rows, and products
containing one cell from each row. This reduction costs at most \(d\).
Replace the constant by \(\rho_i\rho_{i'}\), using
\[
 1-\rho_i\rho_{i'}=(1-\rho_i)+\rho_i(1-\rho_{i'}),
\]
and replace each remaining single cell by its product with the other
row sum. These changes have certificates of degree at most two.
The result is bilinear, and its coefficient at \(X_{iz}X_{i'w}\)
is its value at \((e_z,e_w)\), namely \(R(e_z,e_w)\).
\end{proof}

\begin{lemma}[A polynomial left inverse for the decoder]\label{lem:decoder-injective}
\lean{claims/CompactBitDecoder.lean}
For every \(m,\ell\ge0\), the decoder \(\tau\) is injective and preserves
ordinary total degree.
\end{lemma}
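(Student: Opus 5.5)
The plan is to construct an explicit ring map \(\sigma\) from the unary variables back to the bit variables with \(\sigma\circ\tau=\mathrm{id}\) on \(\F[b]\). Injectivity then follows at once. The natural candidate is the one-hot indicator
\[
 \sigma(X_{iz})=\prod_{t=1}^{\ell}\bigl(1-b_{it}-z_t\bigr)\quad(\text{over }\F),
\]
which is \(b_{it}\) when \(z_t=1\) and \(1-b_{it}\) when \(z_t=0\). However, \(\sigma(\tau(b_{it}))=\sum_{z:z_t=1}\prod_s(\ldots)\) equals \(b_{it}\) only after Boolean reduction. As an ordinary polynomial identity it does hold: summing over the free coordinates \(z_s\), \(s\ne t\), gives the factor \(\prod_{s\ne t}\bigl(b_{is}+(1-b_{is})\bigr)=1\), leaving \(b_{it}\). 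So this product-form \(\sigma\) is a genuine left inverse at the level of ordinary polynomials, because each factor \(b+(1-b)=1\) literally. I would verify this by distributivity: \(\sum_{z_{-t}}\prod_{s\ne t}c_s(z_s)=\prod_{s\ne t}\sum_{z_s}c_s(z_s)\). Then \(\sigma\tau\) agrees with the identity on generators, hence everywhere, which gives injectivity. The edge cases \(m=0\) or \(\ell=0\) are harmless: for \(\ell=0\) there are no bit variables, and for \(m=0\) both rings are \(\F\).

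For degree preservation, \(\deg\tau(p)\le\deg p\) is immediate, since \(\tau\) sends variables to linear forms. The reverse inequality \(\deg\tau(p)\ge\deg p\) cannot come from \(\sigma\), which has degree \(\ell\). Instead I would argue that \(\tau\) is a homogeneous linear substitution, so \(\tau\) maps each homogeneous component \(p_d\) to a homogeneous polynomial of degree \(d\). The top component \(p_D\) then survives, because \(\tau\) is injective and its top part is not cancelled by lower-degree components. It therefore suffices to know that \(\tau(p_D)\ne0\), and this follows from injectivity of \(\tau\) on all of \(\F[b]\), already established.

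An alternative check of injectivity, should the left inverse be questioned, is linear: the linear forms \(\tau(b_{it})\) are linearly independent, because the columns \(\{z:z_t=1\}\) for distinct \((i,t)\) give distinct, independent indicator vectors within row \(i\) and disjoint supports across rows. An injective linear change of variables, namely a linear embedding of the variable space, induces an injective graded algebra map. This route gives both claims at once, and I would likely present it as the main proof, keeping \(\sigma\) as a remark.

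The main obstacle is minor: confirming linear independence of \(\{\sum_{z_t=1}e_z\}_t\) in \(\F^{\{0,1\}^\ell}\). I would do this by evaluating against the functionals \(e_z\mapsto\) coefficient at \(z=\mathbf e_t\) (the unit label), which picks out exactly coordinate \(t\). If \(\ell\ge1\), the label \(\mathbf e_t\) lies in the support of the \(t\)-th form and in no other.
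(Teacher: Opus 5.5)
Your proposal is correct. The route you say you would present as the main proof is essentially the paper's. The functionals you use to test linear independence, reading off the coefficient at the unit label \(e_t\), assemble into exactly the paper's linear left inverse \(\lambda\), which sends \(X_{i,e_t}\mapsto b_{it}\) and every other unary variable to \(0\). Since \(\tau\) and \(\lambda\) both have degree at most one and \(\lambda\tau=\mathrm{id}\), the paper obtains injectivity and \(\deg p=\deg\lambda\tau p\le\deg\tau p\le\deg p\) in one line. It needs neither the grading nor the general fact about linear embeddings.

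Your product-form \(\sigma\) is also a valid left inverse as an ordinary polynomial identity, since the factor \(\prod_{s\ne t}\bigl(b_{is}+(1-b_{is})\bigr)\) is literally one. You should delete the sentence saying that equality holds only after Boolean reduction. Because \(\sigma\) has degree \(\ell\), it yields only injectivity. You correctly supply the separate homogeneous-component argument to recover degree preservation; the linear left inverse makes that extra step unnecessary.
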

\begin{proof}
Define a linear substitution \(\lambda\) on unary variables by sending
\(X_{i,e_t}\) to \(b_{it}\), where \(e_t\) is the unit bit label, and
all other unary variables to zero. The unit labels are distinct and
\(\lambda\tau(b_{it})=b_{it}\), so \(\lambda\tau\) is the identity
on polynomials. Both substitutions have degree at most one; applying the
degree inequality in both directions proves equality, as well as injectivity.
\end{proof}

\begin{lemma}[Degree-preserving old-base transfer]\label{lem:decoder}
\lean{claims/CompactBitDecoderTransfer.lean}
For \(m\ge0\), \(\ell\ge2\), and every \(B\ge0\),
\[
 q\in\Cons_B(\mathcal Q_{m,\ell})
 \quad\Longrightarrow\quad
 \tau q\in\Cons_B(\mathcal F^{\mathrm{fun}}_{m,n}).
\]
\end{lemma}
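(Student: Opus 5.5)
We transfer the derivation through the substitution \(\tau\), which has degree one. By Lemma~\ref{lem:substitution} with \(T=1\) and weight \(w=1\), a degree-\(B\) derivation of \(q\) from \(\mathcal Q_{m,\ell}\) maps to a degree-\(B\) derivation of \(\tau q\), provided that the image \(\tau F\) of every axiom \(F\in\mathcal Q_{m,\ell}\) with \(\deg F\le B\) lies in \(\Cons_B(\mathcal F^{\mathrm{fun}}_{m,n})\). The generators then split into two cases: Boolean generators \(b_{it}^2-b_{it}\), and collision generators \(E_{ii'}\), which only matter when \(B\ge\ell\).

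\emph{Boolean generators.} The image \(\tau(b_{it}^2-b_{it})=\sigma^2-\sigma\), where \(\sigma=\sum_{z:z_t=1}X_{iz}\), is supported on row \(i\) and has degree \(2\). Expand \(\sigma^2\). The diagonal terms \(X_{iz}^2-X_{iz}\) are Boolean generators. The cross terms \(X_{iz}X_{iz'}\) with \(z\ne z'\) occur twice, so they cancel in characteristic two; in any case they are same-row exclusions. Hence \(\tau(b_{it}^2-b_{it})\in\Ideal_2\). This is within degree \(B\) whenever \(B\ge2\), and the axiom is only usable when \(B\ge2\) anyway.

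\emph{Collision generators.} The polynomial \(R=\tau E_{ii'}\) has degree at most \(\ell\) and is supported on rows \(i,i'\). Apply Lemma~\ref{lem:interpolation} with \(d=\ell\ge2\). Then \(R\) differs by an element of \(\Ideal_\ell(\mathcal U_{i,i'})\) from \(\sum_{z,w}R(e_z,e_w)X_{iz}X_{i'w}\). At the one-hot point \((e_z,e_w)\), \(\tau(b_{it})\) evaluates to \(z_t\) and \(\tau(b_{i't})\) to \(w_t\). So each factor \(1-z_t-w_t\) over \(\F\) equals \(1\) exactly when \(z_t=w_t\), and \(R(e_z,e_w)=[z=w]\). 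The bilinear remainder is therefore \(\sum_z X_{iz}X_{i'z}\), a sum of column exclusions. Since \(\mathcal U_{i,i'}\subseteq\mathcal F^{\mathrm{fun}}_{m,n}\), we get \(\tau E_{ii'}\in\Ideal_\ell(\mathcal F^{\mathrm{fun}}_{m,n})\). Lemma~\ref{lem:pc-product} then places both families of images in \(\Cons_B\) at the required degree.

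The main obstacle is the degree bookkeeping, not the algebra. Lemma~\ref{lem:substitution} requires \(\tau F\) to be derivable within \(\deg F\) (here \(TD=B\) and the axiom appears in a degree-\(B\) derivation, so \(\deg F\le B\)), and the interpolation costs \(\max\{\ell,2\}=\ell\). The hypothesis \(\ell\ge2\) is exactly what makes this fit. The step I would check most carefully is the evaluation \(R(e_z,e_w)\). The interpolation lemma identifies bilinear coefficients by evaluating ordinary polynomials at one-hot assignments. This is legitimate because \(\tau\) commutes with evaluation: substituting the one-hot values into \(\tau(b_{it})\) yields the bit \(z_t\) exactly. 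With that checked, the induction over primitive PC steps is supplied verbatim by Lemma~\ref{lem:substitution}.
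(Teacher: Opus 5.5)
Your proposal is correct and matches the paper's proof essentially step for step. It gives a degree-two certificate for the Boolean images, applies Lemma~\ref{lem:interpolation} at degree \(\max\{\ell,2\}=\ell\) to show that \(\tau E_{ii'}\) differs from the column-exclusion sum \(\sum_z X_{iz}X_{i'z}\) by an element of \(\Ideal_\ell(\mathcal U_{i,i'})\), and replays the derivation via Lemma~\ref{lem:substitution} with \(T=1\), \(w=1\), noting that only axioms admitted by the degree ceiling need images (so \(B\ge\ell\) is not required); that lemma actually asks for derivability through \(B\) rather than through \(\deg F\), but your certificate at \(\deg F\le B\) is stronger and suffices.
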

\begin{proof}
Bit Booleanity has a degree-two image certificate:
\[
 \tau(b_{it}^2-b_{it})=\sum_{z:z_t=1}(X_{iz}^2-X_{iz}).
\]
At a two-row one-hot assignment \((e_z,e_w)\), \(\tau(E_{ii'})\)
is one exactly when \(z=w\). Lemma~\ref{lem:interpolation} therefore gives
\[
 \tau(E_{ii'})-\sum_zX_{iz}X_{i'z}
       \in\Ideal_\ell(\mathcal U_{i,i'}).
\]
The remaining terms are column exclusions of degree two. Thus every axiom
image has a certificate through its original degree. Replay the source
proof using Lemma~\ref{lem:substitution} with \(T=1\), \(w=1\).
Only axioms admitted by the source degree ceiling are replayed, so
\(B\ge\ell\) need not be assumed.
\end{proof}

Let \(\mathcal L_k\) be the space spanned by monomials of degree at most
\(k\) containing at most one bit variable from each row.

\begin{lemma}[Row-linear dimension]\label{lem:row-dimension}
\lean{claims/RowLinearPolynomialSpace.lean}
For all \(m,\ell,k\ge0\),
\begin{equation}\label{eq:row-linear-dimension}
 \dim\mathcal L_k=\sum_{j=0}^k\binom mj\ell^j.
\end{equation}
Every nonzero polynomial in this space has a nonzero monomial coefficient
of degree equal to its total degree.
\end{lemma}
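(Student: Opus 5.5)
The dimension formula is a direct count of the spanning monomials. By definition \(\mathcal L_k\) is spanned by monomials of degree at most \(k\) in which each row contributes at most one bit variable. Since the text does not say such monomials are squarefree, I first check that exponents are forced to be \(1\). A monomial containing \(b_{it}^2\) contains two bit variables from row \(i\), counted with multiplicity, so I read the definition as counting with multiplicity. Under that reading the monomials in question are exactly the products \(\prod_{i\in R}b_{i,\sigma(i)}\) with \(R\subseteq[m]\), \(|R|=j\le k\), and \(\sigma:R\to[\ell]\) arbitrary.

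Distinct pairs \((R,\sigma)\) give distinct monomials, and distinct monomials are linearly independent in the polynomial ring. They therefore form a basis of \(\mathcal L_k\). For each \(j\) there are \(\binom mj\) choices of \(R\) and \(\ell^j\) choices of \(\sigma\), and summing over \(j=0,\dots,k\) gives \eqref{eq:row-linear-dimension}. The degenerate cases \(\ell=0\) or \(m=0\) need no separate argument: \(\ell^0=1\) and \(\binom m j=0\) for \(j>m\) handle them. If the formal definition allowed powers \(b_{it}^a\), the space would not be finite-dimensional, which contradicts the stated formula. This is the only interpretive point I would fix before writing the count.

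The second assertion holds for any nonzero polynomial and does not use row-linearity. Let \(p\neq0\) and \(d=\deg p\). Under the ordinary total-degree convention, \(d\) is the maximum degree of a monomial appearing in \(p\) with nonzero coefficient. Hence some monomial of degree exactly \(d\) has a nonzero coefficient. That monomial lies in the spanning set, because \(p\in\mathcal L_k\) is a combination of basis monomials and monomial coefficients are unique. The convention \(\deg0=0\) is irrelevant here since \(p\ne0\).

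The main obstacle is purely bookkeeping: matching the paper's formal definition of \(\mathcal L_k\), which is presumably the span of a Finset of monomials, to the combinatorial count. In particular one needs the bijection between admissible monomials and pairs \((R,\sigma)\), stratified by degree. I would establish this bijection explicitly and then sum the cardinalities.
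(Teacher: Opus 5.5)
Your proposal is correct and follows the paper's argument: monomials form a basis, a row-linear monomial of degree \(j\) amounts to choosing \(j\) rows and one of \(\ell\) coordinates in each, and the second assertion comes from choosing a maximal-degree monomial in the nonempty support. One aside needs correcting: the degree cap \(k\) keeps the space finite-dimensional even if powers \(b_{it}^a\) were allowed, so that reading would only change the count (for \(m=\ell=1\), \(k=2\) it gives \(3\) instead of \(2\)); the formula therefore still forces your multiplicity reading, for that reason rather than finite-dimensionality.
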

\begin{proof}
Ordinary monomials form a basis. To choose a row-linear monomial of degree
\(j\), choose its \(j\) rows and one of \(\ell\) coordinates in each.
Terms with \(j>m\) contribute zero. The last assertion follows by choosing
a maximal-degree element of the finite nonempty support.
\end{proof}

\begin{lemma}[Binary cube degree drop]\label{lem:cube-drop}
\lean{claims/BinaryCubeDegreeDrop.lean}
Partition some source variables into \(t\) indexed groups. For each
\(\epsilon\in\F^t\), substitute scalars depending only on \(\epsilon_i\)
for variables in group \(i\), and fixed affine target polynomials for
all remaining variables. Write \(\phi_\epsilon\) for this substitution
and \(\Delta p=\sum_\epsilon\phi_\epsilon(p)\). Then
\[
 \deg\Delta p\le\max\{\deg p-t,0\},\qquad
 \deg p<t\Longrightarrow\Delta p=0.
\]
If every source generator \(F\) has a cube-independent image \(r_F\)
which is zero or a target generator, then
\[
 \Delta\Ideal_B(\mathcal F)\subseteq
       \Ideal_{\max\{B-t,0\}}(\mathcal G).
\]
\end{lemma}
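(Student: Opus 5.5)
The degree bound comes first, and it reduces to single monomials because \(\Delta\) is linear. Fix a monomial \(\mu\) and write it as \(\mu=\mu_0\prod_{i=1}^t\mu_i\), where \(\mu_i\) collects the variables of group \(i\) and \(\mu_0\) collects the remaining variables. Under \(\phi_\epsilon\), the factor \(\mu_i\) becomes a scalar \(c_i(\epsilon_i)\) that depends only on \(\epsilon_i\). The factor \(\mu_0\) becomes a fixed polynomial \(\phi(\mu_0)\), independent of \(\epsilon\), of degree at most \(\deg\mu_0\), since the targets are affine. The sum therefore factorizes:
\[
 \Delta\mu=\phi(\mu_0)\prod_{i=1}^t\bigl(c_i(0)+c_i(1)\bigr).
\]
If some group \(i\) is absent from \(\mu\), then \(\mu_i=1\) and \(c_i(0)+c_i(1)=1+1=0\) in \(\F\), so \(\Delta\mu=0\). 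Otherwise every group contributes at least one variable, so \(\deg\mu_0\le\deg\mu-t\). In both cases \(\deg\Delta\mu\le\max\{\deg\mu-t,0\}\), and \(\Delta\mu=0\) whenever \(\deg\mu<t\). Summing over the monomials of \(p\) gives both assertions, using the convention \(\deg0=0\).

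The ideal statement uses the same factorization, applied to a generator multiple \(qF\) with \(\deg(qF)\le B\). Since \(\phi_\epsilon\) is a ring map, \(\phi_\epsilon(qF)=\phi_\epsilon(q)\,\phi_\epsilon(F)=\phi_\epsilon(q)\,r_F\), where the hypothesis makes \(r_F\) independent of \(\epsilon\). Hence \(\Delta(qF)=(\Delta q)\,r_F\). If \(r_F=0\), this contributes nothing.

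The step that needs care is the degree bookkeeping when \(r_F\) is a target generator. We need \(\deg\bigl((\Delta q)r_F\bigr)\le\max\{B-t,0\}\), and \(\deg r_F\) can exceed \(\deg F\). I would not bound \(\Delta q\) and \(r_F\) separately; instead I would expand \(q\) into monomials \(\nu\). The factor \(\Delta\nu\) vanishes unless \(\nu\) meets every group. In that case \(\Delta\nu=\phi(\nu_0)\prod_i(\cdots)\), so the relevant product is \(\phi(\nu_0)\,r_F\). For a monomial \(\nu\) of \(q\), the equality \(r_F=\phi_\epsilon(F)\) for all \(\epsilon\) identifies \(\phi(\nu_0)r_F\) with the cube sum applied to \(\nu_0F\) with the group variables stripped. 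Each monomial \(\nu F\) has degree at most \(B\), and the \(t\) group variables removed from \(\nu\) lower the degree by at least \(t\).

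There is a subtlety: this gives a polynomial of degree at most \(B-t\), but membership in \(\Ideal_{B-t}\) requires the generator multiple itself to have small degree. I would argue it as follows. Since \(r_F=\phi_\epsilon(F)\) and \(\phi\) has affine targets, \(\deg r_F\le\deg F\). Then \(\deg(\phi(\nu_0)r_F)\le\deg\nu_0+\deg F\le\deg\nu-t+\deg F\le B-t\). So each \(\phi(\nu_0)r_F\) is a legal multiple of the target generator \(r_F\) in \(\Ideal_{\max\{B-t,0\}}(\mathcal G)\). Summing over monomials and generator multiples completes the proof.
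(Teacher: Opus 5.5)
Your proof is correct, and its skeleton is the paper's. Your factorization \(\Delta\mu=\phi(\mu_0)\prod_i(c_i(0)+c_i(1))\) is the same flip-pairing cancellation the paper uses for monomials, and it is followed by the same identity \(\Delta(qF)=(\Delta q)r_F\).

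You diverge only in the degree check for the target multiple, and there you work harder than necessary. The multiple \((\Delta q)r_F\) is literally the polynomial \(\Delta(qF)\). The first part, applied to \(qF\), therefore bounds its degree by \(\max\{\deg(qF)-t,0\}\le\max\{B-t,0\}\), and gives \(\Delta(qF)=0\) when \(B<t\). That is how the paper concludes, with no expansion of \(q\) and no comparison of \(\deg r_F\) with \(\deg F\).

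Your monomial-by-monomial route is also valid. It uses \(\deg r_F\le\deg F\) and \(\deg\nu\le\deg q=\deg(qF)-\deg F\). The second of these is additivity of degree in a domain, which you should state explicitly rather than saying ``each monomial \(\nu F\) has degree at most \(B\)'' (\(\nu F\) is not a monomial).

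Two remarks in your write-up should be removed:
\begin{enumerate}
\item The claim that \(\deg r_F\) can exceed \(\deg F\) is false. Since \(r_F=\phi_\epsilon(F)\) is an image under substitutions of degree at most one, \(\deg r_F\le\deg F\), as you yourself use later.
\item For \(t\ge1\), the cube sum of \(\nu_0F\) equals \(2^t\phi(\nu_0)r_F=0\). The identity you actually need is \(\phi(\nu_0)r_F=\phi_\epsilon(\nu_0F)\) for any single \(\epsilon\).
\end{enumerate}
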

\begin{proof}
A monomial missing one group has equal evaluations in pairs obtained by
flipping that cube coordinate, so its sum is zero. A surviving monomial
uses at least \(t\) selected variables, which become scalars; the remaining
factors have total degree at most its original degree minus \(t\).
For a legal multiple \(qF\), cube independence gives
\(\Delta(qF)=(\Delta q)r_F\). If \(r_F=0\) it vanishes; otherwise it is
one target generator multiple, whose degree is at most \(B-t\) when
\(B\ge t\). When \(B<t\), it is zero. Extend by linearity.
\end{proof}

\begin{lemma}[Row-cube restriction and coefficient isolation]\label{lem:row-cube}
\leangroup{\leanref{claims/DisjointCoordinatePairs.lean}{pair packing}
  \leansep \leanref{claims/RowCubeRestriction.lean}{restriction}
  \leansep \leanref{claims/RowCubeCoefficientIsolation.lean}{coefficient isolation}}
Choose \(t\) distinct rows and, for each, two labels differing only in
one prescribed bit. If these pairs are disjoint, the one-hot substitutions
on the selected rows, together with deletion of their \(2t\) columns on
other rows, give
\[
 \Delta\Ideal_B(\mathcal F^{\mathrm{fun}}_{m,n})
 \subseteq\Ideal_{\max\{B-t,0\}}
       (\mathcal F^{\mathrm{fun}}_{m-t,n-2t}).
\]
For a row-linear \(f\) of degree at most \(t\), this cube applied to
\(\tau f\) is the constant coefficient of the monomial selecting those
rows and prescribed bits. Such disjoint pairs can be chosen whenever
\(t\ge1\) and \(4(t-1)<2^\ell\).
More generally, the coefficient-isolation identity holds for arbitrary
cube-independent polynomial images of the unselected bits, provided each
selected bit has its prescribed base value plus its selected cube coordinate,
and the other bits in that row stay constant.
\end{lemma}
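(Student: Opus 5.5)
The plan is to prove the three assertions separately. The restriction inclusion will be an application of Lemma~\ref{lem:cube-drop}. The coefficient isolation is a direct computation on monomials. The existence of disjoint pairs is a greedy count.

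\textbf{Setup and restriction.} Let the selected rows be \(i_1,\dots,i_t\), with label pairs \(\{z_a,z_a'\}\), where \(z_a'=z_a+e_{c_a}\) and \(c_a\) is the prescribed bit. For \(\epsilon\in\F^t\), the substitution \(\phi_\epsilon\) is as follows.
\begin{itemize}
\item On row \(i_a\) it sends \(X_{i_az_a}\mapsto 1-\epsilon_a\) and \(X_{i_az_a'}\mapsto\epsilon_a\), and sends every other cell of that row to \(0\).
\item On an unselected row it sends each of the \(2t\) selected columns to \(0\) and leaves the remaining variables unchanged, renamed as variables of \(\mathcal F^{\mathrm{fun}}_{m-t,n-2t}\).
\end{itemize}
Group \(a\) consists of the cells of row \(i_a\), and their images depend only on \(\epsilon_a\), so Lemma~\ref{lem:cube-drop} applies once each generator image is shown to be cube-independent and equal to zero or a target generator. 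I would check this case by case.
\begin{itemize}
\item Booleanity on a selected row has image \(c^2-c=0\) for a scalar \(c\in\F\).
\item A same-row exclusion on a selected row has image \(0\) or \((1-\epsilon_a)\epsilon_a=0\) in \(\F\).
\item The row sum of a selected row has image \((1-\epsilon_a)+\epsilon_a-1=0\).
\item Generators on unselected rows that touch a deleted column map to \(0\). The other Booleanity and same-row exclusions map to target generators, and the row sum becomes the row sum of the smaller board.
\item A column exclusion \(X_{iz}X_{i'z}\) with one selected row and one unselected row vanishes, because \(z\) is either deleted on the unselected row or not in the selected row's pair.
\item A column exclusion between two selected rows \(i_a,i_b\) vanishes because the pairs are disjoint, so \(z\) lies in at most one of them.
\end{itemize}
This last case is exactly where disjointness is used. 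I expect this bookkeeping to be the main, if routine, obstacle.

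\textbf{Coefficient isolation.} Under \(\phi_\epsilon\), the decoder image \(\tau(b_{i_as})=\sum_{z_s=1}X_{i_az}\) becomes \((z_a)_s(1-\epsilon_a)+(z_a')_s\epsilon_a=(z_a)_s+[s=c_a]\epsilon_a\). Thus the selected bit equals its base value plus \(\epsilon_a\), and the other bits of the row are constant. This is the only property used, which also gives the general version with arbitrary cube-independent images of the unselected bits. Now take a row-linear monomial of \(\tau f\), of degree at most \(t\).
\begin{itemize}
\item If it omits the selected bit \(b_{i_ac_a}\) for some \(a\), its image does not depend on \(\epsilon_a\). Flipping \(\epsilon_a\) then pairs up equal terms, so the cube sum is zero, as in Lemma~\ref{lem:cube-drop}.
\item Otherwise, by row-linearity and the degree bound, it is exactly \(\prod_a b_{i_ac_a}\). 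Its cube sum is \(\prod_a\sum_{\epsilon_a}((z_a)_{c_a}+\epsilon_a)=1\).
\end{itemize}
Hence \(\Delta\tau f\) is the constant coefficient of that monomial.

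\textbf{Packing.} The pairs are chosen greedily. Suppose \(a-1\) disjoint pairs are already chosen, using \(2(a-1)\) labels. A candidate \(z\) for bit \(c_a\) is excluded exactly when \(z\) or \(z+e_{c_a}\) is a used label, and this excludes at most \(4(a-1)\le4(t-1)<2^\ell\) labels. So a valid \(z_a\) exists for every \(a\le t\), which requires \(t\ge1\).
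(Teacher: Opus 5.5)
Your proposal is correct and is essentially the paper's proof: the inclusion comes from Lemma~\ref{lem:cube-drop} after checking the generator images, coefficient isolation comes from cube flips on monomials, and packing comes from a greedy count. You count excluded labels, \(4(a-1)<2^\ell\), where the paper counts forbidden pairs in the given direction, \(2j<2^{\ell-1}\); this is the same inequality. Two small bookkeeping points: disjointness is also what makes exactly \(2t\) columns deleted, so that the residual system really is \(\mathcal F^{\mathrm{fun}}_{m-t,n-2t}\), and column exclusions between two unselected rows on surviving columns should be listed among the generators that map to target generators.
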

\begin{proof}
On a selected row, each substitution is a one-hot assignment; on every
other row, deleted columns become zero and surviving variables are relabeled.
Booleanity, row exclusions and column exclusions consequently map to zero
or the corresponding residual generator. Disjointness prevents a collision
between selected rows or between selected and residual rows. A selected
row-sum equation maps to zero, and every other row-sum equation maps to
its residual counterpart. Lemma~\ref{lem:cube-drop} applies.

A row-linear monomial missing a selected row cancels by a cube flip. If it
uses all \(t\) selected rows, the degree bound forces it to use exactly
one variable in each and none outside. The two-value sum in each row is
one for the prescribed bit and zero for every other bit. This isolates
exactly the claimed coefficient, independently of the images outside the
selected rows. Finally, after selecting \(j\) pairs, their \(2j\) vertices
forbid at most \(2j\) of the \(2^{\ell-1}\) pairs in any specified direction.
The strict packing bound permits the next choice.
\end{proof}

\paragraph{Idea of the separation theorem.}
A degree lower bound says that the constant one is not derivable at degree
\(B\). Section~\ref{sec:affine} needs the same for every nonzero row-linear
\(f\) of degree \(t\le k\), where \(B\) is much larger than \(k\). Pick a
monomial of top degree \(t\) in \(f\); it names \(t\) rows and one bit in
each. For each of these rows choose two labels that differ exactly in the
named bit, and sum over the \(2^t\) ways of assigning one of its two labels
to each selected row. Over \(\F\) this signed sum \(\Delta\) annihilates
every monomial that misses a selected row, so it sends \(\tau f\) to the
constant one (the chosen coefficient), lowers every degree by \(t\), and
sends pigeonhole consequences to consequences of the pigeonhole principle
on the remaining \(m-t\) rows and \(n-2t\) columns. A normalized functional
\(\lambda\) for that smaller instance, which exists by
Theorem~\ref{thm:moments}, then gives \(\mu=\lambda\circ\Delta\).

\begin{theorem}[Row-linear separation through a larger PC degree]
\label{thm:row-separation}\lean{claims/CubeResidualDualSeparation.lean}
Let \(m\ge0\), \(\ell\ge2\), \(n=2^\ell\), and suppose
\(1\le k\le B\), \(4(k-1)<n\), and \(2B-1\le n\).
For every nonzero \(f\in\mathcal L_k\), there is a linear functional
\(\mu\) on unary polynomials through degree \(B\) such that
\[
 \mu(\Cons_B(\mathcal F^{\mathrm{fun}}_{m,n}))=0,
 \qquad \mu(\tau f)=1.
\]
It can be chosen with \(\mu(1)=0\) if \(\deg f>0\), and
\(\mu(1)=1\) if \(\deg f=0\). Consequently
\[
 \tau f\notin\Cons_B(\mathcal F^{\mathrm{fun}}_{m,n}),\qquad
 f\notin\Cons_B(\mathcal Q_{m,\ell}).
\]
\end{theorem}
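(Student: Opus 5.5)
Write \(t=\deg f\). By Lemma~\ref{lem:row-dimension}, \(f\) has a nonzero coefficient on some row-linear monomial of degree exactly \(t\). That monomial selects \(t\) distinct rows \(i_1,\dots,i_t\) and one bit \(c_j\) in each.

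\emph{Case \(t=0\).} Here \(f=1\) over \(\F\). Theorem~\ref{thm:moments} with \(k=0\) and \(N=n\ge 2B-1\) gives a normalized functional \(\lambda\) annihilating \(\Ideal_B(\mathcal F^{\mathrm{fun}}_{m,n})\). Corollary~\ref{cor:stability} identifies \(\Ideal_B\) with \(\Cons_B\), so \(\mu=\lambda\) works, with \(\mu(1)=1\).

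\emph{Case \(t\ge1\).} Since \(t\le k\) and \(4(k-1)<n\), we have \(4(t-1)<2^\ell\). Lemma~\ref{lem:row-cube} then supplies, for each selected row \(i_j\), a pair of labels differing only in bit \(c_j\), with the \(t\) pairs mutually disjoint. Let \(\Delta\) be the associated cube sum over \(\F^t\): one-hot substitutions on the selected rows, and deletion of the \(2t\) chosen columns in all other rows. Lemma~\ref{lem:row-cube} gives two facts. First, \(\Delta\tau f\) is the coefficient of the chosen monomial, namely the constant \(1\). Second,
\[
\Delta\Ideal_B(\mathcal F^{\mathrm{fun}}_{m,n})\subseteq\Ideal_{B-t}(\mathcal F^{\mathrm{fun}}_{m-t,n-2t}).
\]
This uses \(t\le k\le B\). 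Also, \(\deg\Delta p\le\max\{\deg p-t,0\}\), so \(\Delta\) maps \(\Pol_{\le B}\) into \(\Pol_{\le B-t}\) of the residual variables.

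Next we need a normalized functional \(\lambda\) on the residual instance through degree \(B-t\). Theorem~\ref{thm:moments} requires \(n-2t\ge\max\{1,2(B-t)-1\}\). The second part holds because \(n-2t\ge 2B-1-2t=2(B-t)-1\). For the first part, \(n-2t\ge1\): we have \(t\le k\) and \(4(k-1)<n\), so \(2t\le 2k<n/2+2\le n\) for \(n\ge4\), and \(n\) is even, so \(n-2t\ge 1\) follows in the relevant range (\(n-2t\ge 2\) in fact). Set \(\mu=\lambda\circ\Delta\) on \(\Pol_{\le B}\). Then:
\begin{itemize}
\item \(\mu(\tau f)=\lambda(1)=1\).
\item \(\mu\) kills \(\Ideal_B(\mathcal F^{\mathrm{fun}}_{m,n})\), which equals \(\Cons_B\) by Corollary~\ref{cor:stability}; this needs \(n\ge 2B-1\), which is assumed.
\item \(\mu(1)=\lambda(\Delta 1)=0\), because Lemma~\ref{lem:cube-drop} gives \(\Delta p=0\) when \(\deg p<t\), and \(t\ge1\).
\end{itemize}
Lemma~\ref{lem:decoder-injective} shows \(\tau\) preserves degree, so \(\deg\tau f=t\le B\) and \(\mu(\tau f)\) is defined.

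\emph{Consequences.} Since \(\mu\) vanishes on \(\Cons_B(\mathcal F^{\mathrm{fun}}_{m,n})\) but \(\mu(\tau f)=1\), we get \(\tau f\notin\Cons_B(\mathcal F^{\mathrm{fun}}_{m,n})\). Lemma~\ref{lem:decoder} then gives \(f\notin\Cons_B(\mathcal Q_{m,\ell})\).

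The main point to check is that the residual generators line up under \(\Delta\), so that the inclusion above really lands in the smaller functional system, including the row-sum equations of unselected rows after column deletion. Lemma~\ref{lem:row-cube} asserts exactly this, so the remaining work is the parameter bookkeeping above.
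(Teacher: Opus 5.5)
Your proof is correct and follows the paper's argument essentially step for step. The constant case uses a normalized matching functional together with Corollary~\ref{cor:stability}; otherwise you take \(\mu=\lambda\circ\Delta\), with \(\Delta\) from Lemma~\ref{lem:row-cube} and \(\lambda\) a normalized residual functional from Theorem~\ref{thm:moments}, and you use Lemma~\ref{lem:decoder} for the bit statement, exactly as the paper does (the only cosmetic difference is that you get \(n-2t\ge1\) from \(2k<n/2+2\le n\), whereas the paper uses \(4\mid n\) to obtain \(k\le n/4\)).
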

\begin{proof}
A nonzero constant over \(\F\) is one, so that case follows from a
normalized matching functional. Otherwise put \(t=\deg f\) and choose a
nonzero degree-\(t\) coefficient. Lemma~\ref{lem:row-cube} supplies its
coordinate pairs and residual operator \(\Delta\), with \(\Delta\tau f=1\).
Because \(n\) is divisible by four, \(4(k-1)<n\) implies \(k\le n/4\).
Thus \(n-2t\ge1\), and
\[
 n-2t\ge2(B-t)-1.
\]
Theorem~\ref{thm:moments} supplies a normalized residual annihilator
\(\lambda\) through \(B-t\), even if there are fewer than \(B-t\)
remaining rows. Set \(\mu=\lambda\circ\Delta\).
To see that \(\mu\) annihilates \(\Cons_B(\mathcal F^{\mathrm{fun}}_{m,n})\),
note first that this space equals \(\Ideal_B(\mathcal F^{\mathrm{fun}}_{m,n})\)
by Corollary~\ref{cor:stability}, which applies since \(2B-1\le n\).
The first assertion of Lemma~\ref{lem:row-cube} gives
\(\Delta\Ideal_B(\mathcal F^{\mathrm{fun}}_{m,n})\subseteq
\Ideal_{B-t}(\mathcal F^{\mathrm{fun}}_{m-t,n-2t})\), and \(\lambda\)
vanishes on the latter space. Coefficient isolation gives
\(\mu(\tau f)=1\), whereas \(\Delta1=0\) gives \(\mu(1)=0\).
Finally apply Lemma~\ref{lem:decoder} for the bit consequence.
\end{proof}

\section{Excluding one complete affine extension level}\label{sec:affine}

The product construction belongs to the extension-Nullstellensatz approach
of Buss, Impagliazzo, Kraj\'i\v{c}ek, Pudl\'ak, Razborov, and Sgall
\cite{BIKPRS}. We use a complete one-level family and prove its required
properties explicitly. An accuracy-\(h\) block has a finite tuple of old
affine polynomials \(g_1,\ldots,g_s\), fresh coefficient variables
\(r_{ui}\), and product
\begin{equation}\label{eq:block}
 P=\prod_{u=1}^h\left(1-\sum_{i=1}^s r_{ui}g_i\right).
\end{equation}
Its complete axioms are all companions \(g_iP\) and all coefficient
Boolean equations \(r_{ui}^2-r_{ui}\). The product \(P\) itself is not an
axiom. Coefficients of different blocks are disjoint; every input depends
only on the old variables. Rank means the dimension of the affine input
span. A nonzero span is called proper if it does not contain one.

The key step is Lemma~\ref{lem:kernel}: one common polynomial serves all
proper high-rank blocks. Lemma~\ref{lem:restriction-ideal} supplies bounded
cofactors, and Lemma~\ref{lem:hybrid} uses them in weighted elimination.
The resulting exclusion is Theorem~\ref{thm:affine}.

\subsection{Affine linear algebra and ordinary restriction}

\begin{lemma}[Affine spans and coordinates]\label{lem:affine-coordinates}
\leangroup{\leanref{claims/AffineSystemLinearAlgebra.lean}{linear algebra}
  \leansep \leanref{claims/FiniteAffineSpan.lean}{affine spans}
  \leansep \leanref{claims/FiniteAffineCoordinates.lean}{coordinates}}
Let a finite affine system on \(\F^v\) have span \(S\).
Its common zero set is nonempty exactly when \(1\notin S\).
In that case it is an affine flat of codimension \(r=\dim S\), every
affine polynomial vanishing on it belongs to \(S\), and any basis
\(F_1,\ldots,F_r\) of \(S\) can be completed to invertible affine
coordinates. Both coordinate substitutions preserve ordinary degree.
If the zero set is empty, a constant linear combination of the inputs is one.
\end{lemma}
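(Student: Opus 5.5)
The plan is to reduce everything to linear algebra over \(\F\) on the space \(V\) of affine polynomials in \(x_1,\ldots,x_v\). This space is \(\F\cdot1\oplus W\), where \(W\) is the span of the variables. Since \(S\subseteq V\) is the span of the inputs \(g_1,\ldots,g_s\), the common zero set of the system equals the common zero set of \(S\). Equivalently, it is the zero set of any basis of \(S\).

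\textbf{Empty case and the converse.} First suppose \(1\in S\). Then \(1=\sum_ic_ig_i\) for constants \(c_i\), and no point satisfies all the \(g_i\); this is exactly the last assertion. Conversely, suppose \(1\notin S\). Then the projection \(\pi:V\to W\) that drops the constant term is injective on \(S\): if \(F\in S\) has \(\pi(F)=0\), then \(F\) is a constant, hence \(0\) or \(1\), hence \(0\). So a basis \(F_1,\ldots,F_r\) of \(S\) has linearly independent linear parts \(L_1,\ldots,L_r\). Extend these by linear forms \(L_{r+1},\ldots,L_v\) to a basis of \(W\), and put \(y_j=F_j\) for \(j\le r\) and \(y_j=L_j\) for \(j>r\). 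The map \(x\mapsto y\) is affine with invertible linear part, so it is a bijection of \(\F^v\). Its inverse is again affine, and each \(x_j\) is an affine polynomial in the \(y\)'s. In the \(y\)-coordinates the zero set is \(\{y_1=\cdots=y_r=0\}\), a flat of dimension \(v-r\) that is in particular nonempty. This proves the equivalence and the codimension claim.

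\textbf{Degree preservation.} Both coordinate substitutions send each variable to a polynomial of degree at most one. Hence each composite is a ring map whose degree inequality holds by Lemma~\ref{lem:substitution} with \(T=1\). The two substitutions are mutually inverse ring maps, since they are inverse on variables and extend uniquely. Applying the inequality in both directions gives equality of degrees, exactly as in Lemma~\ref{lem:decoder-injective}.

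\textbf{Vanishing affine forms.} Let \(A\) be an affine polynomial vanishing on the flat. Write it in \(y\)-coordinates as \(A=c_0+\sum_j c_jy_j\); this is still affine by degree preservation. Restricting to the flat, i.e. setting \(y_1=\cdots=y_r=0\), gives the function \(c_0+\sum_{j>r}c_jy_j\) on \(\F^{v-r}\). This function is identically zero, and a nonzero affine function over \(\F\) cannot vanish everywhere (evaluate at \(0\) and at the unit vectors). Hence \(c_0=0\) and \(c_j=0\) for \(j>r\), so \(A=\sum_{j\le r}c_jF_j\in S\).

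The only mildly delicate point is the injectivity of \(\pi\) on \(S\), which is exactly where \(1\notin S\) enters. Everything else is routine basis completion.
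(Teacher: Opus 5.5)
Your proof is correct and follows essentially the paper's route (split off the constant parts, use \(1\notin S\) to make the linear parts of a basis independent, complete to invertible affine coordinates, then read off the flat, the vanishing affine forms, and degree preservation), except that you get nonemptiness directly from the bijectivity of the coordinate change rather than from the paper's separate consistency argument for \(L_i(x)=-c_i\). One mislabel: your first paragraph shows that \(1\in S\) forces an empty zero set, which is the converse of the last assertion; the last assertion itself is the contrapositive of your second paragraph, so nothing is actually missing.
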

\begin{proof}
Write the inputs as \(g_i=c_i+L_i\). If a relation among their linear
parts had nonzero constant part, their span would contain one. Otherwise
all relations respect the right sides \(-c_i\); linear algebra then solves
\(L_i(x)=-c_i\). Equivalently, define the consistent functional on the
span of the \(L_i\) and extend it to the full dual space.
When \(1\notin S\), independence of the \(F_j\) implies independence
of their linear parts: a relation would be a constant in \(S\), hence
zero, and then all coefficients vanish. Complete those linear parts to a
basis and retain the constants of the \(F_j\). This gives an invertible
affine coordinate change. The zero flat has first \(r\) coordinates zero,
so an affine polynomial vanishing there is a linear combination of those
coordinates. Substitution and its inverse have degree at most one,
which proves degree preservation. Inconsistency is exactly the failed
relation already identified, normalized to give one.
\end{proof}

\begin{lemma}[Scalar cleanup and exact block degrees]\label{lem:cleanup}
\leangroup{\leanref{claims/FreshENSScalarCleanup.lean}{cleanup}
  \leansep \leanref{claims/FreshENSBlock.lean}{exact degrees}}
Over \(\F\), zero-span blocks, and unit-span blocks with \(h\ge1\), can
be removed by constant coefficient substitutions without increasing PC degree.
If a block has a genuine degree-one input, then
\(\deg P=2h\); each nonzero degree-one input has companion degree \(2h+1\).
In particular these degree equalities hold after discarding zero inputs in
a proper nonzero block. Without these hypotheses they are only upper bounds.
\end{lemma}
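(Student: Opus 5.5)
The lemma has two parts, and I will handle them separately: the cleanup of zero-span and unit-span blocks, and the exact degree computation. In both, the block's own coefficient variables are substituted by constants, and the refutation is replayed through Lemma~\ref{lem:substitution}.

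\emph{Cleanup.} Take \(T=1\) and \(w=1\) in Lemma~\ref{lem:substitution}, and substitute constants for the coefficient variables of the block only. All other variables are left fixed. A constant substitution raises no degree, and every non-block axiom maps to itself. It therefore suffices to show that the image of each block axiom is zero or is derivable from the remaining axioms within its original degree.

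For a zero-span block every input \(g_i\) is the zero polynomial. Each companion \(g_iP\) is then literally zero whatever we substitute, so we set all \(r_{ui}=0\). For a unit-span block with \(h\ge1\), Lemma~\ref{lem:affine-coordinates} (or directly \(1\in S\)) gives constants \(c_i\in\F\) with \(\sum_ic_ig_i=1\). Substitute \(r_{1i}=c_i\) and \(r_{ui}=0\) for \(u\ge2\). The first factor becomes \(1-\sum_ic_ig_i=0\), so \(P\mapsto0\) and every companion maps to zero. In both cases the coefficient Boolean equations become \(c^2-c=0\), since \(c\in\{0,1\}\). The substituted derivation is thus a derivation from the remaining axioms of no larger degree, and its final line \(1\) is unchanged.

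\emph{Exact degrees.} Suppose some input \(g_{i_0}\) has a nonzero linear part, say containing the variable \(x\). I would isolate a top-degree monomial of \(P\). Write each factor as \(1-\sum_ir_{ui}g_i\); its degree-two homogeneous part is \(\sum_ir_{ui}L_i\), where \(L_i\) is the linear part of \(g_i\). The top homogeneous component of the product is the product of these quadratic parts, provided that product is nonzero. That holds because the coefficient variables of distinct factors are disjoint. The monomial \(\prod_ur_{ui_0}x\) appears with coefficient exactly one: it can only arise by choosing \(r_{ui_0}\) in every factor and then \(x\) from \(L_{i_0}\). Hence \(\deg P=2h\).

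For a nonzero degree-one input \(g_j\), the companion \(g_jP\) has top component \(L_j\cdot\prod_u(\sum_ir_{ui}L_i)\). This is a product of nonzero polynomials in a polynomial ring, which is an integral domain, so it is nonzero and \(\deg(g_jP)=2h+1\). In a proper nonzero block, discarding zero inputs leaves only inputs of degree exactly one, since a nonzero constant input would put \(1\) in the span. Some input remains because the span is nonzero, so the hypotheses hold.

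The only delicate point is the unit-span case. There I must check that the constants \(c_i\) exist in \(\F\) and that \(h\ge1\) is needed so that the first factor exists and can be killed; everything else is bookkeeping.
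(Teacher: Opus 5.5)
Your proposal is correct and follows the paper's proof. You use the same constant substitutions (all zeros, or a unit-span combination in the first coefficient row and zeros elsewhere), replayed through Lemma~\ref{lem:substitution} with multiplier one, and the same degree count from distinct coefficient variables and the integral-domain property. Your explicit top-monomial check for \(\deg P=2h\) is a concrete version of the paper's observation that each factor has a nonzero degree-two part.
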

\begin{proof}
For zero span set all coefficients to zero. For unit span choose
\(\sum_i c_i g_i=1\), use these constants in the first coefficient row,
and use zero in the other rows. The product and companions become zero;
over \(\F\), every coefficient satisfies \(c_i^2=c_i\).
Apply the substitution lemma with degree multiplier one to the remaining
axioms. For the degree assertion, a genuine linear part contributes a
nonzero degree-two term to each factor, since its coefficient variables
are distinct. The polynomial ring is a domain, so degrees add in the
product and on multiplication by a degree-one input.
\end{proof}

\begin{lemma}[Ordinary restriction dimension and ideal membership]
\label{lem:restriction-ideal}
\leangroup{\leanref{claims/OrdinaryRestrictionDimension.lean}{dimension}
  \leansep \leanref{claims/AffineRestrictionIdeal.lean}{ideal membership}}
An affine substitution into \(d\) variables maps a space
of polynomials of degree at most \(k\) to a space of dimension at most
\(\binom{d+k}{k}\). Suppose the common zero flat of a finite affine
system \(g_i\) is nonempty. If the ordinary polynomial restriction of
\(f\), of degree at most \(k\), to that flat is zero, then for \(k\ge1\)
\begin{equation}\label{eq:literal-ideal}
 f=\sum_i a_i g_i,\qquad \deg a_i\le k-1.
\end{equation}
Here restriction means substitution into affine free coordinates, not
merely evaluation at field points.
\end{lemma}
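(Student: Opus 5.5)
Both assertions follow from the affine coordinate change of Lemma~\ref{lem:affine-coordinates}, which preserves ordinary degree in both directions. This lets us transport everything to the case where the inputs are coordinate variables.

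\emph{Dimension bound.} Let \(\psi\) substitute affine polynomials \(\ell_1,\ldots,\ell_v\) in \(d\) free variables \(y_1,\ldots,y_d\) for the source variables. Expanding into monomials shows \(\deg\psi(p)\le\deg p\) for every \(p\). Hence \(\psi\) maps polynomials of degree at most \(k\) into the space of polynomials in \(y\) of degree at most \(k\). That space has the monomial basis \(y^\alpha\) with \(|\alpha|\le k\), of size \(\binom{d+k}{k}\). So the image of any subspace of degree-\(\le k\) polynomials has at most this dimension.

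\emph{Ideal membership.} Let \(S\) be the affine span of the \(g_i\), and set \(r=\dim S\). Since the zero flat is nonempty, \(1\notin S\). Choose a basis \(F_1,\ldots,F_r\) of \(S\) and complete it to invertible affine coordinates \(F_1,\ldots,F_r,y_1,\ldots,y_d\) using Lemma~\ref{lem:affine-coordinates}.

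1. Write \(f\) in the new coordinates as \(\tilde f(F,y)\). This is an ordinary polynomial of degree at most \(k\), because the inverse substitution has degree one.
2. The ordinary restriction of \(f\) to the flat is exactly \(\tilde f(0,y)\). This holds for any choice of free affine coordinates, since two such choices differ by an invertible affine map of the \(y\)'s. By hypothesis, \(\tilde f(0,y)=0\) as a polynomial.
3. Split each monomial of \(\tilde f\) according to its first \(F\)-variable. This gives \(\tilde f=\sum_{j=1}^r F_j\,\tilde c_j(F,y)\) with \(\deg\tilde c_j\le k-1\). For instance, take \(\tilde c_j\) to collect the monomials whose smallest-index \(F\)-variable is \(F_j\), divided by \(F_j\).
4. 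Substitute back to the old variables. Degree preservation gives \(f=\sum_j c_jF_j\) with \(\deg c_j\le k-1\), as a literal polynomial identity. No Booleanity is involved, because the coordinate change is an honest ring automorphism.
5. Each \(F_j\) lies in \(S\), so \(F_j=\sum_i\beta_{ji}g_i\) with constants \(\beta_{ji}\). The \(F_j\) are linear combinations of the \(g_i\) alone, with no constant term needed, because \(S\) is the span of the \(g_i\). Setting \(a_i=\sum_j\beta_{ji}c_j\) gives \eqref{eq:literal-ideal} with \(\deg a_i\le k-1\). When \(r=0\) the flat is the whole space and \(f=0\), so all \(a_i=0\) works; the requirement \(k\ge1\) only ensures that the bound \(k-1\) is nonnegative.

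\emph{Main obstacle.} The delicate point is step 2: the hypothesis must be read as vanishing of \(\tilde f(0,y)\) as a polynomial, not merely at the \(\F\)-points of the flat. With pointwise vanishing, \(y_1^2-y_1\) would be a counterexample. The argument must also check that this notion of restriction does not depend on the chosen completion of coordinates. I would settle this by noting that the restriction is the composite of \(f\) with an affine parametrization of the flat, and any two such parametrizations differ by an invertible affine change of the \(y\)'s. Such a change preserves the property of being the zero polynomial.
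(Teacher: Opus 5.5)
Your proposal is correct and follows the paper's proof. The dimension bound comes from the fact that affine substitution does not raise degree, together with the count of $\binom{d+k}{k}$ target monomials. Ideal membership uses the coordinates of Lemma~\ref{lem:affine-coordinates}: assign each monomial of $\tilde f$ to one $F_j$ it contains, pull back the coordinate change, and write each $F_j$ as a constant combination of the $g_i$. Your extra check that the restriction does not depend on the chosen affine parametrization of the flat is left implicit in the paper, and you argue it correctly.
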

\begin{proof}
Substitution does not increase degree. The target monomials of degree at
most \(k\), counted by \(\binom{d+k}{k}\), span the image.
For ideal membership use the coordinates of Lemma~\ref{lem:affine-coordinates}.
A polynomial whose restriction at \(F_1=\cdots=F_r=0\) is literally zero
has each monomial divisible by at least one \(F_j\). Assign each monomial
to one such factor and remove it. The resulting cofactors have degree at
most \(k-1\). Pull back the affine coordinate change and express each
\(F_j\) as a constant combination of the original inputs.
\end{proof}

\paragraph{Idea of the common kernel.}
The next lemma finds one row-linear \(f\) whose restriction to the zero flat
of every proper high-rank block is the zero polynomial. It is a dimension
count. Restriction to a flat of codimension \(r\) is a linear map from
\(\mathcal L_k\) into polynomials of degree at most \(k\) in \(v-r\) free
coordinates, a space of dimension \(\binom{v-r+k}{k}\). For
\(r>3\ell(k+1)\) this is smaller than \(\dim\mathcal L_k\ge\binom mk\ell^k\)
by a factor \(e^{-k^2/m}\). If \(M\) such maps are given and
\(Me^{-k^2/m}<1\), their joint kernel is nonzero.

\begin{lemma}[A common ordinary restriction kernel]\label{lem:kernel}
\leangroup{\leanref{claims/RestrictionKernelBounds.lean}{parameter bounds}
  \leansep \leanref{claims/CommonAffineRestrictionKernel.lean}{common kernel}}
Let \(m>0\), \(\ell\ge1\), \(h=3\ell\), \(M,k\ge1\), \(k\le m\), and
\[
 m\ln(4M)\le k^2.
\]
In a finite family of affine blocks in \(v=m\ell\) bits, suppose at most
\(M\) blocks are proper and have rank greater than \(h(k+1)\).
There is a nonzero \(f\in\mathcal L_k\) such that every block above
that rank threshold either has one in its input span or admits
\eqref{eq:literal-ideal}. In each proper high-rank block, the ordinary
restriction of \(f\) to its zero flat is zero.
\end{lemma}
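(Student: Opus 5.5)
The proof is a dimension count followed by an appeal to Lemma~\ref{lem:restriction-ideal}. Let \(\mathcal B\) be the set of proper blocks of rank \(r>h(k+1)\); by hypothesis \(|\mathcal B|\le M\). Blocks above the threshold whose span contains one satisfy the first alternative trivially, and blocks at or below the threshold impose no condition. For each \(C\in\mathcal B\), Lemma~\ref{lem:affine-coordinates} shows that the zero flat is nonempty of codimension \(r\) and supplies free affine coordinates in \(d=v-r\) variables. Ordinary restriction \(\rho_C\) (substitution of these coordinates) is then a linear map from \(\mathcal L_k\) into polynomials of degree at most \(k\) in \(d\) variables. By the first assertion of Lemma~\ref{lem:restriction-ideal}, its image has dimension at most \(\binom{v-r+k}{k}\). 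The joint kernel of all \(\rho_C\) therefore has dimension at least
\[
 \dim\mathcal L_k-\sum_{C\in\mathcal B}\binom{v-r_C+k}{k}
 \ \ge\ \dim\mathcal L_k-M\binom{v-h(k+1)-1+k}{k},
\]
using monotonicity in \(r\). If this is positive we get a nonzero \(f\in\mathcal L_k\) with \(\rho_C(f)=0\) for all \(C\in\mathcal B\). The second assertion of Lemma~\ref{lem:restriction-ideal} then gives \eqref{eq:literal-ideal} for each \(C\) (here \(k\ge1\)), which proves both conclusions.

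The main obstacle is the numerical inequality \(M\binom{v-h(k+1)+k-1}{k}<\dim\mathcal L_k\). For the right side I would use only the top term of \eqref{eq:row-linear-dimension}:
\[
 \dim\mathcal L_k\ge\binom mk\ell^k\ge\frac{\ell^k m^k}{k!}\prod_{j<k}\Bigl(1-\frac jm\Bigr),
\]
which is positive since \(k\le m\). For the left side, put \(d=v-r\le m\ell-3\ell(k+1)=\ell(m-3k-3)\). If \(d<0\) the block cannot exist, and if \(d\) is small the bound is trivial. Then
\[
 \binom{d+k}{k}\le\frac{(d+k)^k}{k!}\le\frac{(\ell m-3\ell k-3\ell+k)^k}{k!}\le\frac{\ell^k(m-2k)^k}{k!},
\]
using \(k\le\ell k\) and \(\ell\ge1\), and dropping the \(-3\ell\) term.

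The ratio of the two bounds is at most
\[
 \prod_{j<k}\frac{m-2k}{m-j}\le\Bigl(1-\frac km\Bigr)^k\le e^{-k^2/m}.
\]
This holds because \(m-2k\le m-k-j\le\ldots\); more precisely \((m-2k)/(m-j)\le(m-k)/m\) for \(j<k\), which I would verify by cross-multiplying. The hypothesis \(m\ln(4M)\le k^2\) gives \(e^{-k^2/m}\le1/(4M)\), so \(M\) times the image bound is at most \(\dim\mathcal L_k/4<\dim\mathcal L_k\). The slack factor \(4\) absorbs any loss in these crude steps. The care needed is in the edge cases: \(m-2k\le0\) or \(d<0\), where there are no high-rank proper blocks of that kind and the kernel is all of \(\mathcal L_k\neq0\), and the replacement of \(\ell^k\) factors when \(\ell=1\).
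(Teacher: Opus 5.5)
Your proposal is correct and follows the paper's proof essentially step for step. Both arguments bound the joint restriction map on \(\mathcal L_k\) the same way: \(\binom{d+k}{k}\) is compared against the top term \(\binom mk\ell^k\) using \(d+k\le\ell(m-2k)\) and \((1-k/m)^k\le e^{-k^2/m}\le 1/(4M)\), the case with no possible proper high-rank block is handled by \(f=1\), and Lemma~\ref{lem:restriction-ideal} finishes. The one difference is cosmetic: you compare the falling factorial \(m(m-1)\cdots(m-k+1)\) termwise, whereas the paper bounds it below by \((m-k+1)^k\).
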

\begin{proof}
Put \(r_*=3\ell(k+1)+1\): the extra one expresses the strict rank
inequality for integer ranks. Properness identifies rank with the
codimension of a nonempty zero flat (Lemma~\ref{lem:affine-coordinates}).
If \(r_*>v\), no proper high-rank block
exists and choose \(f=1\). Otherwise its restriction uses at most
\(d=v-r_*\) free coordinates. The checked dimension comparison uses
\[
 k!\binom{d+k}{k}\le(d+k)^k,\qquad
 k!\binom mk\ge(m-k+1)^k.
\]
Moreover \(d+k\le\ell(m-2k)\le\ell(m-k+1)(1-k/m)\).
The first inequality follows by inserting \(d=m\ell-3\ell(k+1)-1\)
and using \(\ell\ge1\); the second follows by expansion and \(k\le m\).
Thus
\[
 \frac{\binom{d+k}{k}}{\binom mk\ell^k}
 \le\left(1-\frac{k}{m}\right)^k\le e^{-k^2/m}.
\]
Consequently the sum of the restriction-image dimensions is at most
\[
 M e^{-k^2/m}\binom mk\ell^k
 \le\tfrac14\binom mk\ell^k<\dim\mathcal L_k.
\]
The joint restriction map has a nonzero kernel. Apply
Lemma~\ref{lem:restriction-ideal} to each proper high-rank block;
unit-span blocks need no restriction condition.
\end{proof}

\begin{corollary}[A convenient kernel parameter]\label{cor:kernel-parameter}
\leangroup{\leanref{claims/RestrictionKernelBounds.lean}{parameter bounds}
  \leansep \leanref{claims/CommonAffineRestrictionKernel.lean}{common kernel}}
For \(m,M\ge1\), the choice \(k=\lceil\sqrt{m\ln(4M)}\rceil\) is
positive and satisfies the square condition of Lemma~\ref{lem:kernel}.
Thus that lemma applies whenever additionally \(k\le m\).
\end{corollary}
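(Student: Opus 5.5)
The plan is to check the two claims of Corollary~\ref{cor:kernel-parameter} directly: positivity of \(k\), and the square condition \(m\ln(4M)\le k^2\). Then I will observe that the remaining hypotheses of Lemma~\ref{lem:kernel} are exactly what the corollary adds or already assumes.

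For positivity, since \(M\ge1\) we have \(4M\ge4\), so \(\ln(4M)\ge\ln4>0\). Together with \(m\ge1\) this gives \(m\ln(4M)>0\), hence \(\sqrt{m\ln(4M)}>0\). Its ceiling is therefore an integer at least \(1\), so \(k\ge1\).

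For the square condition, the defining property of the ceiling gives \(k\ge\sqrt{m\ln(4M)}\ge0\). Squaring is monotone on nonnegative reals, so \(k^2\ge m\ln(4M)\). This is exactly the inequality required in Lemma~\ref{lem:kernel}.

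Lemma~\ref{lem:kernel} also needs \(m>0\), \(M\ge1\), \(k\ge1\) and \(k\le m\); the remaining conditions \(\ell\ge1\), \(h=3\ell\) and the count of proper high-rank blocks belong to the lemma's setting rather than to the choice of \(k\). Here \(m>0\) and \(M\ge1\) are assumed, \(k\ge1\) was just shown, and \(k\le m\) is the additional hypothesis in the statement. So the lemma applies.

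There is no real obstacle. The only care needed, mainly for the formal version, is bookkeeping between the real number \(\sqrt{m\ln(4M)}\) and the natural number \(k\). In particular, the cast of \(k\) to the reals must commute with squaring, and positivity of the ceiling must use the strict inequality \(\sqrt{m\ln(4M)}>0\), not merely \(\ge0\).
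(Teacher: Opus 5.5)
Your proof is correct and follows the paper's argument: the radicand \(m\ln(4M)\) is positive, so its ceiling gives \(k\ge1\). Squaring \(k\ge\sqrt{m\ln(4M)}\ge0\) then gives the square condition of Lemma~\ref{lem:kernel}.
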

\begin{proof}
The radicand is positive; square the inequality defining the ceiling.
\end{proof}

\subsection{One simultaneous weighted substitution}

The two substitutions of this subsection were described in
Section~\ref{sec:overview}: a low-rank block is replaced exactly by the
indicator of its zero flat, a high-rank block by \(1-f\), and the whole
refutation is replayed with weight \(f\). The degree accounting below shows
that every axiom image is derivable from the old base through \(kD+\deg f\),
after which Lemma~\ref{lem:substitution} does the rest.

\begin{lemma}[Low-rank packing without a retained core]\label{lem:packing}
\lean{claims/LowRankENS.lean}
For affine inputs over \(\F\) of span rank \(r\le h(k+1)\), with
\(h,k\ge0\), there are coefficient polynomials of degree at most \(k\)
substituting the block product to
\[
 Z=\prod_{j=1}^r(1-F_j),
\]
where \(F_j\) is a basis of the input span. This polynomial has degree
at most \(r\), is the Boolean indicator of the common zero set, and
\(g_iZ\in\Ideal_{r+1}(\Bool)\) for every input. No properness assumption
is needed.
\end{lemma}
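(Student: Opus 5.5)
Choose a basis \(F_1,\ldots,F_r\) of the input span and partition the indices \(\{1,\ldots,r\}\) into \(h\) consecutive groups \(G_1,\ldots,G_h\), each of size at most \(k+1\); this is possible since \(r\le h(k+1)\), with empty groups allowed. The goal is to make factor \(u\) of the block product equal \(\prod_{j\in G_u}(1-F_j)\). Then
\[
 \prod_u\Bigl(1-\sum_i r_{ui}g_i\Bigr)\;\longmapsto\;\prod_u\prod_{j\in G_u}(1-F_j)=Z.
\]

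For a single group \(G\) of size \(q\le k+1\), expand
\[
 1-\prod_{j\in G}(1-F_j)=\sum_{j\in G}F_j\,c_j,\qquad c_j=\prod_{j'\in G,\,j'<j}(1-F_{j'}).
\]
This is the telescoping identity behind Corollary~\ref{lem:prefix}. Each \(c_j\) has degree at most \(q-1\le k\).

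Each basis element is a constant combination \(F_j=\sum_i\gamma_{ji}g_i\) of the inputs. Substitute \(r_{ui}\mapsto\sum_{j\in G_u}\gamma_{ji}c_j\), which has degree at most \(k\). Then \(\sum_i r_{ui}g_i\) becomes \(1-\prod_{j\in G_u}(1-F_j)\) as an ordinary polynomial identity. An empty group gets zero coefficients and factor \(1\). This gives the substitution of \(P\) to \(Z\), and \(\deg Z\le r\).

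Next, \(Z\) is the Boolean indicator of the zero set. At a Boolean point each \(F_j\) takes a value in \(\F\), so \(Z\) is one exactly when all \(F_j\) vanish. That happens exactly when all \(g_i\) vanish, since the \(F_j\) and the \(g_i\) span the same space. The vanishing of \(g_iZ\) also holds without properness: if \(1\) lies in the span, the common zero set is empty, \(Z\) vanishes at every Boolean point, and the argument below still applies.

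For the companion certificate, \(g_iZ\) has degree at most \(r+1\). It vanishes at every Boolean point: where \(Z=1\) all inputs vanish, so \(g_i=0\), and elsewhere \(Z=0\). The second assertion of Lemma~\ref{lem:boolean} then places \(g_iZ\) in \(\Ideal_{r+1}(\Bool)\).

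The coefficient Boolean equations \(r_{ui}^2-r_{ui}\) are not among the claims of this statement. They are handled downstream via the last clause of Lemma~\ref{lem:boolean}, giving a certificate in \(\Ideal_{2k}\).

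The only delicate step is bookkeeping. The substitution must reproduce \(Z\) literally, not merely pointwise, because Lemma~\ref{lem:substitution} replays ordinary polynomials. The telescoping identity is literal, and expressing the \(F_j\) through the \(g_i\) by constants is also literal, so this goes through. The degree bound \(k\) on the coefficients is exactly why the groups have size at most \(k+1\).
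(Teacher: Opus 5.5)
Your proposal is correct and follows the paper's proof: the same partition of the basis into \(h\) bins of size at most \(k+1\), the same telescoping identity with each \(F_j\) rewritten as a constant combination of the inputs, and the same argument that \(g_iZ\) vanishes at Boolean points and so lies in \(\Ideal_{r+1}(\Bool)\) by Boolean reduction. The paper treats \(h=0\) separately (the rank bound then forces \(r=0\) and the product is one); your convention allowing empty groups covers this case implicitly.
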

\begin{proof}
Partition the basis into \(h\) ordered bins of size at most \(k+1\).
For each bin \(J\), use
\[
 1-\prod_{j\in J}(1-F_j)
   =\sum_{j\in J} F_j\prod_{t\in J,\ t<j}(1-F_t).
\]
Express each \(F_j\) as a constant combination of the inputs. Every
coefficient has at most \(k\) affine factors. Multiplying the resulting
bin products gives \(Z\). At every Boolean point it is one precisely
when the whole input span vanishes, so \(g_iZ\) vanishes everywhere and
has degree at most \(r+1\). Apply Boolean reduction. If \(h=0\), the
rank bound forces \(r=0\), all inputs are zero, and the empty product is one.
\end{proof}

\begin{lemma}[Weighted removal, including unit-span blocks]\label{lem:hybrid}
\lean{claims/AffineFamilyRemovalWithUnits.lean}
Let an old system \(\mathcal F\) in finitely many variables over \(\F\)
contain all old Boolean equations. Adjoin complete accuracy-\(h\) blocks
with finite affine input tuples and disjoint fresh coefficients, where
\(h,k\ge1\). Suppose this system has a degree-\(D\) ordinary-PC refutation,
\(D\ge2h+1\). Let \(f\) be any ordinary polynomial of degree \(s\le k\).
For every block of rank greater than \(h(k+1)\), suppose either its input
span contains one or it admits \eqref{eq:literal-ideal}. Then
\[
 f\in\Cons_{k(D+1)}(\mathcal F).
\]
No multilinearity or properness is required of \(f\) or of the blocks.
\end{lemma}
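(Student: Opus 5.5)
We define one substitution \(\phi\) of all coefficient variables by polynomials of degree at most \(k\) in the old variables, fix the weight \(w=f\), and apply the weighted replay of Lemma~\ref{lem:substitution} with \(T=k\) and target system \(\mathcal F\). The source refutation ends in the line \(1\), whose weighted image is \(f\). Its derivation therefore has degree at most \(kD+s\le k(D+1)\), which is the claim. What remains is to check that every source axiom \(F\) of degree at most \(D\) has its weighted image \(f\phi(F)\) in \(\Cons_{kD+s}(\mathcal F)\). Old axioms are their own images (\(\phi\) fixes old variables), and \(f\cdot F\) is derivable through \(s+\deg F\le kD+s\) by Lemma~\ref{lem:pc-product}.

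We choose \(\phi\) blockwise according to the block type.
\begin{itemize}
\item \emph{Low-rank blocks} (rank \(\le h(k+1)\)): use the substitution of Lemma~\ref{lem:packing}, so \(\phi(P)=Z\) and \(g_iZ\in\Ideal_{r+1}(\Bool)\). The coefficients are sums of products of constants \(\{0,1\}\)-combinations of affine forms; their Boolean images \(\phi(r_{ui})^2-\phi(r_{ui})\) lie in \(\Ideal_{2k}(\Bool)\) by Lemma~\ref{lem:boolean}.
\item \emph{High-rank blocks whose span contains one}: use the constant substitution of Lemma~\ref{lem:cleanup}, so \(\phi(P)=0\) and every image is zero or a constant Boolean equation.
\item \emph{Remaining high-rank blocks}: by hypothesis \(f=\sum_ia_ig_i\) with \(\deg a_i\le k-1\). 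Substitute \(r_{1i}\mapsto a_i\) and \(r_{ui}\mapsto0\) for \(u\ge2\). Then \(\phi(P)=1-f\), and the companion image is \(f\,g_i(1-f)=-g_i(f^2-f)\). Since \(f^2-f\in\Ideal_{2s}(\Bool)\), this lies in \(\Ideal_{2s+1}(\Bool)\) by Lemma~\ref{lem:pc-product}.
\end{itemize}
Booleanity images \(f(a_i^2-a_i)\) lie in \(\Ideal_{s+2k-2}(\Bool)\). Disjointness of the fresh coefficients makes these blockwise choices a single well-defined \(\phi\).

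The main obstacle is the degree accounting, which must fit inside \(kD+s\) using only \(D\ge2h+1\) and \(k,h\ge1\). First, \(2s+1\le 2k+1\le kD+s\) holds since \(kD\ge3k\). Second, \(s+2k\le kD+s\). Third, in the low-rank case the weighted companion \(f g_iZ\) has a certificate through \(s+r+1\le s+h(k+1)+1\). We must compare this with \(kD+s\ge k(2h+1)+s\), and indeed \(h(k+1)+1\le 2hk+k\) for \(h,k\ge1\).

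We must also verify the caveat that a companion is only a source axiom when its degree is at most \(D\), so only admitted axioms need images. This matters because Lemma~\ref{lem:substitution} only requires images of axioms of degree at most \(D\); since those images are certified outright, no additional ceiling argument is needed. A careful point is that the Boolean certificates are taken in \(\Ideal\), which lies in \(\Cons\) of the same degree by Lemma~\ref{lem:pc-product}, and that \(\mathcal F\) contains \(\Bool\).
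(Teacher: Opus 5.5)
Your proposal is correct and matches the paper's proof essentially step for step: one simultaneous substitution of degree at most \(k\), namely packing (Lemma~\ref{lem:packing}) for low rank, a constant substitution for high-rank blocks whose span contains one, and the cofactors of \eqref{eq:literal-ideal} in the first row (zero elsewhere) for the remaining high-rank blocks, followed by the weighted replay of Lemma~\ref{lem:substitution} with \(T=k\) and \(w=f\). Your certificate degrees \(s+h(k+1)+1\), \(2s+1\), \(s+2k\) and \(s+e\) are the paper's, and your checks, such as \(h(k+1)+1\le k(2h+1)\) for \(h,k\ge1\), put every axiom image under the common ceiling \(kD+s\le k(D+1)\).
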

\begin{proof}
Make one simultaneous substitution fixing the old variables, with all
coefficient images of degree at most \(k\). There are three cases.
\begin{enumerate}
\item \emph{Low rank, \(r\le h(k+1)\).} For such a block use
Lemma~\ref{lem:packing}. Its weighted companions have Boolean certificates
through \(s+r+1\le s+h(k+1)+1\le s+kD\).
\item \emph{High rank, inconsistent inputs.} One belongs to the input
span; use the constant substitution making the product zero.
\item \emph{High rank, proper inputs.} Use the cofactors of
\eqref{eq:literal-ideal} in the first row and zero elsewhere. Its product
becomes \(1-f\), and a weighted companion becomes
\[
 fg_i(1-f)=-g_i(f^2-f).
\]
Its Boolean certificate costs at most \(2s+1\le s+kD\).
\end{enumerate}
A coefficient image \(\beta\) has weighted Boolean equation
\(f(\beta^2-\beta)\), certified through \(s+2k\le s+kD\).
An old axiom of degree \(e\le D\) has weighted image certified through
\(s+e\le s+kD\). These estimates initialize every source axiom used
by the refutation at the common ceiling \(kD+s\), irrespective of any
companion whose actual degree is below its general upper bound.
Lemma~\ref{lem:substitution}, with \(T=k\) and weight \(f\), replays the
whole primitive derivation. The final image of one is \(f\).
\end{proof}

\begin{theorem}[Finite-parameter affine exclusion]\label{thm:affine}
\lean{claims/AffineFamilyExclusion.lean}
Let \(m>0\), \(\ell\ge2\), \(n=2^\ell\), \(h=3\ell\), and let a finite
complete old-affine family be adjoined to \(\mathcal Q_{m,\ell}\).
Suppose \(M,k\ge1\), at most \(M\) blocks are proper and of rank greater
than \(h(k+1)\), and, with \(B=k(D+1)\),
\begin{equation}\label{eq:affine-room}
 m\ln(4M)\le k^2,\quad k\le m,\quad D\ge2h+1,\quad
 2B-1\le n,\quad4(k-1)<n.
\end{equation}
Then the augmented system has no ordinary-PC refutation through degree \(D\).
\end{theorem}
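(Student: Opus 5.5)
Argue by contradiction: assume a degree-\(D\) ordinary-PC refutation of \(\mathcal Q_{m,\ell}\) augmented by the given complete family. Then chain the three components already proved.

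First, produce the common kernel. The hypotheses of Lemma~\ref{lem:kernel} hold: \(m>0\), \(\ell\ge2\ge1\), \(h=3\ell\), \(M,k\ge1\), \(k\le m\), \(m\ln(4M)\le k^2\), and at most \(M\) blocks are proper of rank above \(h(k+1)\). The lemma gives a nonzero \(f\in\mathcal L_k\). Every block of rank greater than \(h(k+1)\) then either has one in its input span or admits the bounded-cofactor identity \eqref{eq:literal-ideal}. Blocks with zero span have rank \(0\), so they fall under the low-rank case and need no separate treatment.

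Second, apply weighted removal. The old system \(\mathcal Q_{m,\ell}\) contains \(\Bool_b\), the blocks have accuracy \(h\ge1\) with disjoint fresh coefficients, and \(k\ge1\), \(D\ge2h+1\). The polynomial \(f\) has degree \(s\le k\), and the dichotomy for high-rank blocks is the one just established. Lemma~\ref{lem:hybrid} therefore yields \(f\in\Cons_{k(D+1)}(\mathcal Q_{m,\ell})=\Cons_B(\mathcal Q_{m,\ell})\).

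Third, reach the contradiction with Theorem~\ref{thm:row-separation}. Its hypotheses are \(m\ge0\), \(\ell\ge2\), \(1\le k\le B\), \(4(k-1)<n\) and \(2B-1\le n\). The inequality \(k\le B=k(D+1)\) holds since \(D\ge0\). The theorem gives \(f\notin\Cons_B(\mathcal Q_{m,\ell})\) for every nonzero \(f\in\mathcal L_k\), contradicting the previous step. Hence no degree-\(D\) refutation exists.

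The mathematics is all in the cited lemmas, so the main obstacle is bookkeeping at the interfaces. The notion of "proper high-rank block" counted by \(M\) must match the one used in Lemma~\ref{lem:kernel}, including the convention that rank is the dimension of the affine span. The kernel \(f\) must lie in \(\mathcal L_k\) with \(\deg f\le k\), as both Lemma~\ref{lem:hybrid} and Theorem~\ref{thm:row-separation} require. Finally, \(B\) must be the same integer in both places. If some block is degenerate, for example has empty input tuple, I would first remove it by Lemma~\ref{lem:cleanup}, which does not increase the PC degree, before invoking the chain above.
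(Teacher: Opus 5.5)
Your proposal is correct and is the paper's own argument: Lemma~\ref{lem:kernel} supplies a nonzero \(f\in\mathcal L_k\), Lemma~\ref{lem:hybrid} would then place \(f\) in \(\Cons_B(\mathcal Q_{m,\ell})\), and Theorem~\ref{thm:row-separation} rules this out. The preliminary cleanup via Lemma~\ref{lem:cleanup} is unnecessary but harmless, since, as the paper remarks, zero-span and unit-span blocks are already handled inside Lemma~\ref{lem:hybrid}.
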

\begin{proof}
The common-kernel lemma supplies nonzero \(f\in\mathcal L_k\).
Weighted removal would derive \(f\) from the old base through \(B\),
contrary to Theorem~\ref{thm:row-separation}. Zero and unit spans are
already covered by removal, so no preliminary change of the family is needed.
\end{proof}

\begin{lemma}[Uniform eventual parameter room]\label{lem:parameters}
\lean{claims/AffineParameterBounds.lean}
Fix natural numbers \(a,A,d\). Put \(n=2^\ell\), \(t=\ell+1\),
\(q=\sqrt2\), and \(k=\lceil t q^\ell\rceil\).
For every sufficiently large \(\ell\), simultaneously
\[
 \ell\ge2,\quad 1\le k\le n+1,\quad
 (n+1)\ln\bigl(4(A2^{a\ell}+1)\bigr)\le k^2,
\]
\[
 4(k-1)<n,\qquad 2k\bigl(A(\ell+1)^d+1\bigr)-1\le n.
\]
The threshold depends only on \(a,A,d\).
\end{lemma}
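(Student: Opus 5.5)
The plan is to prove each of the five inequalities separately. Each one compares a polynomial in \(\ell\) against a power \(2^{c\ell}\), so each holds from some threshold \(\ell_j(a,A,d)\) on, and the threshold for the lemma is the maximum of these finitely many thresholds and \(2\). Throughout I will use the sandwich
\[
 (\ell+1)2^{\ell/2}\le k<(\ell+1)2^{\ell/2}+1,
\]
which follows from \(k=\lceil tq^\ell\rceil\) with \(q^\ell=2^{\ell/2}\). I will also use one elementary fact: for every fixed polynomial \(p\), \(p(\ell)\le2^{\ell/2}\) for all large \(\ell\). I would prove this fact by induction on \(\ell\) once the ratio \(p(\ell+1)/p(\ell)\) is below \(\sqrt2\), or simply cite \(\ell^c/2^{\ell/2}\to0\).

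\emph{Easy inequalities.} The bound \(k\ge1\) is immediate from the lower sandwich. For \(k\le n+1\) it suffices that \((\ell+1)2^{\ell/2}\le2^\ell\), that is, \(\ell+1\le2^{\ell/2}\), which holds for \(\ell\ge6\). For \(4(k-1)<n\), the upper sandwich gives \(4(k-1)<4(\ell+1)2^{\ell/2}\), and this is at most \(2^\ell\) as soon as \(4(\ell+1)\le2^{\ell/2}\), which holds for all \(\ell\ge12\) or so.

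\emph{The square condition.} The lower sandwich gives \(k^2\ge(\ell+1)^22^\ell\). For the left side I would bound
\[
 \ln\bigl(4(A2^{a\ell}+1)\bigr)\le\ln\bigl(4(A+1)\bigr)+a\ell\ln2,
\]
using \(A2^{a\ell}+1\le(A+1)2^{a\ell}\). Then, since \(n+1\le2\cdot2^\ell\), it suffices to show
\[
 2\bigl(\ln(4(A+1))+a\ell\bigr)\le(\ell+1)^2 .
\]
The left side is linear in \(\ell\) and the right side is quadratic, so this holds for all \(\ell\ge\ell_0(a,A)\). An explicit choice is \(\ell\ge2a+2\ln(4(A+1))\).

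\emph{The last inequality.} I expect this to be the only step needing a little care, since the polynomial factor \(A(\ell+1)^d\) must be beaten by the remaining factor \(2^{\ell/2}\). Using \(k\le(\ell+1)2^{\ell/2}+1\le2(\ell+1)2^{\ell/2}\), we get
\[
 2k\bigl(A(\ell+1)^d+1\bigr)-1\le4(A+1)(\ell+1)^{d+1}2^{\ell/2}.
\]
This is at most \(2^\ell\) once the polynomial \(p(\ell)=4(A+1)(\ell+1)^{d+1}\) satisfies \(p(\ell)\le2^{\ell/2}\), which is the elementary fact above applied to this \(p\). Every threshold obtained depends only on \(a,A,d\), so taking their maximum proves uniformity.
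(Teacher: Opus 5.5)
Your proof is correct and follows the paper's argument. It uses the same sandwich on \(k\), the same bound \(\ln(4(A2^{a\ell}+1))\le\ln(4(A+1))+a\ell\) reducing the square condition to a linear-versus-quadratic comparison, and the same fact \((\ell+1)^{d+1}/2^{\ell/2}\to0\) for the last inequality; the paper simply reads off \(k\le n+1\) and \(4(k-1)<n\) from that last bound instead of proving them separately.

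One small correction: a ratio \(p(\ell+1)/p(\ell)\) that is merely below \(\sqrt2\) only makes \(p(\ell)/2^{\ell/2}\) decrease, not drop below \(1\). You need the ratio to tend to \(1\), so that it eventually stays below some fixed \(c<\sqrt2\), or you can just cite the limit as you suggest.
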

\begin{proof}
The ceiling gives \(tq^\ell\le k\le2tq^\ell\).
Set \(C=\ln(4(A+1))\). Since \(\ln2\le1\),
\(\ln(4(A2^{a\ell}+1))\le C+a\ell\).
For sufficiently large \(\ell\), \(2(C+a\ell)\le t^2\).
Together with \(n+1\le2n\) and \(q^{2\ell}=n\), this proves the
square condition. For every fixed natural \(e\),
\(t^e/q^\ell\longrightarrow0\): its successive ratio tends to
\(1/q<1\), so its tail is dominated by a geometric sequence.
Applying this to \(e=d+1\) gives eventually
\[
 4k(At^d+1)\le8(A+1)t^{d+1}q^\ell<n.
\]
This implies the remaining degree and packing bounds, and also \(k\le n+1\).
\end{proof}

\begin{corollary}[Polynomial inventory and polylogarithmic degree]
\label{cor:polylog-exclusion}\lean{claims/AffineFamilyExclusion.lean}
Fix natural \(a,A,d\). For all sufficiently large \(\ell\), every finite
complete accuracy-\(3\ell\) affine family over \(\mathcal Q_{2^\ell+1,\ell}\)
with at most \(A2^{a\ell}+1\) proper blocks has no PC refutation of degree
\(D\le A(\ell+1)^d\).
\end{corollary}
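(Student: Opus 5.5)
The corollary follows by feeding the parameters of Lemma~\ref{lem:parameters} into Theorem~\ref{thm:affine}. Fix \(a,A,d\). Put \(m=n+1=2^\ell+1\), \(h=3\ell\), \(t=\ell+1\), and \(k=\lceil tq^\ell\rceil\) with \(q=\sqrt2\), exactly as in Lemma~\ref{lem:parameters}. Take \(M=A2^{a\ell}+1\ge1\). The number of proper blocks is at most \(M\), so a fortiori the number of proper blocks of rank greater than \(h(k+1)\) is at most \(M\).

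Suppose, for contradiction, that a PC refutation of degree \(D\le A(\ell+1)^d\) exists. Theorem~\ref{thm:affine} asks for \(D\ge2h+1\), which a small \(D\) may violate. So I would first observe that a degree-\(D\) refutation is also a refutation through any larger ceiling \(D'\ge D\), since a derivation with no line above \(D\) has none above \(D'\). Then apply the theorem with
\[
D'=\max\{D,\,6\ell+1\}.
\]
The remaining hypotheses of \eqref{eq:affine-room} are then checked for large \(\ell\).

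\begin{itemize}
\item \(m\ln(4M)\le k^2\) is the square condition of Lemma~\ref{lem:parameters}, since \(m=n+1\).
\item \(k\le m\) and \(k\ge1\) are the bound \(1\le k\le n+1\) of that lemma.
\item \(4(k-1)<n\) is stated there directly.
\item \(2B-1\le n\) with \(B=k(D'+1)\) is the subtle point. Lemma~\ref{lem:parameters} gives \(2k(A(\ell+1)^d+1)-1\le n\), but \(D'\) may exceed \(A(\ell+1)^d\) when the latter is smaller than \(6\ell+1\).
\end{itemize}

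To handle the last condition, I would apply Lemma~\ref{lem:parameters} with enlarged constants \(A'=\max\{A,7\}\) and \(d'=\max\{d,1\}\). Then
\[
A'(\ell+1)^{d'}\ge 7(\ell+1)\ge 6\ell+1,
\qquad
A'(\ell+1)^{d'}\ge A(\ell+1)^d,
\]
so \(D'\le A'(\ell+1)^{d'}\). The enlarged lemma also retains \(A'2^{a\ell}+1\ge M\) in the square condition, and \(\ln(4M)\) is monotone in \(M\), so the square condition continues to hold for our \(M\).

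One issue remains: the same \(k\) must serve both inequalities. The definition of \(k\) in Lemma~\ref{lem:parameters} does not depend on \(a,A,d\), so the \(k\) is unchanged. The threshold for \(\ell\) depends only on \(a,A',d'\), hence only on \(a,A,d\). With all conditions in place, Theorem~\ref{thm:affine} forbids a refutation through degree \(D'\), and therefore forbids one through degree \(D\).

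I expect the only real obstacle to be this bookkeeping around \(D\ge2h+1\) and the matching of \(B\) with the bound in Lemma~\ref{lem:parameters}; everything else is direct citation.
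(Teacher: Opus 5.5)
Your proposal is correct and follows the paper's proof. You raise the ceiling to \(D'=\max\{D,6\ell+1\}\), apply Lemma~\ref{lem:parameters} with enlarged constants (the paper uses \(a,A+7,d+1\); your \(a,\max\{A,7\},\max\{d,1\}\) works equally well), and then invoke Theorem~\ref{thm:affine} with that lemma's \(k\), which does not depend on the constants, and with the inventory bound \(M=A2^{a\ell}+1\).
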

\begin{proof}
Raise the proposed degree ceiling to \(D'=\max\{D,6\ell+1\}\).
It is at most \((A+7)(\ell+1)^{d+1}\). Apply
Lemma~\ref{lem:parameters} with constants \(a,A+7,d+1\), and use its
\(k\) and the upper inventory bound in Theorem~\ref{thm:affine}.
All required inequalities hold uniformly over the actual families.
\end{proof}

\section{A direct PC simulation of the proof DAG}\label{sec:clause-simulation}

Unless stated otherwise, \(h\ge1\) and there are finitely many old
variables, say \(v\). Represent a linear clause by its affine true-indicators:
\[
 C=\bigvee_i(g_{C,i}=1),\qquad
 Z(C)=\{b:g_{C,i}(b)=0\text{ for all }i\}.
\]
An equation \(a(b)=\alpha\) has true-indicator \(1+a(b)+\alpha\).
The set \(Z(C)\) is an affine subspace or empty.

Assign to each nonempty clause a fresh complete accuracy-\(h\) block,
and write its product as \(P_C\). Set \(P_\varnothing=1\), using no
block for the empty clause. \lean{claims/AffineClauseRegistry.lean} Fix the whole block family before following
the proof DAG. Clause values \(P_C=0\) will be derived; they are not
included among its axioms.

\begin{lemma}[General product telescoping]\label{lem:telescoping}
\lean{claims/MpTelescoping.lean}
For polynomials \(g_i\), \(r_{ui}\) with finitely many indices \(i\), and \(h\ge0\), put
\[
 P_v=\prod_{u=0}^{v-1}\left(1-\sum_i r_{ui}g_i\right),\qquad
 U_{h,i}=\sum_{v=0}^{h-1}r_{vi}P_v.
\]
Then \(1-P_h=\sum_i U_{h,i}g_i\).
If \(\deg g_i\le\delta\), \(\deg r_{vi}\le1\), and \(h\ge1\), then
\[
 \deg P_h\le h(\delta+1),\qquad
 \deg U_{h,i}\le1+(h-1)(\delta+1),\qquad
 \deg(g_iP_h)\le\deg g_i+h(\delta+1).
\]
These are upper bounds; no freshness or nonvanishing of the inputs is assumed.
\end{lemma}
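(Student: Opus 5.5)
The identity is a telescoping sum, and I will prove it by induction on \(h\), with the degree bounds following from the same recursion. Write \(L_v=\sum_i r_{vi}g_i\), so that \(P_{v+1}=P_v(1-L_v)\) and \(P_0=1\).

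For the identity, the base case \(h=0\) gives \(P_0=1\) and \(U_{0,i}=0\), so both sides vanish. For the inductive step, the recursion gives
\[
 1-P_{h+1}=(1-P_h)+P_h-P_h(1-L_h)=(1-P_h)+P_hL_h .
\]
Expanding \(P_hL_h=\sum_i r_{hi}P_hg_i\) shows that this increment is exactly \(\sum_i(U_{h+1,i}-U_{h,i})g_i\). The identity then follows from the inductive hypothesis. Equivalently, one can sum the increments \(P_v-P_{v+1}=P_vL_v\) for \(v=0,\ldots,h-1\) and regroup by \(i\). Only commutativity is used, so nothing depends on freshness or on nonvanishing of the inputs.

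For the degree bounds, I use \(\deg(ab)\le\deg a+\deg b\) and \(\deg(a+b)\le\max\{\deg a,\deg b\}\), with the convention \(\deg0=0\), which keeps all these inequalities valid.

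\begin{enumerate}
\item Each factor \(1-L_v\) has degree at most \(\delta+1\), since \(\deg(r_{vi}g_i)\le1+\delta\) and the constant \(1\) has degree \(0\). Therefore \(\deg P_v\le v(\delta+1)\), and in particular \(\deg P_h\le h(\delta+1)\).
\item Each summand of \(U_{h,i}\) satisfies \(\deg(r_{vi}P_v)\le1+v(\delta+1)\le1+(h-1)(\delta+1)\) for \(v\le h-1\). This is where \(h\ge1\) is needed, so that the sum is nonempty and the bound makes sense; for \(h=0\) the sum is \(0\) anyway.
\item Finally, \(\deg(g_iP_h)\le\deg g_i+\deg P_h\le\deg g_i+h(\delta+1)\).
\end{enumerate}

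There is no real obstacle here. The only point needing care is bookkeeping with the \(\deg0=0\) convention and the empty-sum case, both of which are covered by the subadditivity inequalities above.
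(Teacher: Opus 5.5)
Your proposal is correct and follows the paper's proof: both induct on \(h\) using the recurrences \(P_{h+1}=P_h(1-s_h)\) and \(U_{h+1,i}=U_{h,i}+r_{hi}P_h\) for the identity. Both then obtain the three degree bounds from the factor bound \(\delta+1\) together with subadditivity of degree under sums and products.
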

\begin{proof}
The identity holds at \(h=0\). Writing \(s_h=\sum_i r_{hi}g_i\), the
recurrences \(P_{h+1}=P_h(1-s_h)\) and
\(U_{h+1,i}=U_{h,i}+r_{hi}P_h\) give
\[
 1-P_{h+1}=1-P_h+P_hs_h=\sum_i U_{h+1,i}g_i.
\]
Each factor has degree at most \(\delta+1\), and each summand of
\(U_{h,i}\) has degree at most \(1+(h-1)(\delta+1)\).
The sum and product degree inequalities prove the three bounds. At \(h=0\), \(P_0=1\) and \(U_{0,i}=0\).
\end{proof}

\begin{corollary}[Affine-clause prefix identity]\label{lem:prefix}
\lean{claims/MpTelescoping.lean}
Every accuracy-\(h\) clause block, \(h\ge1\), has coefficients satisfying
\begin{equation}\label{eq:prefix}
 1-P_C=\sum_iU_{C,i}g_{C,i},\qquad
 \deg P_C\le2h,\qquad\deg U_{C,i}\le2h-1.
\end{equation}
Its companions have degree at most \(2h+1\).
\end{corollary}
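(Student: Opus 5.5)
This is a direct specialization of Lemma~\ref{lem:telescoping}. A clause block of accuracy \(h\) has product
\[
 P_C=\prod_{u=1}^{h}\Bigl(1-\sum_i r_{ui}g_{C,i}\Bigr).
\]
Reindex the factors by \(u-1\in\{0,\dots,h-1\}\) so that \(P_C=P_h\) in the notation of that lemma, and define
\[
 U_{C,i}=\sum_{v=0}^{h-1}r_{vi}P_v,
\]
where \(P_v\) is the product of the first \(v\) factors. The general identity \(1-P_h=\sum_iU_{h,i}g_i\) then gives \eqref{eq:prefix} verbatim. It holds as an identity of ordinary polynomials, with no Booleanity used.

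For the degree bounds I would read off the hypotheses of Lemma~\ref{lem:telescoping}. The coefficients \(r_{ui}\) are variables, so \(\deg r_{vi}\le1\). Each \(g_{C,i}\) is an affine true-indicator \(1+a(b)+\alpha\), so \(\deg g_{C,i}\le1\) and we may take \(\delta=1\). Since \(h\ge1\), the lemma gives
\begin{align*}
 \deg P_C&\le h(\delta+1)=2h,\\
 \deg U_{C,i}&\le 1+(h-1)\cdot2=2h-1,\\
 \deg\bigl(g_{C,i}P_C\bigr)&\le 1+2h.
\end{align*}
The last bound is the companion bound.

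I expect no real obstacle. The only point to check is that "affine" gives \(\delta=1\) even when an input is a constant, which is harmless because all bounds are upper bounds. Also, as Remark~\ref{rem:generality} stresses, equality of degrees is not claimed here, so Lemma~\ref{lem:cleanup} is not needed.
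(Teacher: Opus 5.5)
Your proposal is correct and is essentially the paper's own proof: specialize Lemma~\ref{lem:telescoping} with \(\delta=1\) after reindexing the coefficient rows from one-based to zero-based, then read off the three upper bounds. Your observation that only upper bounds are asserted, so Lemma~\ref{lem:cleanup} is not needed, also matches the paper's remark.
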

\begin{proof}
Apply Lemma~\ref{lem:telescoping} with \(\delta=1\), reindexing coefficient
rows from zero-based to one-based. Exact companion degrees require the
additional hypotheses in Lemma~\ref{lem:cleanup}; a loose input-degree
bound alone would not justify equality.
\end{proof}

\begin{lemma}[Complementary-parity resolution]\label{lem:clause-resolution}\lean{claims/AffineClauseResolution.lean}
Suppose \(A=(\bigvee_i(a_i=1))\vee(u=1)\) and
\(B=(\bigvee_j(b_j=1))\vee(1-u=1)\), with \(u\) affine.
Let \(T\) have the union of the context inputs \(a_i,b_j\).
If \(P_A,P_B\) have PC derivations through degree \(K\), then \(P_T\)
has a derivation in the same fixed block system through
\(\max\{K,4h+1\}\).
\end{lemma}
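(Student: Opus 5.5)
The plan is to follow the worked example of Section~\ref{sec:overview}, now carried out inside the product \(P_T\) so that the side literals are absorbed by the companions of \(T\). By Corollary~\ref{lem:prefix} applied to \(A\) and \(B\),
\[
 P_A=1-\sum_iU_{A,i}a_i-U_{A,u}\,u,\qquad
 P_B=1-\sum_jU_{B,j}b_j-U_{B,1-u}\,(1-u),
\]
with every \(U\) of degree at most \(2h-1\). The inputs \(a_i,b_j\) are exactly the inputs of \(T\), so every product \(a_iP_T\) and \(b_jP_T\) is a companion axiom of the fixed block system. If \(T\) is the empty clause, then \(P_T=1\), there are no companions, and no side literals occur.

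The first step derives the line \((1-u)P_T\), which says that the falsity indicator of the pivot literal of \(B\) vanishes wherever \(T\) is false. Apply Lemma~\ref{lem:pc-product} to the derived line \(P_A\) with multiplier \((1-u)P_T\), which has degree at most \(2h+1\). This gives \((1-u)P_TP_A\) at degree \(\max\{K,4h+1\}\). Expanding it yields
\[
 (1-u)P_TP_A=(1-u)P_T-\sum_iU_{A,i}(1-u)\,(a_iP_T)+U_{A,u}P_T\,(u^2-u).
\]
The middle terms are multiples of companions of degree at most \((2h-1)+1+(2h+1)=4h+1\). For the last term, write \(u=c+\sum_{j\in J}x_j\) over \(\F\). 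Then \(u^2-u=\sum_{j\in J}(x_j^2-x_j)\), since the cross terms \(2x_jx_{j'}\) and \(c^2-c\) vanish in characteristic two. So the last term is a sum of Boolean multiples of degree at most \((2h-1)+2h+2=4h+1\). Subtracting these legal axiom multiples, which lie in \(\Cons_{4h+1}\) by Lemma~\ref{lem:pc-product}, produces the line \((1-u)P_T\) within \(\max\{K,4h+1\}\).

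The second step uses \(B\). Multiply the derived line \(P_B\) by \(P_T\), reaching degree at most \(\max\{K,4h\}\), and write
\[
 P_T=P_TP_B+\sum_jU_{B,j}\,(b_jP_T)+U_{B,1-u}\,\bigl((1-u)P_T\bigr).
\]
The first summand is derived. The companion multiples have degree at most \(4h\). The last summand is the line from the first step multiplied by a polynomial of degree at most \(2h-1\), hence of degree at most \(4h\), and it is derivable by Lemma~\ref{lem:pc-product}. A linear combination gives \(P_T\) through \(\max\{K,4h+1\}\).

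The main obstacle is bookkeeping rather than ideas. First, every multiplication must act on an already completed line, as in Lemma~\ref{lem:pc-product}, so that \(K\) is never added to anything. Second, the pivot's Boolean correction must be shown to be a genuine generator combination at degree \(4h+1\), which holds because \(u\) is affine over \(\F\). Third, the argument must use only the upper degree bounds of Corollary~\ref{lem:prefix}, so that duplicated, zero, or repeated inputs (for instance an \(a_i\) coinciding with \(u\), or with some \(b_j\)) need no special treatment. Since all companions used belong to the block fixed in advance for \(T\), the derivation stays within the same fixed block system.
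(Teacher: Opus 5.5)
Your proposal is correct and follows essentially the paper's proof: you derive \((1-u)P_T\) from \(P_A\) at degree \(4h+1\), using the companions \(a_iP_T\) and the Boolean correction for \(u^2-u\), then multiply it by \(U_{B,1-u}\) and combine it with \(P_TP_B\) and the companions \(b_jP_T\) to recover \(P_T\). The only difference is cosmetic: you multiply \(P_A\) by \((1-u)P_T\) in one step, whereas the paper first forms the line \(R_A\) from \(P_AP_T\) and then multiplies it by \(1-u\).
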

\begin{proof}
The context companions \(a_iP_T,b_jP_T\) are axioms of the conclusion
block; a repeated input may use the same axiom.
The prefix identities yield
\begin{align}
 R_A&=P_T-U_{A,u}uP_T
       =P_AP_T+\sum_iU_{A,i}(a_iP_T),\label{eq:resolution-A}\\
 R_B&=P_T-U_{B,1-u}(1-u)P_T
       =P_BP_T+\sum_jU_{B,j}(b_jP_T).\label{eq:resolution-B}
\end{align}
Lemma~\ref{lem:pc-product} derives these through \(\max\{K,4h\}\).
In particular it multiplies the completed lines \(P_A,P_B\), not their
entire earlier derivations.

For an affine \(u=c+\sum_t c_tb_t\), the identity
\[
 u^2-u=\sum_t c_t(b_t^2-b_t)
\]
has a degree-two proof. Therefore
\begin{equation}\label{eq:pivot-correction}
 (1-u)R_A-U_{A,u}P_T(u^2-u)=(1-u)P_T
\end{equation}
is derivable through \(\max\{K,4h+1\}\).
The right side is a completed line of degree at most \(2h+1\).
Multiply it by \(U_{B,1-u}\), costing at most \(4h\), and add
\eqref{eq:resolution-B}. The result is \(P_T\).
All multiplications expand into primitive PC steps by
Lemma~\ref{lem:pc-product}.
\end{proof}

\begin{lemma}[Semantic weakening and tautologies]\label{lem:clause-weakening}\lean{claims/AffineClauseWeakening.lean}
If \(C\models D\) and \(P_C\) has a degree-\(K\) derivation, then
\(P_D\) is derivable through \(\max\{K,4h\}\).
A tautological \(D\) has a direct value derivation through \(2h+1\).
\end{lemma}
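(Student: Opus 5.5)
We derive \(P_D\) from the line \(P_C\) by the same multiply-by-the-conclusion pattern used in Lemma~\ref{lem:clause-resolution}. Write \(C=\bigvee_i(c_i=1)\) and \(D=\bigvee_j(d_j=1)\) through their true-indicators.

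\textbf{Reduction to affine containment.} First we translate \(C\models D\) into linear algebra. Every falsifying point of \(D\) is a falsifying point of \(C\), so \(Z(D)\subseteq Z(C)\).

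Suppose \(Z(D)\) is nonempty. Each \(c_i\) is an affine form vanishing on the flat \(Z(D)\). By Lemma~\ref{lem:affine-coordinates}, each \(c_i\) then lies in the span of the inputs \(d_j\), say \(c_i=\sum_j\gamma_{ij}d_j\) with \(\gamma_{ij}\in\F\).

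Suppose instead \(Z(D)\) is empty, so that \(D\) is a tautology. The same lemma gives constants \(\sum_j\epsilon_jd_j=1\). We then derive \(P_D=\sum_j\epsilon_j(d_jP_D)\) directly as a constant combination of companion axioms. This has degree at most \(2h+1\) by Corollary~\ref{lem:prefix}, which proves the tautology clause. (If \(D\) is the empty clause, it is never a tautology, so this case does not arise.)

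\textbf{Main derivation.} Assume now \(Z(D)\ne\varnothing\). By the prefix identity~\eqref{eq:prefix} for \(C\),
\[
 P_D=P_CP_D+\sum_iU_{C,i}\,c_iP_D=P_CP_D+\sum_i\sum_j\gamma_{ij}\,U_{C,i}\,(d_jP_D).
\]
The first term is obtained by multiplying the completed line \(P_C\) by \(P_D\). By Lemma~\ref{lem:pc-product} this stays within \(\max\{K,4h\}\), since \(\deg P_D\le2h\) and \(\deg P_C\le2h\). Each remaining term multiplies a companion axiom \(d_jP_D\), of degree at most \(2h+1\), by \(U_{C,i}\), of degree at most \(2h-1\). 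That product has degree at most \(4h\).

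A linear combination of these lines yields \(P_D\) within \(\max\{K,4h\}\). If \(D=\varnothing\), then \(P_D=1\) and the sums are empty. In that case \(Z(C)\supseteq Z(\varnothing)=\F^v\), so every \(c_i\) vanishes identically, i.e.\ \(c_i=0\) as affine forms. The identity then reads \(P_C=1\) after substituting \(c_i=0\) into the product, and the line \(P_C\) already is \(1\).

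\textbf{Main obstacle.} The one point needing care is the degree bookkeeping. We must multiply only the completed line \(P_C\), not its derivation, and must bound \(\deg(U_{C,i}\cdot d_jP_D)\) by \(4h\) using the upper bounds of Corollary~\ref{lem:prefix} alone. Exact degrees are not available without Lemma~\ref{lem:cleanup}, which is why the argument relies only on these upper bounds.
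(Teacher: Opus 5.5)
Your proposal is correct and follows essentially the same route as the paper's proof. You write each premise indicator as a constant combination of conclusion indicators, use the prefix identity $P_D=P_CP_D+\sum_iU_{C,i}(c_iP_D)$ with Lemma~\ref{lem:pc-product}, and in the empty-$Z(D)$ case use the constant combination of companions giving one; you also handle the empty conclusion exactly as the paper does, by noting that the premise indicators are then identically zero and the derived line $P_C$ is already $1$.
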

\begin{proof}
If \(Z(D)\) is nonempty, then \(Z(D)\subseteq Z(C)\).
Each \(g_{C,i}\) vanishes on \(Z(D)\), so affine linear algebra gives
constants \(\lambda_{ij}\) with
\[
 g_{C,i}=\sum_j\lambda_{ij}g_{D,j}.
\]
To see this directly, choose affine coordinates beginning with a basis
of the equations defining \(Z(D)\). An affine polynomial vanishing when
these coordinates are zero has only those coordinate terms.
Consequently \(g_{C,i}P_D\) is a constant linear combination of
conclusion companions. Now use
\[
 P_D=P_CP_D+\sum_iU_{C,i}(g_{C,i}P_D)
\]
and Lemma~\ref{lem:pc-product}, with ceiling \(\max\{K,4h\}\).

If \(Z(D)\) is empty, its affine equations are inconsistent, and their
span contains one. The same constant combination of their companions
gives \(P_D\), through \(2h+1\).
If the conclusion is empty, its value is one; the preceding formulas
still apply with no conclusion block. In the nonempty-zero-set case,
an implication to the empty clause forces all premise indicators to
be identically zero, so the already derived premise value is one.
\end{proof}

\begin{lemma}[Binary affine separator]\label{lem:binary-separator}
\lean{claims/BinaryAffineZeroCover.lean}
Let \(V\) be a vector space over \(\F\), and \(A,B,D\) finite clauses
of affine true-indicators. Suppose \(A\wedge B\models D\), but neither
\(A\models D\) nor \(B\models D\). There is an indicator \(u\) already
in \(A\), taking both values on \(W=Z(D)\), such that
\[
 W\cap Z(A)=\{x\in W:u(x)=0\},\qquad
 W\cap Z(B)=\{x\in W:u(x)=1\}.
\]
\end{lemma}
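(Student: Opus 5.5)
We restrict everything to the flat \(W=Z(D)\) and argue there. Clauses are falsified on their zero sets, so \(A\wedge B\models D\) says \(Z(D)\subseteq Z(A)\cup Z(B)\), that is \(W=(W\cap Z(A))\cup(W\cap Z(B))\). The hypotheses \(A\not\models D\) and \(B\not\models D\) say that neither \(W\subseteq Z(A)\) nor \(W\subseteq Z(B)\). In particular \(W\) is nonempty, so it is an affine flat. The sets \(W_A=W\cap Z(A)\) and \(W_B=W\cap Z(B)\) are affine subflats of \(W\), possibly empty, and each is a proper subset of \(W\).

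The key step is the standard fact that over \(\F\) an affine flat \(W\) that is the union of two proper subflats \(W_A,W_B\) forces both to be complementary hyperplanes of \(W\). To see this, first note that neither can be empty, since otherwise the other equals \(W\). Each proper subflat has codimension at least one, so \(|W_A|,|W_B|\le|W|/2\). Their union has size \(|W|\), so both inequalities are equalities and \(W_A\cap W_B=\emptyset\). Hence \(W_A\) is an affine hyperplane of \(W\), and \(W_B=W\setminus W_A\) is its parallel complement. This counting argument is valid because \(V\) is a vector space over \(\F\). If the formal setting allows infinite-dimensional \(V\), we would first pass to the finite-dimensional span of the finitely many linear parts involved; the zero sets are preimages under that projection, so the argument is unaffected.

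It remains to produce \(u\) from \(A\). Since \(W_A\subsetneq W\), some indicator \(a\) of \(A\) is not identically zero on \(W\). Every indicator of \(A\) vanishes on \(W_A\), and an affine function on \(W\) vanishing on the hyperplane \(W_A\) is either zero on \(W\) or equal to the indicator of \(W\setminus W_A\), by affine linear algebra as in Lemma~\ref{lem:affine-coordinates} applied inside \(W\). Choose \(u=a\). Then \(u\) vanishes exactly on \(W_A\) within \(W\), and \(u=1\) exactly on \(W\setminus W_A=W_B\). Both values occur because both sets are nonempty.

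The main obstacle I expect is the hyperplane-union step, which requires a clean argument that properness forces codimension exactly one and disjointness. The cardinality count above handles this directly in the finite case, so I would present the finite reduction carefully and then apply the count.
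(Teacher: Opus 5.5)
Your proof is correct, but it takes a different route from the paper's. You first establish the global geometry. After reducing to finite dimension, a cardinality count shows that \(W\cap Z(A)\) and \(W\cap Z(B)\) are complementary affine hyperplanes of \(W\). You then show that any indicator of \(A\) not identically zero on \(W\) must equal the indicator of \(W\setminus Z(A)\) there.

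The paper argues pointwise and without any dimension count. It picks \(a\in W\) satisfying \(A\) and \(b\in W\) satisfying \(B\), and takes \(u\in A\), \(v\in B\) with \(u(a)=v(b)=1\); soundness then forces \(u(b)=v(a)=0\). If some \(x\in W\cap Z(B)\) had \(u(x)=0\), the point \(a-b+x\in W\) would satisfy both \(u=1\) and \(v=1\), contradicting \(A\wedge B\models D\). The fiber equalities then follow from the covering \(W\subseteq Z(A)\cup Z(B)\).

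The paper's argument uses only the identity \(g(a-b+x)=g(a)-g(b)+g(x)\). It needs no counting, no facts about sizes of flats, and no finite-dimensional reduction, so it works verbatim for an arbitrary \(V\), which is the generality of the statement and is convenient for formalization.

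Your approach costs the reduction step plus the counting lemma. The reduction is sound, but the clean phrasing uses the evaluation map \(x\mapsto(L_1(x),\dots,L_N(x))\) of the finitely many linear parts, equivalently the quotient of \(V\) by their common kernel. The ``span of the linear parts'' itself lives in the dual space. In exchange, your route makes explicit the structural picture that the two falsifying sets partition \(W\) into two parallel hyperplanes. The paper obtains this only implicitly, as a consequence of the fiber equalities.
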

\begin{proof}
Choose \(a\in W\) satisfying \(A\) and \(b\in W\) satisfying \(B\).
Soundness forces \(a\in Z(B)\) and \(b\in Z(A)\). Choose indicators
\(u\in A\), \(v\in B\) with \(u(a)=v(b)=1\); then
\(u(b)=v(a)=0\). Soundness also gives \(W\subseteq Z(A)\cup Z(B)\).
If \(x\in W\cap Z(B)\) had \(u(x)=0\), the point \(a-b+x\) would
belong to \(W\) and satisfy both \(u=v=1\), a contradiction. Here we
used only the affine identity \(g(a-b+x)=g(a)-g(b)+g(x)\).
Thus \(u=1\) on \(W\cap Z(B)\), while \(u=0\) on \(W\cap Z(A)\).
The covering inclusion gives the two reverse implications and hence the
fiber equalities. The witnesses \(a,b\) give both values.
\end{proof}

\begin{lemma}[Two-premise semantic inference]\label{lem:affine-cover}
\lean{claims/BinarySemanticAffineCover.lean}
Every sound inference \(A\wedge B\models D\)
between finite affine clauses admits a derivation using at most three
semantic weakening or complementary-parity resolution steps, introducing
at most two auxiliary clauses. It is either a one-premise weakening or
has the form
\[
 A\models D\vee(u=1),\qquad B\models D\vee(u=0),
 \qquad\frac{D\vee(u=1)\quad D\vee(u=0)}D,
\]
where \(u\) can be chosen from \(A\). Semantic implication is preserved
under any affine restriction of the underlying space.
\end{lemma}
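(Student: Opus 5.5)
The plan is a case split on whether one of the premises already implies the conclusion, followed by Lemma~\ref{lem:binary-separator} in the remaining case. If \(A\models D\) or \(B\models D\), a single semantic weakening step from that premise gives \(D\); this is the one-premise case. If \(Z(D)\) is empty, then \(D\) is tautological, so every premise implies it and we are again in this case. So assume \(Z(D)=W\) is nonempty and neither premise implies \(D\). Then Lemma~\ref{lem:binary-separator} supplies an indicator \(u\) occurring in \(A\) that takes both values on \(W\) and satisfies
\[
 W\cap Z(A)=\{x\in W:u(x)=0\},\qquad W\cap Z(B)=\{x\in W:u(x)=1\}.
\]

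Next I verify the two auxiliary weakenings. Read \(D\vee(u=1)\) as adding the indicator \(u\), so its falsifying set is \(W\cap\{u=0\}\). This set equals \(W\cap Z(A)\), which is contained in \(Z(A)\). Hence every point falsifying \(D\vee(u=1)\) also falsifies \(A\), that is, \(A\models D\vee(u=1)\).

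Similarly, \(D\vee(u=0)\) has indicator \(1+u\) added, so its falsifying set is \(W\cap\{u=1\}=W\cap Z(B)\subseteq Z(B)\). This gives \(B\models D\vee(u=0)\). A complementary-parity resolution on pivot \(u\), as in \eqref{eq:res-rules}, then yields \(D\). The derivation uses three steps and two auxiliary clauses, exactly as claimed.

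The final assertion, that semantic implication is preserved under affine restriction, is immediate. If \(\phi\) is an affine map and \(A\wedge B\models D\), take any point \(y\) satisfying \(A\circ\phi\) and \(B\circ\phi\). Then \(\phi(y)\) satisfies \(A\) and \(B\), hence satisfies \(D\), so \(y\) satisfies \(D\circ\phi\). Affine forms compose to affine forms, so the restricted clauses are again linear clauses.

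The step I expect to need the most care is the bookkeeping of the degenerate situations, rather than any new idea. These are an empty \(D\), where \(W\) is the whole space, and empty premises. In particular, the separator lemma's hypotheses must hold exactly when the one-premise case fails. The substantive content is already contained in Lemma~\ref{lem:binary-separator}.
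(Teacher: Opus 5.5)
Your proposal is correct and follows the paper's proof. It takes the one-premise weakening case, and otherwise applies Lemma~\ref{lem:binary-separator} to obtain \(u\) in \(A\) whose fiber equalities on \(Z(D)\) give exactly \(A\models D\vee(u=1)\) and \(B\models D\vee(u=0)\), then resolves on \(u\) and handles restriction by precomposition with the affine map. You spell out the falsifying-set bookkeeping that the paper leaves implicit: adding \((u=1)\) adds the indicator \(u\), and adding \((u=0)\) adds \(1+u\).
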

\begin{proof}
If either premise implies \(D\), use weakening. Otherwise apply
Lemma~\ref{lem:binary-separator}: on \(Z(D)\), satisfaction of \(A\)
forces \(u=1\), and satisfaction of \(B\) forces \(u=0\).
This proves the two weakenings; resolution concludes. Restriction simply
precomposes the clauses with an affine map and preserves every implication.
\end{proof}

\begin{lemma}[Clause compression]\label{lem:compression}
\lean{claims/AffineClauseCompression.lean}
On a finite-dimensional \(\F\)-vector space of dimension \(v\), every
finite affine clause has a semantically equivalent subclause of width at
most \(v+1\).
\end{lemma}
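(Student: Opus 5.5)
The clause is determined by its falsifying set \(Z(C)=\{g_{C,1}=\cdots=g_{C,s}=0\}\); it is satisfied exactly off this set. So the plan is to find a subfamily of the true-indicators with the same common zero set, and to keep it small by reasoning about the affine span of the indicators, as in Lemma~\ref{lem:affine-coordinates}. A subclause whose indicators cut out the same zero set is semantically equivalent to \(C\). There are two cases, according to whether \(Z(C)\) is empty.

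\emph{Consistent case.} Suppose \(Z(C)\ne\varnothing\). By Lemma~\ref{lem:affine-coordinates} the span \(S\) of the \(g_{C,i}\) does not contain one. Moreover, every affine polynomial vanishing on \(Z(C)\) lies in \(S\), and \(\dim S=r\) equals the codimension, so \(r\le v\).

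Choose a maximal linearly independent subfamily \(g_{C,i_1},\ldots,g_{C,i_r}\) among the inputs themselves. This is possible because the inputs span \(S\), so greedy selection yields a basis of inputs. Every remaining \(g_{C,i}\) is a constant combination of the chosen ones. Hence the zero set of the chosen indicators is contained in \(Z(C)\), and the reverse inclusion is immediate. The subclause \(\bigvee_{j}(g_{C,i_j}=1)\) therefore has width \(r\le v\).

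\emph{Inconsistent case.} Suppose \(Z(C)=\varnothing\), so \(C\) is a tautology. The span now contains one, and I would choose a subfamily whose span already contains one. Pick a basis of inputs for \(S\), which again is a subfamily, now of size \(\dim S\).

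Here I need to bound \(\dim S\). The linear parts of the inputs live in the dual space of dimension \(v\). The map from \(S\) to linear parts has kernel contained in the constants, of dimension at most one. Hence \(\dim S\le v+1\).

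The chosen basis spans \(S\ni 1\), so its common zero set is empty and the subclause is also a tautology. It is therefore equivalent to \(C\) and has width at most \(v+1\). In the degenerate case of an empty clause (with \(S=0\)), the clause itself already works.

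\emph{Main obstacle.} The only delicate point is bounding the span dimension by \(v+1\), not \(v\), in the inconsistent case, and checking that selecting a basis from among the inputs (rather than arbitrary span elements) is legitimate. Both follow from standard finite-dimensional linear algebra over \(\F\). I expect no real obstacle beyond bookkeeping the two cases and the empty clause.
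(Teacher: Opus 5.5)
Your proof is correct and follows the paper's own argument: extract a basis of the span from the clause's own indicators, and bound its size by \(v+1\), the dimension of the space of affine forms. The case split is not needed, because your consistent-case reasoning (every remaining indicator is a constant combination of the chosen ones, so the zero sets coincide) applies verbatim to tautologies, which is how the paper treats both at once; the split only gives the sharper width \(r\le v\) when \(Z(C)\) is nonempty.
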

\begin{proof}
The space of affine forms has dimension \(v+1\). Extract a basis of the
span from the clause's own indicators. Simultaneous vanishing of the basis
is equivalent to vanishing of the entire span, hence gives the same
falsifying set. This works for tautologies as well as consistent systems
and introduces no new indicators.
\end{proof}

\paragraph{Idea of the registry theorem.}
The lemmas above simulate one inference at a time, given blocks for its
premises and its conclusion. A polynomial calculus refutation needs all of
its axioms fixed in advance, so the next theorem first reserves a block for
every clause that can occur: one for each node, two for the auxiliary
clauses that Lemma~\ref{lem:affine-cover} may introduce at that node, and
one for each supplied initial clause. It then walks through the DAG in
topological order. Because each step uses only the already derived lines
\(P_A\), \(P_B\) and costs at most \(4h+1\), the degree of the whole
derivation is the maximum, not the sum, of the local costs.

\begin{theorem}[A fixed registry for a finite proof DAG]\label{thm:dag-registry}
\lean{claims/AffineDAGRegistry.lean}
Consider an \(S\)-node affine-clause DAG on \(v\) old bits, using semantic
weakening, complementary resolution, or sound binary inference. Suppose
there is a finite supplied family \((A_j)_{j\in J}\), of widths at most
\(w\), such that each permitted initial clause is implied by some \(A_j\).
There is one complete accuracy-\(h\) registry with
\[
 N_{\mathrm{reg}}=3S+|J|\text{ slots},\qquad W=\max\{w,v+2\},
\]
at most \(N_{\mathrm{reg}}W\) inputs and at most \(v+hN_{\mathrm{reg}}W\) variables. If the supplied
initial values have PC derivations in this registry through
\(D\ge4h+1\), then every compressed source value has a derivation through
that same \(D\). An empty final clause yields a refutation.
\end{theorem}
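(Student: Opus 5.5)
We first fix the registry before looking at any inference, and then walk the DAG in topological order, deriving the value \(P_C\) of each compressed clause from already derived values.

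\emph{Registry construction.} For each node \(x\) with clause \(C_x\), Lemma~\ref{lem:compression} gives an equivalent subclause \(\overline{C_x}\) of width at most \(v+1\). We reserve one slot for \(\overline{C_x}\). If \(x\) is derived by a sound binary inference that is not a one-premise weakening, Lemma~\ref{lem:affine-cover} supplies the auxiliary clauses \(D\vee(u=1)\) and \(D\vee(u=0)\), with \(D=\overline{C_x}\). Each has width at most \(v+2\), and we reserve one slot for each. Otherwise the two extra slots stay unused or receive dummy blocks. Finally we reserve one slot per supplied clause \(A_j\), of width at most \(w\). This gives \(3S+|J|\) slots, each with at most \(W=\max\{w,v+2\}\) inputs. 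Each input contributes \(h\) coefficient variables, which yields the stated bounds on inputs and variables. Empty clauses get no block, and \(P_\varnothing=1\).

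\emph{Simulation.} We induct along a topological order. The invariant is that \(P_{\overline{C_x}}\) has a derivation through \(D\).
\begin{itemize}
\item \emph{Initial node.} Some \(A_j\) implies \(C_x\), hence implies \(\overline{C_x}\). The supplied value \(P_{A_j}\) is derivable through \(D\) by hypothesis, and Lemma~\ref{lem:clause-weakening} gives \(P_{\overline{C_x}}\) through \(\max\{D,4h\}=D\).
\item \emph{Weakening node.} Semantic equivalence of compressions gives \(\overline{C_y}\models\overline{C_x}\), and Lemma~\ref{lem:clause-weakening} applies directly.
\item \emph{Complementary resolution node.} The premises are \(A\vee(u=0)\) and \(B\vee(u=1)\). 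Their compressions need not keep the pivot literally. We therefore route the step through Lemma~\ref{lem:affine-cover} applied to the compressed premises and compressed conclusion, which is still a sound binary inference. It produces either a single weakening or two weakenings into the auxiliary slots followed by a resolution. That resolution has exactly the syntactic shape required by Lemma~\ref{lem:clause-resolution}: the context is \(D\) on both sides and the conclusion block has inputs equal to the union of contexts. It is handled within \(\max\{D,4h+1\}=D\).
\item \emph{Sound binary inference node.} This is treated identically.
\end{itemize}
Each step multiplies only completed lines (Lemma~\ref{lem:pc-product}), so the ceiling never grows. When the final clause is empty, its value \(1\) has been derived, which is a refutation.

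\emph{Main obstacle.} The main difficulty is bookkeeping in the resolution step. Lemma~\ref{lem:clause-resolution} requires the conclusion block's inputs to be the union of the context inputs, with the pivot indicators \(u\) and \(1-u\) present in the premise blocks. Compression and the auxiliary clauses must be arranged so that these literal input tuples match the reserved blocks. Stated as a union, the conclusion \(D\vee D\) equals \(D\) up to repeated inputs, and the remark there allows a repeated input to reuse the same companion. The premises \(D\vee(u=1)\) and \(D\vee(u=0)\) carry \(u\) as a literal input, which is why those slots have width up to \(v+2\) rather than \(v+1\).

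We would also check two degenerate cases. First, a premise may be tautological; Lemma~\ref{lem:clause-weakening} derives its value directly through \(2h+1\). Second, an auxiliary clause may coincide with an existing one. We simply use its reserved slot anyway, since the blocks are independent and extra blocks do no harm.
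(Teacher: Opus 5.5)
Your proposal is correct and follows the paper's proof. The shared structure is: compress every node clause, reserve one primary and two auxiliary slots per node plus one per supplied clause, fix all complete blocks first, and then simulate each planned weakening or resolution in topological order using only completed premise lines. Your point that resolution nodes must also go through Lemma~\ref{lem:affine-cover}, because compression may drop the pivot, matches the paper's choice to take the plan for every node from that lemma.
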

\begin{proof}
Compress each source clause to width at most \(v+1\), preserving its
semantics, and choose the plans from Lemma~\ref{lem:affine-cover}.
Reserve one primary and two auxiliary slots per source node, plus one
slot per supplied initial. An auxiliary clause adds one pivot indicator
to a primary clause, so has width at most \(v+2\). Unused slots can
contain the empty clause and contribute no inputs or coefficients.
Each input has \(h\) distinct coefficient variables, giving the inventory.
Fix these slots and all their complete axioms first. In a topological
order, simulate each planned weakening or resolution with
Lemmas~\ref{lem:clause-resolution} and \ref{lem:clause-weakening}.
Initial coverage is another weakening. Reuse completed premise values;
the degree ceiling never acquires a factor depending on proof height.
\end{proof}

Fresh unconstrained propositional variables may be fixed to zero before
this construction. Semantic implication survives restriction; a restricted
constant pivot is a semantic weakening. This does not authorize extension
axioms defining additional predicates.

\begin{theorem}[Complete clause-to-PC transfer]\label{thm:clause-transfer}\lean{claims/BitPHPClauseTransfer.lean}
For arbitrary \(m,\ell\ge0\), a refutation of the usual bit-PHP CNF
on \(m\) rows and \(n=2^\ell\) labels with \(S\) nodes yields, for every
integer \(h\ge1\), a complete one-level old-affine family over
\(\mathcal Q_{m,\ell}\) with at most \(3S+\binom m2\) blocks and an ordinary-PC
refutation through
\begin{equation}\label{eq:simulation-degree}
 D=\max\{2h+\ell,4h+1\}.
\end{equation}
With \(v=m\ell\), \(N_{\mathrm{reg}}=3S+\binom m2\), and
\(W=\max\{\ell,v+2\}\), the total input count is at most \(N_{\mathrm{reg}}W\),
and the total variable count is at most \(v+hN_{\mathrm{reg}}W\). There is no proof-height restriction.
\end{theorem}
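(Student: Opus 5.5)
The plan is to apply Theorem~\ref{thm:dag-registry} with the initial clauses of \eqref{eq:bit-php} as the supplied family, and then to check that every supplied initial value has a PC derivation of degree \(2h+\ell\) in the registry. Group the initial clauses by the pair \(i<i'\). The supplied family \(A_{ii'}\), indexed by \(J=\{i<i'\}\), should have \(|J|=\binom m2\), and each permitted initial clause must be semantically implied by some \(A_{ii'}\). The natural choice is the \(\ell\)-input clause
\[
 A_{ii'}=\bigvee_{t\in[\ell]}(b_{it}+b_{i't}=1),
\]
whose falsifying flat is \(\{b_i=b_{i'}\}\). Every clause of \eqref{eq:bit-php} for the pair and any label \(z\) is falsified only on the subset \(\{b_i=b_{i'}=z\}\) of that flat. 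So \(A_{ii'}\) implies it, and the width is \(w=\ell\). The other option is to use the \(n\) clauses themselves, but that inflates \(|J|\) to \(n\binom m2\). The stated count \(3S+\binom m2\) suggests the grouped choice, so I would commit to it. Theorem~\ref{thm:dag-registry} then gives \(N_{\mathrm{reg}}=3S+\binom m2\), \(W=\max\{\ell,v+2\}\), and the stated input and variable inventories.

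The key step is deriving \(P_{A_{ii'}}\) from \(\mathcal Q_{m,\ell}\) together with the block axioms, within degree \(2h+\ell\). Write \(g_t=b_{it}+b_{i't}\) for the true-indicators. Corollary~\ref{lem:prefix} gives \(1-P_A=\sum_tU_{A,t}g_t\), with \(\deg U\le 2h-1\). Expand \(E_{ii'}=\prod_t(1-g_t)\), since \(1-b_{it}-b_{i't}=1-g_t\) over \(\F\). This gives \(E_{ii'}\equiv 1\) modulo the ideal generated by the \(g_t\), explicitly
\[
 1-E_{ii'}=\sum_t g_t\prod_{s<t}(1-g_s).
\]
Then
\[
 P_A=E_{ii'}P_A+\sum_t\Bigl(\prod_{s<t}(1-g_s)\Bigr)(g_tP_A).
\]
The first term is a multiple of the axiom \(E_{ii'}\) of degree \(\ell+2h\). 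Each remaining term multiplies a companion of degree \(\le 2h+1\) by a polynomial of degree \(\le\ell-1\), giving degree \(\le 2h+\ell\). Lemma~\ref{lem:pc-product} converts this Nullstellensatz-style certificate into a PC derivation of degree \(2h+\ell\). For degenerate values of \(\ell\), or when some \(g_t\) is zero, the bounds remain upper bounds by Remark~\ref{rem:generality}, so nothing breaks.

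Taking \(D=\max\{2h+\ell,4h+1\}\), the hypothesis \(D\ge4h+1\) of Theorem~\ref{thm:dag-registry} holds and the initial values fit under \(D\). The registry theorem then derives every compressed source value through \(D\). The final empty clause has \(P_\varnothing=1\), which gives the refutation.

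Two bookkeeping points remain. First, the source refutation's initial clauses, possibly after compression, must be covered by the \(A_j\). Compression preserves semantics, so the weakening covers them. Second, if the source proof uses fresh variables, they are fixed to constants first, as remarked before the theorem. I expect the main obstacle to be the initial-clause certificate above: one has to write it so that no degree exceeds \(2h+\ell\), and check that the axiom form \(E_{ii'}\) from \eqref{eq:compact-base} matches the chosen supplied clause exactly. If \(m<2\), there are no pairs and no refutation exists, so the statement is vacuous.
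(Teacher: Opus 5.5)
Your proposal is correct and follows the paper's proof essentially step for step. It uses the same compact supplied clauses \(\bigvee_t(b_{it}+b_{i't}=1)\) of width \(\ell\), the same telescoping certificate \(P_A=E_{ii'}P_A+\sum_t\bigl(\prod_{s<t}(1-g_s)\bigr)(g_tP_A)\) through degree \(2h+\ell\), and the same appeal to Theorem~\ref{thm:dag-registry}, with the label-specific initial clauses covered by weakening.
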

\begin{proof}
\lean{claims/BitPHPInitialBridge.lean} Introduce the compact initial linear clauses
\[
 C_{ii'}=\bigvee_{t=1}^\ell(b_{it}+b_{i't}=1).
\]
Each \(C_{ii'}\) semantically implies each label-specific initial clause
for the same pair in \eqref{eq:bit-php}: both rows encoding the same
label falsifies \(C_{ii'}\).
Use these \(\binom m2\) clauses as the supplied family in
Theorem~\ref{thm:dag-registry}; their widths are at most \(\ell\).

For a compact initial clause set \(g_t=b_{it}+b_{i't}\) and
\(T_t=\prod_{s<t}(1-g_s)\).
Its value has the ordinary identity
\begin{equation}\label{eq:initial-value}
 P_{C_{ii'}}=E_{ii'}P_{C_{ii'}}
                 +\sum_{t=1}^{\ell}T_t(g_tP_{C_{ii'}}).
\end{equation}
The first term uses the old degree-\(\ell\) collision axiom;
the other terms use the block's companions.
Every multiple has degree at most \(2h+\ell\).
This is an NS, hence PC, derivation at that degree.

Apply Theorem~\ref{thm:dag-registry} in the source proof's acyclic order.
Initial values are obtained from \eqref{eq:initial-value} and
Lemma~\ref{lem:clause-weakening}. Each remaining inference is simulated
by Lemmas~\ref{lem:clause-resolution} and \ref{lem:clause-weakening}.
Their premise values are actual earlier derived lines, so the same
global ceiling \eqref{eq:simulation-degree} persists along the DAG.
The final empty clause has value one.
All blocks have inputs affine in the old \(v\) bits and disjoint fresh
coefficients. Their full companion and coefficient-domain families
are present. No balancing or multiplication of an entire preceding
derivation is used.
\end{proof}

\section{Closing the proof-size parameters}\label{sec:main-proof}

Both results of Section~\ref{sec:introduction} combine
Theorem~\ref{thm:clause-transfer} at \(h=3\ell\) with
Theorem~\ref{thm:affine}; they differ only in the choice of \(k\).

\begin{proof}[Proof of Theorem~\ref{thm:main}]
Let \(\ell\ge32\), \(n=2^\ell\), \(m=n+1\), and put
\(x=n/(32768\,\ell^2)\). Suppose a refutation has \(S\le e^x\) nodes.
Theorem~\ref{thm:clause-transfer} with \(h=3\ell\) gives a complete affine
family over \(\mathcal Q_{m,\ell}\) with at most
\(M=3S+\binom{n+1}{2}\) blocks and a PC refutation of degree
\(D=12\ell+1\ge2h+1\). We verify \eqref{eq:affine-room} for
\[
 k=\left\lfloor\frac{n}{32\ell}\right\rfloor .
\]
An induction on \(\ell\ge32\) shows
\(32768\,(2\ell+4)\ell^2\le2^\ell\), that is, \(x\ge2\ell+4\). Hence
\(e^x\ge2^{2\ell+4}=16n^2\). Since \(\binom{n+1}2\le4n^2\),
\[
 4M\le12e^x+16n^2\le13e^x\le e^{2x},\qquad \ln(4M)\le2x .
\]
From \(n\ge64\ell\) we get \(k\ge1\) and \(k+1\le2k\), so
\(n<(k+1)\,32\ell\le64\ell k\). Therefore
\[
 (n+1)\ln(4M)\le2n\cdot2x=\frac{n^2}{8192\,\ell^2}
 \le\frac{n^2}{4096\,\ell^2}\le k^2 .
\]
Moreover \(32\ell k\le n\) gives \(k\le m\), \(4(k-1)<n\), and
\[
 2B=2k(12\ell+2)\le32\ell k\le n .
\]
Theorem~\ref{thm:affine} now says that the family has no PC refutation of
degree \(D\), a contradiction. Hence \(S>e^x\ge2^x\). Both rule conventions
are covered by Theorem~\ref{thm:clause-transfer}.
\end{proof}

\begin{proof}[Proof of Corollary~\ref{cor:main-superpolynomial}]
The corollary follows from Theorem~\ref{thm:main}, since
\(n/(32768\,\ell^2)\ge K\ell\) for large \(\ell\). We also record the
independent argument of version 1, which is the one formalized in the
linked file and uses the smaller \(k\) of Lemma~\ref{lem:parameters}.
Fix \(K>0\) and let \(a=\max\{\lceil K\rceil,2\}\).
Use Corollary~\ref{cor:polylog-exclusion} with parameters \(a,13,1\),
and take \(\ell\) above its uniform threshold and at least two.
Put \(n=2^\ell\). Suppose a refutation has \(S\le n^K\) nodes.
Since \(n\ge1\) and \(a\ge K\), \(S\le n^a\).
Theorem~\ref{thm:clause-transfer}, with \(h=3\ell\), constructs an actual
complete affine registry with
\[
 N_{\mathrm{reg}}\le3S+\binom{n+1}{2},\qquad D=12\ell+1.
\]
Its initial clauses are those of the usual bit CNF, and its empty final
clause gives a PC derivation of one from exactly the complete family over
\(\mathcal Q_{n+1,\ell}\). No additional initial-value hypothesis remains.
Now
\[
 \binom{n+1}{2}\le(n+1)^2\le4n^2\le4n^a,
 \qquad N_{\mathrm{reg}}\le7n^a\le13\,2^{a\ell}+1,
 \qquad D\le13(\ell+1).
\]
The number of proper blocks is at most \(N_{\mathrm{reg}}\), so these are precisely the
inventory and degree bounds excluded by the corollary, a contradiction.
The threshold is independent of the alleged proof. Therefore \(S>n^K\).

This argument uses \(k=\lceil(\ell+1)(\sqrt2)^\ell\rceil\), giving
\begin{equation}\label{eq:main-asymptotics}
 k=O(\sqrt n\log n),\qquad
 k(D+1)=O\bigl(\sqrt n(\log n)^2\bigr)=o(n).
\end{equation}
far below the room \(k\approx n/(32\ell)\) used for Theorem~\ref{thm:main}.
\end{proof}

\begin{remark}[What the conclusion uses]\label{rem:conclusion-uses}
As explained in Section~\ref{sec:overview}, the contradiction is the
impossibility of a nonzero consequence in the separated row-linear space,
not a degree lower bound for deriving one. Equality of Nullstellensatz and
polynomial calculus consequences is used only for the old functional
system, and the argument bounds degree only: the number of blocks enters
through \(M\), while the monomial size of the simulating derivation is
irrelevant.
\end{remark}

\subsection{Relationship to the broader PHP project}
This result arose within a broader mathematical research project led by
Kamil Braun, whose main objective is a superpolynomial lower
bound for ordinary PHP in every fixed-depth
\(\mathrm{AC}^0[p]\)-Frege system. That objective remains open.
Theorem~\ref{thm:main} concerns a specific subsystem over \(\F\).
The project's research record, supporting code, and autonomous-assistant
workflow are maintained in \emph{Noemesis}, an open framework for autonomous
mathematical research and formal verification. Its source and research archive
are publicly available in the \texttt{math-research} repository on GitHub
\cite{Repo}:
\begin{center}
 \url{https://github.com/kbr-/math-research}
\end{center}
Section~\ref{sec:framework} describes the archive and framework.

\section{A generic sufficient condition}\label{sec:generic}

This section can be skipped. It records that Steps~1 and~2 of the proof
never use the pigeonhole principle, and states what they give for an
arbitrary set of linear clauses. Everything is over \(\F\). Proofs are by
reference to the bit-PHP proofs they abstract and to the formal proofs.
The criterion was first extracted from the argument in an AI review of
version 1 (Claude Fable 5.1, 16 September 2026, recorded in the research
notebook) and was then formalized.

For a linear clause \(A\) with true-indicators \(g_1,\ldots,g_s\), its
\emph{clause polynomial} is \(\prod_i(1-g_i)\), of degree at most the width
\(s\); it vanishes exactly at the assignments satisfying \(A\). For a
polynomial system \(\mathcal F\), call a linear space \(U\) of polynomials
\emph{separated at degree \(B\)} if \(U\cap\Cons_B(\mathcal F)=\{0\}\).

\begin{theorem}[Generic criterion]\label{thm:generic}
\leangroup{\leanref{claims/GenericAffineSubspaceConsequence.lean}{PC level}
  \leansep \leanref{claims/GenericCNFSubspaceCriterion.lean}{proof size}}
Let \(\mathcal I\) be a set of linear clauses in \(v\) variables, and let
\((A_j)_{j\in J}\) be a finite family of linear clauses of width at most
\(w\) such that every clause of \(\mathcal I\) is semantically implied by
some \(A_j\). Let \(\mathcal F\) be a polynomial system containing the
Boolean equations and the clause polynomials of all \(A_j\). Let
\(h,k\ge1\) with \(h(k+1)+1\le v\), put
\[
 D=\max\{2h+w,\,4h+1\},\qquad B=k(D+1),
\]
and let \(U\) be a space of polynomials of degree at most \(k\) that is
separated at degree \(B\). Then every \(\Res\) refutation of \(\mathcal I\)
with \(S\) nodes, under either rule convention of
Section~\ref{sec:introduction}, satisfies
\begin{equation}\label{eq:generic-criterion}
 \dim U\le\Bigl(3S+|J|\Bigr)\binom{v-h(k+1)-1+k}{k}.
\end{equation}
\end{theorem}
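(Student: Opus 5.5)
We argue by contraposition: assume a refutation with \(S\) nodes and \(\dim U\) larger than the right side of \eqref{eq:generic-criterion}, and derive a nonzero element of \(U\cap\Cons_B(\mathcal F)\). The plan has three steps. Step~1 is the PC simulation, as in Theorem~\ref{thm:clause-transfer}. Step~2 is the common kernel, as in Lemma~\ref{lem:kernel}, with \(\mathcal L_k\) replaced by \(U\). Step~3 is weighted removal via Lemma~\ref{lem:hybrid}.

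For Step~1, apply Theorem~\ref{thm:dag-registry} with the supplied family \((A_j)\). Lemma~\ref{lem:affine-cover} covers the general binary rule convention. The result is a registry with \(N_{\mathrm{reg}}=3S+|J|\) complete accuracy-\(h\) blocks over \(\mathcal F\). The supplied initial values come from the analogue of \eqref{eq:initial-value}. For \(A_j\) with indicators \(g_1,\dots,g_s\), \(s\le w\), set \(T_t=\prod_{s'<t}(1-g_{s'})\). Then
\[
P_{A_j}=\Bigl(\prod_i(1-g_i)\Bigr)P_{A_j}+\sum_t T_t\,(g_tP_{A_j}),
\]
which uses the clause-polynomial axiom of \(\mathcal F\) and the companions. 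This gives degree at most \(2h+w\). Clauses of \(\mathcal I\) are reached by weakening (Lemma~\ref{lem:clause-weakening}), and inferences by Lemmas~\ref{lem:clause-resolution} and \ref{lem:clause-weakening}. The outcome is a PC refutation of degree \(D=\max\{2h+w,4h+1\}\ge2h+1\) from \(\mathcal F\) plus at most \(3S+|J|\) blocks, each with inputs affine in the old variables.

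For Step~2, call a block high-rank if it is proper and its rank satisfies \(r>h(k+1)\), i.e.\ \(r\ge h(k+1)+1\). By Lemma~\ref{lem:affine-coordinates}, its zero flat has dimension \(d=v-r\le v-h(k+1)-1\); this is nonnegative by the hypothesis \(h(k+1)+1\le v\). By Lemma~\ref{lem:restriction-ideal}, the ordinary restriction map from \(U\), whose elements have degree at most \(k\), to such a flat has image of dimension at most \(\binom{d+k}{k}\le\binom{v-h(k+1)-1+k}{k}\). There are at most \(3S+|J|\) such maps. If \(\dim U\) exceeds the sum of their image dimensions, the joint restriction map on \(U\) has a nonzero kernel element \(f\). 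For every proper high-rank block, Lemma~\ref{lem:restriction-ideal} (with \(k\ge1\)) then gives the bounded-cofactor identity \eqref{eq:literal-ideal}. High-rank blocks whose span contains one need no condition.

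For Step~3, Lemma~\ref{lem:hybrid} applies to \(\mathcal F\), which contains the Boolean equations. Its hypotheses \(h,k\ge1\), \(D\ge2h+1\) and \(\deg f\le k\) all hold, so it yields \(f\in\Cons_{k(D+1)}(\mathcal F)=\Cons_B(\mathcal F)\). Since \(0\ne f\in U\), this contradicts separation of \(U\), and the inequality follows.

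The main point to check is that nothing in Lemmas~\ref{lem:hybrid} and \ref{lem:restriction-ideal} used row-linearity or the specific base \(\mathcal Q\). Lemma~\ref{lem:hybrid} states explicitly that it requires no multilinearity, and the dimension count needs only \(\deg\le k\). The remaining care is the degree bookkeeping for the initial values when \(w\) exceeds \(2h+1\); this is why \(D\) is a maximum, and the old axioms of degree at most \(w\le D\) are then handled exactly as in Lemma~\ref{lem:hybrid}.
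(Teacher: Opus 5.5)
Your proposal is correct and follows essentially the same route as the paper's proof by reference. Like the paper, you use the registry of Theorem~\ref{thm:dag-registry}, with initial values obtained as in \eqref{eq:initial-value} from the clause polynomial of \(A_j\); the restriction-dimension count of Lemma~\ref{lem:restriction-ideal} applied to \(U\) in place of Lemma~\ref{lem:kernel}; and weighted removal by Lemma~\ref{lem:hybrid} to put a nonzero \(f\in U\) into \(\Cons_B(\mathcal F)\). (One minor wording point: \(d=v-r\ge0\) holds automatically for a proper block, and the hypothesis \(h(k+1)+1\le v\) only makes the bound \(v-h(k+1)-1\) a genuine natural number.)
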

\begin{proof}[Proof by reference]
Theorem~\ref{thm:dag-registry} is already stated for arbitrary clause sets
with a supplied covering family; it gives a registry with \(3S+|J|\) slots.
The initial values are derived as in \eqref{eq:initial-value}, with the
clause polynomial of \(A_j\) in place of \(E_{ii'}\), through degree
\(2h+w\). This gives a PC refutation of degree \(D\) from \(\mathcal F\)
and the blocks. Lemma~\ref{lem:hybrid} is already stated for an arbitrary
old system. In place of Lemma~\ref{lem:kernel}, restrict \(U\) to the zero
flat of each proper block of rank above \(h(k+1)\): by
Lemma~\ref{lem:restriction-ideal} each image has dimension at most
\(\binom{v-h(k+1)-1+k}{k}\), so if \eqref{eq:generic-criterion} failed,
some nonzero \(f\in U\) would restrict to zero on all of them, and
Lemma~\ref{lem:hybrid} would give \(f\in\Cons_B(\mathcal F)\).
\end{proof}

\begin{corollary}[Density form]\label{cor:generic-density}
\lean{claims/GenericCNFDensityBound.lean}
Under the hypotheses of Theorem~\ref{thm:generic},
\[
 3S+|J|\ \ge\ \delta\,\exp\!\left(\frac{hk^2}{v+k}\right),
 \qquad \delta=\frac{\dim U}{\binom{v+k}{k}} .
\]
\end{corollary}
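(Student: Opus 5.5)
The plan is to read the density form off the dimension inequality \eqref{eq:generic-criterion} of Theorem~\ref{thm:generic} by a purely numerical estimate of a ratio of binomial coefficients. No new proof-complexity input is needed. Since \(U\) consists of polynomials of degree at most \(k\) in \(v\) variables, \(\dim U\le\binom{v+k}{k}\) is finite, so \(\delta\) is a well-defined number in \([0,1]\).

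Put \(a=v+k\) and \(c=h(k+1)+1\). Then the binomial coefficient in \eqref{eq:generic-criterion} is \(\binom{a-c}{k}\), and the hypothesis \(h(k+1)+1\le v\) gives \(a-c\ge k\). In particular this coefficient is positive. Dividing \eqref{eq:generic-criterion} by it yields
\[
 3S+|J|\ \ge\ \delta\,\frac{\binom{a}{k}}{\binom{a-c}{k}}
 =\delta\prod_{i=0}^{k-1}\frac{a-i}{a-c-i}.
\]
So it suffices to show that this product is at least \(\exp\bigl(hk^2/(v+k)\bigr)\).

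For each factor, write \((a-i)/(a-c-i)=1+c/(a-c-i)\). Every denominator \(a-c-i\) with \(i\le k-1\) is at least \(a-c-k+1\ge1>0\). Each denominator is also at most \(a-c\), so each factor is at least \(1+c/(a-c)=a/(a-c)=(1-c/a)^{-1}\). Hence the product is at least \((1-c/a)^{-k}\).

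Using \(-\ln(1-x)\ge x\) for \(0\le x<1\), where \(x=c/a<1\) because \(a-c\ge k\ge1\), we get
\[
(1-c/a)^{-k}\ge\exp\bigl(kc/a\bigr).
\]
Finally \(c>hk\), so \(kc/a\ge hk^2/(v+k)\), which is the claimed bound.

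The only point needing care is the positivity of the denominators, that is, that \(\binom{a-c}{k}\ne0\). This is exactly where the hypothesis \(h(k+1)+1\le v\) enters; everything else is elementary.
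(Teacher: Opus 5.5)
Your proposal is correct and follows the paper's own argument. The paper also compares the two binomial coefficients factor by factor, bounding each ratio $(v-r+j)/(v+j)$ by $1-r/(v+k)$, which is the reciprocal of your bound $a/(a-c)$; it then uses $1-x\le e^{-x}$ and $r=h(k+1)+1\ge hk$.
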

\begin{proof}
Removing \(r\) of \(v\) variables shrinks \(\binom{v+k}{k}\) by at least the
factor \(\exp(-rk/(v+k))\), since each of the \(k\) factors
\((v-r+j)/(v+j)\) is at most \(1-r/(v+k)\). Take \(r=h(k+1)+1\ge hk\).
\end{proof}

Here \(\delta\) is the density of \(U\) among all polynomials of degree at
most \(k\). The bound is useful only if \(\delta\) is not much smaller than
\(\exp(-hk^2/(v+k))\) while \(U\) remains separated at degree
\(k(D+1)\). For bit PHP, \(U=\mathcal L_k\), \(w=\ell\), \(h=3\ell\), the old
system is \(\mathcal Q_{m,\ell}\), the covering family is the
\(\binom m2\) compact clauses of Theorem~\ref{thm:clause-transfer}, and
separation is Theorem~\ref{thm:row-separation}; the sharper count of
Lemma~\ref{lem:kernel} then gives Theorem~\ref{thm:main}.
Thus Sections~\ref{sec:affine} and~\ref{sec:clause-simulation} are generic
for any clause set whose covering clause polynomials are axioms of the base.
What is specific to bit PHP is the separation theorem over the compact base,
and the fact that a large separated space exists at all.

\begin{remark}[A sufficient condition, not a tradeoff]\label{rem:not-tradeoff}
Theorem~\ref{thm:generic} is not a relation between \(\Res\) size and PC
refutation degree, nor between size and width. Its hypothesis is not that
\(\mathcal F\) has no low-degree refutation but that a large space \(U\) of
low-degree polynomials meets the low-degree consequences only in zero.
The conclusion of Step~2 is a nonzero derivable member of \(U\), not a
derivation of one, and nonzero polynomials can be legitimate low-degree
consequences: the satisfiable system \(\{x,\ x^2-x\}\) derives \(x\) and
does not derive one. We do not know a formula other than bit PHP for which
a separated space of useful density has been established.
\end{remark}

\begin{remark}[Formulas with short refutations]\label{rem:short-proofs}
\lean{claims/ShortProofControl.lean}
Read contrapositively, a clause set with a size-\(S\) refutation admits no
separated space of degree \(k\) and dimension above the right side of
\eqref{eq:generic-criterion}. As a consistency check on the easiest parity
contradiction: if \(\mathcal F\) contains a polynomial \(L\) of degree at
most one and also \(1-L\), then it derives one at degree one, every
polynomial of degree at most \(B\) is derivable at degree \(B\ge1\), and
the only separated space is zero. The same happens, as it must, for Tseitin
formulas, which have short \(\Res\) refutations: over \(\F\), the sum of
the clause polynomials of the clauses at a vertex of degree \(d\) is the
affine polynomial expressing that the vertex constraint fails (both sides
are multilinear and agree on Boolean points), and the sum of these affine
polynomials over all vertices is one. So any base containing the clause
polynomials derives one at degree \(d\), and no nonzero space is separated
at any degree \(B\ge\max\{d,k\}\). Only the complementary-pair case and the
step from a derivation of one are formalized; the Tseitin computation is
not. This is a consistency check, not an audit of the known short \(\Res\)
refutations.
\end{remark}

\subsection{What does not follow}
\emph{Unary PHP.} The standard substitution from unary PHP into bit PHP
replaces a unary variable by the conjunction describing one bit label; see,
for example, \cite{BB21}. For bounded-depth Frege it has polynomial size
overhead and a constant depth increase. But an affine form in unary
variables becomes a polynomial in bit conjunctions, which is generally
nonlinear, so the substitution does not transfer Theorem~\ref{thm:main} to
unary PHP in \(\Res\). Nor does Theorem~\ref{thm:generic} apply directly:
no sufficiently dense separated space is known for the unary encoding.

\emph{Size and width.} No relation between \(\Res\) size and width, or
between size and PC degree, is claimed (Remark~\ref{rem:not-tradeoff}).

\emph{Stronger systems.} A lower bound for the subsystem \(\Res\) does not
by itself give a Frege lower bound. In the translation of
\(\mathrm{AC}^0[p]\)-Frege proofs \cite{BIKPRS} the extension axioms are
nested: inputs of one block mention variables of others. Step~2 uses that
all inputs are affine in the original variables, and we know of no
removal step for nested blocks.

Other possible applications suggested in review, such as further succinctly
encoded principles or resolution over linear equations modulo a prime
\(p>2\), have not been examined and nothing is claimed about them.

\section{Evidence and the research framework}\label{sec:framework}

The supplementary research record is maintained as a dated online notebook:
\begin{center}
 \url{https://kbr.is-a.dev/math-research/}
\end{center}
It presents arguments, failed attempts, and computation records from the
broader project. The GitHub repository introduced above stores its source,
supporting programs and data, and earlier versions. These materials document
how the result was developed; the mathematical argument is presented in
this manuscript.

\subsection{Mathematical provenance and component checks}
This manuscript presents the
\href{https://kbr.is-a.dev/math-research/\#lean-publication-bit-PHP-superpolynomial}
{formalized notebook theorem and its proof}
\lean{claims/BitPHPSuperpolynomial.lean}, recorded on 15 September 2026,
together with the
\href{https://kbr.is-a.dev/math-research/\#lean-bit-PHP-exponential}
{exponential bound} and the
\href{https://kbr.is-a.dev/math-research/\#lean-generic-CNF-affine-DAG-subspace-criterion}
{generic criterion}, formalized on 17 September 2026.
Version 1 corresponds to repository revision
\href{https://github.com/kbr-/math-research/tree/54f09378e97465522b7f1caf115f90ad1faa7a57}
{\texttt{54f0937}}, which contains the superpolynomial theorem and its complete
formal dependency route. The formal content of revision 1 is contained in
\href{https://github.com/kbr-/math-research/tree/b47e9b1ef1f5b273822001983e83d35d7acbe117}
{\texttt{b47e9b1}}, which adds the modules for Theorem~\ref{thm:main},
Theorem~\ref{thm:generic}, Corollary~\ref{cor:generic-density}, and
Remark~\ref{rem:short-proofs}.
The earlier
\href{https://kbr.is-a.dev/math-research/\#bit-PHP-publication-audit-verdict}
{internal publication audit} and subsequent literature notes in the
notebook distinguish the
claimed unrestricted result from the regular and depth-restricted
literature.

The repository retains exact primitive-PC traces for the rule
simulation, including missing-premise controls.
A separate matching-moment checker stores sparse separating
functionals and every potentially nonzero marginal equation for
finite instances, together with falsifying perturbations.
These finite experiments were development checks, not a proof of the asymptotic
statement. The subsequent Lean formalization now verifies that statement
and its entire dependency route, including the chessboard filling theorem,
actual CNF-to-PC translation, and uniform parameter closure. The permanent
[Lean] links in revision 2 refer to published commit
\href{https://github.com/kbr-/math-research/tree/8904bf09a5376f48a00d1c25d079bf0c6441502a}
{\texttt{8904bf09}}, which also contains the aggregate verification module.
Its project pins Lean \texttt{4.34.0-rc2} and the Mathlib dependency revision.
The final transitive axiom report contains only Lean's standard foundations;
there is no \texttt{sorry}, custom axiom, or unproved chessboard interface.
The compiled module of the superpolynomial theorem and all its imported
declarations also passed a fresh kernel replay with the bundled
\texttt{leanchecker --fresh} on 15 September 2026. On 18 September 2026 the
same fresh replay passed for an import-only module,
\texttt{claims/BitPHPPreprintRevision1.lean}, that gathers every formal
result cited in this revision (Theorem~\ref{thm:main},
Corollary~\ref{cor:main-superpolynomial}, and the results of
Section~\ref{sec:generic}) together with all their imports, with the same
axiom report. Appendix~\ref{sec:verification} supplies a declaration map
and a complete reproduction procedure, pinned to a commit containing that
aggregate module. That module was added after commit \texttt{b47e9b1}; the
sources of the cited results are unchanged from that commit. The complete project was separately rebuilt from empty
project build outputs using the pinned library caches.
The proofs here follow the formalization entries, including
alternative arguments adopted during formalization. Formal verification of
the stated objects is distinct from external assessment of the manuscript's
exposition, interpretation of the proof system, and novelty.

\subsection{An inspectable workflow for autonomous research}
The project also develops \emph{Noemesis}, an open framework for autonomous
mathematical research: a collection of instructions, tools, and record-keeping conventions
for AI assistants, distributed in the same repository.
Its purpose is to let an autonomous assistant continue a mathematical
investigation across finite context windows and session boundaries
while retaining inspectable evidence \cite{Repo}.
Its main mechanisms are:
\begin{itemize}
 \item \textbf{Persistent context.}
 A stable restart guide, a concise living overview, and a claim index
 recover the relevant definitions and proofs without reloading the
 full conversation or depending on a particular machine.
 \item \textbf{Recorded successes and failures.}
 Dated, append-only entries retain arguments, assumptions,
 obstructions, and unsuccessful attempts with explicit claim status.
 Corrections receive new entries instead of silently replacing history.
 \item \textbf{Reproducible experiments.}
 Protected execution, complete outputs, provenance hashes, and
 timing records preserve what was actually computed under bounded
 resources.
 \item \textbf{Repeated investigation and process review.}
 Each cycle targets a named obligation, checks its outcome, and
 assesses wasted work or missing tools. Bounded framework
 improvements are recorded separately from mathematical results.
 \item \textbf{Explicit formal verification.}
 Assigned formalization cycles build dependency-indexed Lean proofs, record
 their exact scope and arguments, and link them from the
 claim index. Independent branches can run in isolated worktrees before
 integration; cached dependencies avoid redundant verification.
 \item \textbf{A live, versioned notebook.}
 Saved versions in Git, a version-control system, and the public notebook let
 readers inspect the development locally or online.
\end{itemize}
The durable record uses ordinary source, HTML, Markdown, and data files.
These mechanisms make work recoverable and reviewable; they do not
make a recorded proof automatically correct or formally verified.

\subsection{Contributions and AI assistance}\label{sec:model-assistance}
The mathematics in this paper was produced by AI models working inside the
framework above, under my direction. For version 1, GPT-6 Astra (in the
Codex command-line agent) was the main contributor to the mathematical
development, the Lean formalization, and the drafting; Claude Sonnet 5,
Claude Fable 5.1, and Claude Opus 5 were used in earlier exploratory
conversations. For revision 1, Claude Fable 5.1 (in Claude Code) formalized
the exponential bound of Theorem~\ref{thm:main} and the generic criterion of
Section~\ref{sec:generic}, statements and proofs, with one batch of proofs
delegated to Claude Opus 5; it also triaged reader and AI-review feedback on
version 1 and drafted the revised text, including the proof overview and
the related-work discussion. For revision 2, GPT-6 in Codex applied the
additional reader and AI-review feedback, corrected attribution, revised
exposition and layout, and added the verification map and reproduction
instructions; it added no mathematical theorem or Lean proof. My own contribution is the design, direction,
and iterative refinement of the research framework, the choice of problems,
and steering questions about whether the work was approaching its goal; it
is not the mathematical derivations. I do not have the expertise to check
the mathematics myself, and I did not check it, in version 1 or in this
revision. My confidence in the results rests on the Lean formalization,
whose scope is the statements carrying [Lean] links, under the definitions
in the linked files. The prose, the overview, the statements about the
literature, and the fidelity of the formal definitions to the standard
proof system are not covered by it. The paper has not been externally peer reviewed.

A first-person account of how the project developed (the chat
conversations it began in, the move to file-based agents, the
\emph{Spin} research loop, and the parallel formalization) appeared as
Section~8.3 of version 1. It has been moved out of the paper and is
preserved in the repository as
\href{https://github.com/kbr-/math-research/blob/main/publications/bit-php-resolution-over-parities/DEVELOPMENT_HISTORY.md}
{\texttt{DEVELOPMENT\_HISTORY.md}}; the version-1 source containing it is
fixed at commit
\href{https://github.com/kbr-/math-research/blob/b47e9b1ef1f5b273822001983e83d35d7acbe117/publications/bit-php-resolution-over-parities/sections/07-main-proof-framework.tex}
{\texttt{b47e9b1}}.

The classical topological input is attributed to \cite{BLVZ94};
the product-extension methodology is attributed to \cite{BIKPRS}.
Original manuscript material is distributed under the
\href{https://creativecommons.org/licenses/by/4.0/}
{Creative Commons Attribution 4.0 International license (CC BY 4.0)}.

\appendix
\section{The required chessboard homology}\label{sec:topology}

This appendix proves the full homological input, including its chain-level
prerequisites. All coefficients are in \(\F\). The identifiers H01--H13
match the dependency route of the formalization; they are included only to
make correspondence with the linked sources easy. The chessboard bound is
the homological consequence of the stronger connectivity theorem of
Bj\"orner, Lov\'asz, Vre\'cica, and \v{Z}ivaljevi\'c
\cite[Theorem 1.1]{BLVZ94}. We follow its star-cover strategy, with an
explicit finite chain proof of the cover lemma. We do not assume that
theorem or a spectral-sequence convergence result as a black box.

\subsection{Augmented chains and their elementary operations}

A finite augmented simplicial complex \(K\) is a downward-closed family
of subsets of a finite vertex set \(V\), containing the empty face.
It need not have a vertex. Write \(C_k(K)\) for coefficient functions
on its faces with exactly \(k\) vertices, extended by zero to all subsets
of \(V\). Thus the cell count \(k\) is reduced homological degree
\(k-1\): \(C_0(K)=\F\) is the augmentation group.
Write \(\operatorname{Exact}(K,k)\) when every \(k\)-cell cycle has a
\((k+1)\)-cell filler, and \(\operatorname{Acyclic}(K,n)\) when this
holds for every \(0\le k<n\). For \(n\ge1\), this says reduced homology
vanishes from degree \(-1\) through \(n-2\); for \(n=0\) it is vacuous.
In particular, \(\operatorname{Exact}(K,0)\) means that \(K\) has a vertex.

\begin{proposition}[H01: supported augmented boundary]\label{prop:h01}
\lean{claims/AugmentedChains.lean}
For \(k\ge1\), the formula
\[
 (\partial c)(T)=\sum_{v\notin T}c(T\cup\{v\})
\]
defines a linear map \(C_k(K)\to C_{k-1}(K)\). Its restriction from
\(C_1(K)\) to \(C_0(K)\) is the sum of vertex coefficients.
The outgoing boundary of \(C_0(K)\) is zero.
The supported-coefficient model is linearly equivalent to functions on
exact-size faces, including for chessboard complexes.
\end{proposition}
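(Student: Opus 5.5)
We check the four assertions of Proposition~\ref{prop:h01} directly from the definitions, in order; nothing beyond finite bookkeeping over subsets of the finite vertex set \(V\) is needed. \textbf{Well-definedness and linearity.} Take \(c\in C_k(K)\), \(k\ge1\): a function on subsets of \(V\) supported on faces with exactly \(k\) vertices. The formula for \((\partial c)(T)\) is a finite sum over \(v\in V\setminus T\), and each summand is linear in \(c\), so \(\partial\) is linear. It remains to show that \(\partial c\) lies in \(C_{k-1}(K)\), i.e.\ that its support consists of \((k-1)\)-vertex faces.

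\textbf{Support.} Suppose \((\partial c)(T)\ne0\). Then some \(v\notin T\) has \(c(T\cup\{v\})\ne0\). Hence \(T\cup\{v\}\) is a face with exactly \(k\) vertices, so \(|T|=k-1\) because \(v\notin T\). Downward closure of \(K\) makes \(T\) a face. Contrapositively, \(\partial c\) vanishes off faces of size \(k-1\).

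\textbf{The case \(k=1\) and the zero outgoing boundary.} The only subset with zero vertices is \(\varnothing\). Therefore \(C_0(K)\) is identified with \(\F\) through the value at \(\varnothing\), and \((\partial c)(\varnothing)=\sum_{v\in V}c(\{v\})\) is the sum of vertex coefficients. Non-faces contribute zero since \(c\) is supported on faces. The outgoing boundary from \(C_0\) is declared to be zero, since there is no \(C_{-1}\) (the formula would need \((-1)\)-element sets), so this is a convention to record rather than something to prove.

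\textbf{Equivalence of models.} Restriction sends a supported function to its values on exact-size faces. Extension by zero sends a function on the \(k\)-vertex faces back to a function on all subsets. These two maps are linear and mutually inverse, since a supported function vanishes off those faces by definition. Chessboard complexes are the instance whose faces are matchings and whose vertex set is the cells of \(S\times[N]\). The only point needing care is that the boundary formula agrees in both models: a sum over \(v\notin T\) in the ambient vertex set equals the sum over \(v\) with \(T\cup\{v\}\) a face, because the other terms vanish.

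I expect no real obstacle. The only subtle step is the support argument for \(T\), which uses downward closure to ensure that \(\partial c\) stays inside the complex.
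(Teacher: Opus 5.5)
Your proof is correct and matches the paper's argument: a nonzero summand forces \(T\cup\{v\}\) to be a \(k\)-vertex face, so \(|T|=k-1\) and downward closure puts \(T\) in \(K\); linearity comes from finite sums; and the two models are identified by restriction and extension by zero. The one point you treat differently is count zero. This is not merely a convention, and the formula does not need \((-1)\)-element sets: applied to a chain supported on \(\varnothing\), every summand \(c(T\cup\{v\})\) vanishes because \(T\cup\{v\}\) is nonempty. That is exactly how the paper checks it, and it is what lets H02 and H06 handle the low-count cases uniformly.
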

\begin{proof}
A nonzero summand requires \(T\cup\{v\}\in K\) and cardinality \(k\).
Downward closure gives \(T\in K\), and \(v\notin T\) gives
\(|T|=k-1\). Finite sums are linear. At count zero every inserted face
is nonempty, so its coefficient vanishes. Restrict a supported function
to exact-size faces and invert by extension by zero. These maps are
linear and mutually inverse.
\end{proof}

\begin{lemma}[H02: the boundary squares to zero]\label{lem:h02}
\lean{third-party-claims/AugmentedBoundarySquared.lean}
The augmented boundary satisfies \(\partial^2=0\) at every cell count.
\end{lemma}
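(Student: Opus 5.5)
The plan is a direct computation of \(\partial^2c\) at a fixed face, followed by a pairing argument in characteristic two. Let \(c\in C_k(K)\) with \(k\ge2\); for \(k\le1\) the second boundary lands in or passes through \(C_0(K)\), whose outgoing boundary is zero by Proposition~\ref{prop:h01}, so \(\partial^2c=0\) there. For a subset \(T\) with \(|T|=k-2\) we have
\[
 (\partial^2c)(T)=\sum_{w\notin T}(\partial c)(T\cup\{w\})
  =\sum_{w\notin T}\ \sum_{v\notin T\cup\{w\}}c(T\cup\{w,v\}).
\]

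The key step is to reindex this double sum by ordered pairs \((w,v)\) of distinct vertices outside \(T\). The summand \(c(T\cup\{w,v\})\) depends only on the unordered pair \(\{w,v\}\). The double sum is therefore \(\sum_{\{w,v\}}2\,c(T\cup\{w,v\})\), which vanishes over \(\F\). Formally, I would split the finite set of ordered pairs \(\{(w,v):w,v\in V\setminus T,\ w\ne v\}\) using the fixed-point-free involution \((w,v)\mapsto(v,w)\). I would then use the fact that a sum over \(\F\) of a function invariant under such an involution is zero, since orbits have size two.

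The faces of the extended-by-zero model need no special care. Coefficients outside \(K\), or of the wrong size, are already zero by the support convention. The formula is therefore applied literally over all subsets of \(V\), and no downward-closure argument is needed beyond what Proposition~\ref{prop:h01} already supplies for well-definedness.

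I do not expect any genuine obstacle; the only delicate point is bookkeeping. Two things need checking: that the pair \(v\ne w\) is forced because \(v\notin T\cup\{w\}\), and that the characteristic-two cancellation is applied to the symmetric pairing rather than to an attempted sign convention. No orientation signs are present, because the coefficients lie in \(\F\).
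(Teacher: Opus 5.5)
Your proposal is correct and follows the paper's proof. Both expand \(\partial^2c\) at a face \(T\) as a sum over ordered insertions of two distinct vertices outside \(T\), pair \((w,v)\) with \((v,w)\) so that the equal coefficients cancel in \(\F\), and dispose of the low cell counts using the zero outgoing boundary of \(C_0(K)\).
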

\begin{proof}
At a face \(T\), the double sum inserts two distinct vertices outside
\(T\). Pair the ordered insertion \((u,v)\) with \((v,u)\).
They contribute the same coefficient, twice, hence zero in \(\F\).
The outgoing augmentation boundary is already zero, so this includes
all low-count cases.
\end{proof}

\begin{lemma}[H03: injective chain maps and relabeling]\label{lem:h03}
\leangroup{\leanref{claims/AugmentedChainMaps.lean}{chain maps}
  \leansep \leanref{claims/AugmentedSubcomplexRelabeling.lean}{relabeling}}
An injective vertex map carrying faces of \(K\) to faces of \(L\)
induces a degree-preserving linear pushforward commuting with boundary.
If it gives a bijection on the faces of the two complexes, this is a
chain isomorphism and transports cycle fillings in both directions.
In particular this applies to subcomplex inclusions, chessboard transpose,
and relabeling a residual board. It also identifies the supported boundary
with the boundary on exact-size chessboard faces.
\end{lemma}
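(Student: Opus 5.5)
The plan is to define the pushforward directly and then check three things in turn: linearity and degree preservation, commutation with the boundary, and the bijective case. Let \(\phi:V_K\to V_L\) be the injective vertex map. For \(c\in C_k(K)\) define \((\phi_*c)(T')=c(\phi^{-1}(T'))\) when \(T'=\phi(T)\) for some face \(T\in K\), and \(0\) otherwise. Injectivity makes \(T\) unique when it exists, and \(|\phi(T)|=|T|\), so \(\phi_*\) is well defined, preserves the cell count, and is linear. Its support lies in faces of \(L\) because \(\phi\) carries faces to faces, so \(\phi_*\) lands in \(C_k(L)\).

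For commutation with the boundary of Proposition~\ref{prop:h01}, fix a \((k-1)\)-cell subset \(T'\) of \(V_L\) and compare \((\partial\phi_*c)(T')=\sum_{w\notin T'}(\phi_*c)(T'\cup\{w\})\) with \((\phi_*\partial c)(T')\). There are two cases.
\begin{itemize}
\item If \(T'=\phi(T)\), the nonzero summands on the left have \(T'\cup\{w\}=\phi(T\cup\{v\})\) with \(w=\phi(v)\) and \(v\notin T\). These correspond bijectively, by injectivity, to the summands \(c(T\cup\{v\})\) of \((\partial c)(T)\). Every other \(w\) contributes zero.
\item If \(T'\) is not an image, then no \(T'\cup\{w\}\) is an image either, because a subset of an image is an image. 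Here a subset of \(\phi(U)\) is \(\phi\) of a subset of \(U\), and that subset is a face by downward closure. So both sides vanish.
\end{itemize}
The step needing the most care is exactly this bookkeeping of the non-image case and of vertices \(w\) outside \(\phi(V_K)\). The downward-closure argument just given settles it.

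When \(\phi\) induces a bijection on faces, I will build the inverse pushforward directly: \(c'\mapsto c'\circ\phi\) on faces of \(K\), which is supported on \(K\). The two composites are the identity, so this is a chain isomorphism. A cycle \(z\) of \(L\) pulls back to a cycle, a filler there pushes forward to a filler of \(z\), and the symmetric argument gives the other direction.

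The applications are then instances of the general statement.
\begin{itemize}
\item For a subcomplex inclusion, take \(\phi=\mathrm{id}\).
\item For the chessboard transpose \(\Delta_{s,N}\cong\Delta_{N,s}\), use the cell swap \((i,j)\mapsto(j,i)\). It is injective, and a set of cells has distinct rows and columns exactly when its transpose does, so it is a bijection on faces.
\item For a residual board, relabeling via bijections of the surviving rows and columns likewise preserves matchings.
\item The identification of supported chains with exact-size chessboard faces is the linear equivalence of Proposition~\ref{prop:h01}. It commutes with the two boundary formulas by the same insertion-sum comparison as above.
\end{itemize}
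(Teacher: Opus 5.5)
Your proposal is correct and follows the paper's proof essentially step for step. The shared steps are the pushforward defined on image faces and zero off the range, the image/non-image case split for commutation with the boundary, the inverse coefficient map in the face-bijective case, and the same treatment of the applications. Your downward-closure argument, that a subset of an image face is again an image face, usefully makes explicit why all boundary terms vanish off the range, which the paper asserts without justification.
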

\begin{proof}
On an image face \(f(S)\), give the pushforward coefficient \(c(S)\),
and assign zero off the range. Injectivity preserves cardinality and
makes the definition unambiguous. On a face outside the range all boundary
terms vanish. On an image face, insertions in the image correspond
bijectively to insertions in the source, and all other terms vanish.
Thus pushforward commutes with boundary. A face bijection supplies the
inverse coefficient map. Swapping row and column coordinates preserves
precisely the matching condition; residual relabeling has the same property.
Restriction to exact-size faces and extension by zero identify the two
boundary formulas term by term, including augmentation.
\end{proof}

\begin{lemma}[H04: explicit cone contraction]\label{lem:h04}
\lean{claims/AugmentedCone.lean}
Suppose \(v\) is a cone vertex of \(K\), meaning that
\(S\in K\) implies \(S\cup\{v\}\in K\). Define
\[
 (s_vc)(U)=\begin{cases}c(U\setminus\{v\})&v\in U,\\0&v\notin U.\end{cases}
\]
Then \(s_v:C_k(K)\to C_{k+1}(K)\) and
\(\partial s_v+s_v\partial=\mathrm{id}\). Consequently
\(\operatorname{Exact}(K,k)\) holds for every \(k\ge0\).
\end{lemma}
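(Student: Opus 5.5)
The plan is to check directly that \(s_v\) is well defined as a map \(C_k(K)\to C_{k+1}(K)\), then verify the homotopy identity \(\partial s_v+s_v\partial=\mathrm{id}\) pointwise on a face \(U\), splitting into the cases \(v\in U\) and \(v\notin U\). Exactness will then follow formally from this identity.

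For well-definedness, take \(c\in C_k(K)\), supported on \(k\)-vertex faces of \(K\). If \((s_vc)(U)\ne0\), then \(v\in U\) and \(U\setminus\{v\}\) is a \(k\)-vertex face of \(K\). The cone condition gives \(U=(U\setminus\{v\})\cup\{v\}\in K\), and \(|U|=k+1\). So \(s_vc\) is supported on \((k+1)\)-cell faces, and linearity is clear.

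For the identity, fix a \(k\)-vertex set \(U\) and use the boundary formula of Proposition~\ref{prop:h01}.
\begin{itemize}
\item If \(v\notin U\), then \((s_v\partial c)(U)=0\). Also \((\partial s_vc)(U)=\sum_{w\notin U}(s_vc)(U\cup\{w\})\), and only \(w=v\) contributes, giving \(c(U)\).
\item If \(v\in U\), then \((\partial s_vc)(U)=\sum_{w\notin U}c((U\cup\{w\})\setminus\{v\})\). Also \((s_v\partial c)(U)=(\partial c)(U\setminus\{v\})=\sum_{w\notin U\setminus\{v\}}c((U\setminus\{v\})\cup\{w\})\). The second sum equals the first plus the term \(w=v\), which is \(c(U)\). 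Over \(\F\) the common terms cancel, leaving \(c(U)\).
\end{itemize}
The degree-zero case needs separate care, since \(\partial\) out of \(C_0\) is zero. There \(U=\varnothing\), so \(v\notin U\), and the first case gives \(\partial s_v c(\varnothing)=c(\varnothing)\) using \(\{v\}\in K\). The cone condition applied to \(\varnothing\) supplies this. Sets \(U\) outside \(K\) have both sides zero by support.

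Exactness is then immediate. If \(z\in C_k(K)\) is a cycle, then \(z=\partial(s_vz)+s_v(\partial z)=\partial(s_vz)\), so \(s_vz\) is a \((k+1)\)-cell filler. For \(k=0\) this says \(K\) has the vertex \(v\).

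The main obstacle is only bookkeeping. The boundary convention counts cells rather than dimension, and the augmentation boundary vanishes, so the case \(v\in U\) must be handled carefully. There the extra insertion term \(w=v\) must be matched exactly once, with characteristic two removing all sign issues.
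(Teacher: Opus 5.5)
Your proof is correct and follows the paper's own argument essentially step for step. You get the support claim from the cone condition, check \(\partial s_v+s_v\partial=\mathrm{id}\) pointwise in the two cases \(v\notin U\) and \(v\in U\) (where the extra insertion of \(v\) survives in characteristic two), and deduce exactness by applying the identity to a cycle.
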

\begin{proof}
The support assertion follows by adjoining the cone vertex.
If \(v\notin T\), the only nonzero term in \((\partial s_vc)(T)\)
is insertion of \(v\), giving \(c(T)\), and \((s_v\partial c)(T)=0\).
If \(v\in T\), all insertions outside \(T\) occur once in each of
\(\partial s_vc\) and \(s_v\partial c\) and cancel; the remaining
insertion of \(v\) into \(T\setminus\{v\}\) gives \(c(T)\).
For a cycle, the identity gives \(\partial(s_vc)=c\).
\end{proof}

\begin{lemma}[H05: the boundary of a simplex]\label{lem:h05}
\lean{third-party-claims/SimplexBoundary.lean}
The simplex on a nonempty finite vertex set is exact at every cell count.
If that set has size \(b\ge1\), its boundary complex, consisting of all
proper subsets, is exact at every \(k\) with \(k+1<b\).
Equivalently it is \(\operatorname{Acyclic}(\partial\Delta,n)\) whenever
\(n<b\). No top-dimensional vanishing is asserted.
\end{lemma}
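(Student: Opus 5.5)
The first claim follows from Lemma~\ref{lem:h04}, since every vertex \(v\) of a full simplex is a cone vertex. For the boundary complex \(\partial\Delta\) on a vertex set \(V\) with \(|V|=b\ge1\), the faces are the subsets of size at most \(b-1\). So \(C_k(\partial\Delta)=C_k(\Delta)\) for \(k\le b-1\), and the boundary maps agree wherever both are defined. The only difference is that the top cell count \(b\) is missing.

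Given a \(k\)-cell cycle \(c\) of \(\partial\Delta\) with \(k+1<b\), view it as a cycle of \(\Delta\). Fix a vertex \(v\) and apply the cone contraction of Lemma~\ref{lem:h04} to obtain the filler \(s_vc\in C_{k+1}(\Delta)\) with \(\partial(s_vc)=c\). Since \(s_vc\) is supported on faces of size \(k+1\le b-1\), every such face is proper and therefore lies in \(\partial\Delta\). Hence \(s_vc\in C_{k+1}(\partial\Delta)\). The boundary formula of Proposition~\ref{prop:h01} is the same sum over insertions in either complex, because it only reads coefficients of faces of size \(k+1\). So \(s_vc\) fills \(c\) in \(\partial\Delta\).

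The case \(k=0\) needs a vertex of \(\partial\Delta\). Here \(k+1<b\) gives \(b\ge2\), so every singleton is proper and \(\operatorname{Exact}(\partial\Delta,0)\) holds. The equivalence with \(\operatorname{Acyclic}(\partial\Delta,n)\) for \(n<b\) is just the definition: \(\operatorname{Acyclic}(\partial\Delta,n)\) requires \(\operatorname{Exact}(\partial\Delta,k)\) for every \(k<n\le b-1\), which is exactly the range \(k+1<b\).

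The only point needing care, and the place to be explicit, is that both the cone contraction and the chain identification between \(\Delta\) and \(\partial\Delta\) stay within cell counts at most \(b-1\). I would state this as a restriction-of-support check: \(\partial\Delta\) is a subcomplex of \(\Delta\) that agrees with it in all cell counts below \(b\), so by Lemma~\ref{lem:h03} inclusion pushes chains forward and fillers pull back whenever their support is proper. At the top count \(k+1=b\) the filler would need the full face \(V\), which is why no top-dimensional vanishing is claimed.
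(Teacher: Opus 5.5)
Your proposal is correct and follows the paper's proof: you cone-contract the cycle inside the full simplex using Lemma~\ref{lem:h04}, and the filler is supported on faces of size \(k+1<b\), which are proper and therefore already lie in the boundary complex. Your separate treatment of \(k=0\) and your explicit identification of the two boundary maps are harmless extra detail, since the general argument already covers \(k=0\).
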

\begin{proof}
The full simplex is a cone at any vertex. Contract a \(k\)-cycle in its
boundary inside the full simplex using Lemma~\ref{lem:h04}. Its filler
has \(k+1<b\) vertices on every supported face, so all those faces are
proper subsets and the filler already lies in the boundary complex.
\end{proof}

\subsection{The finite homological cover argument}

Let \((L_i)_{i\in I}\) be finitely many subcomplexes of \(K\). Put
\(L_J=\bigcap_{i\in J}L_i\), with \(L_\varnothing=K\).
Its vertex-nonempty nerve is the complex
\[
 \mathcal N=\{\varnothing\}\cup
   \{J\subseteq I:L_J\text{ has at least one vertex}\}.
\]
An empty-face-only intersection does not give a nonempty nerve face.
The family covers \(K\) if every face belongs to some \(L_i\).

\begin{lemma}[H06: cover double chains and horizontal exactness]\label{lem:h06}
\leangroup{\leanref{claims/FiniteComplexCover.lean}{cover definitions}
  \leansep \leanref{claims/CoverDoubleComplex.lean}{double chains}}
Let \(D_{a,b}\) consist of double coefficients \(d(J,S)\) supported where
\(|J|=a\), \(|S|=b\), \(S\in L_J\), and \(J\in\mathcal N\).
Insertion boundaries \(H\) in \(J\) and \(V\) in \(S\) lower the
respective cell counts, with zero outgoing maps at count zero, and satisfy
\[
 H^2=V^2=0,\qquad HV=VH,\qquad(H+V)^2=0.
\]
If the \(L_i\) cover \(K\), then for \(b>0\) every horizontally closed
\(d\in D_{a,b}\) has a horizontal filler in \(D_{a+1,b}\).
\end{lemma}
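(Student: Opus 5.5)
The plan is to verify the well-definedness and algebraic identities first, and then prove horizontal exactness one column at a time, fixing the face \(S\).

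For the support claims, note the following. If \(d\in D_{a,b}\) and \(j\notin J\) with \(J\cup\{j\}\in\mathcal N\) and \(S\in L_{J\cup\{j\}}\), then removing \(j\) gives \(J\), which lies in \(\mathcal N\) by downward closure of the nerve. Indeed, \(L_{J\cup\{j\}}\subseteq L_J\), so a vertex of the smaller intersection is a vertex of the larger; for \(a=1\) the convention \(L_\varnothing=K\) together with \(\varnothing\in\mathcal N\) covers the empty index set. Also \(S\in L_J\). So \(H\) lands in \(D_{a-1,b}\).

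For \(V\), removing a vertex from \(S\) keeps it in the subcomplex \(L_J\), and \(J\) is unchanged. Hence \(V\) maps \(D_{a,b}\) to \(D_{a,b-1}\), exactly as in Proposition~\ref{prop:h01}. The identities \(H^2=V^2=0\) follow by the pairing argument of Lemma~\ref{lem:h02}: the two ordered insertions contribute the same coefficient twice, which is zero over \(\F\). For \(HV=VH\), both sides at \((J,S)\) equal \(\sum_{j\notin J}\sum_{v\notin S}d(J\cup\{j\},S\cup\{v\})\), with out-of-support terms zero in both orders. Then \((H+V)^2=H^2+HV+VH+V^2=2HV=0\) in characteristic two.

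For horizontal exactness, fix \(S\) with \(|S|=b>0\). The horizontal complex restricted to the column \(S\) has coefficients on index sets \(J\) with \(S\in L_J\) and \(J\in\mathcal N\). Since \(S\) is nonempty, \(S\in L_J\) already forces \(L_J\) to contain a vertex, so the nerve condition is automatic. The column is therefore the full simplex on \(I_S=\{i:S\in L_i\}\), together with its empty face, because \(S\in L_J\iff J\subseteq I_S\).

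The cover hypothesis makes \(I_S\) nonempty. Choosing \(i_0\in I_S\) makes \(i_0\) a cone vertex, and Lemma~\ref{lem:h04} supplies the contraction \(s_{i_0}\). Since \(H\) acts column by column, I apply \(s_{i_0(S)}\) separately in each column, choosing \(i_0(S)\in I_S\) per face, and sum. This gives a filler \(e\in D_{a+1,b}\) with \(He=d\) whenever \(Hd=0\). The supports are respected because \(J\cup\{i_0\}\subseteq I_S\).

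The main obstacle is bookkeeping at the edges: the case \(a=0\) with the augmentation, and the role of \(b>0\). At \(b=0\) the column for \(S=\varnothing\) is the vertex-nonempty nerve itself, which need not be a cone, and that is why it is excluded. At \(a=0\), closure is vacuous, and the filler \(e(\{i_0\},S)=d(\varnothing,S)\) still works by the cone identity. I would also check that the per-column choice of \(i_0\) introduces no interaction between columns, since \(H\) never changes \(S\).
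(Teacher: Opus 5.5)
Your proposal is correct and follows the paper's proof essentially step for step. It makes the same support and commutation checks, and it uses the same per-face cone contraction at an index \(i(S)\) whose cover member contains the nonempty face \(S\), with the nerve condition automatic because \(S\) has a vertex. One small point: your equivalence \(S\in L_J\iff J\subseteq I_S\) at \(J=\varnothing\) needs \(S\in K\). This holds for every face \(S\) on which \(d\) can be nonzero; for other \(S\) the column is zero and needs no filler.
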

\begin{proof}
Removing an index preserves intersection membership and nerve membership.
Removing a vertex preserves membership in every complex, while the nerve
condition on \(J\) remains unchanged. These facts prove the support claims.
The square-zero identities are Lemma~\ref{lem:h02}; both \(HV\) and
\(VH\) sum the same coefficients \(d(J\cup\{i\},S\cup\{v\})\), so
interchanging finite sums proves commutation. Characteristic two gives
the total square-zero identity.

For each nonempty supported face \(S\), choose an index \(i(S)\) whose
member contains \(S\), and contract the index coefficient
\(J\mapsto d(J,S)\) at \(i(S)\), using the ambient formula of
Lemma~\ref{lem:h04}. Call the result \(u(J,S)\), and use zero on other
\(S\). Its nonzero coefficients have \(S\) in every indicated cover
member, including the newly chosen one. Because \(S\) contains a vertex,
the enlarged index set really belongs to the nerve. Thus
\(u\in D_{a+1,b}\), and the contraction identity gives \(Hu=d\).
The condition \(b>0\) is essential: the \(b=0\) edge is the nerve
chain complex and is not automatically exact.
\end{proof}

\begin{lemma}[H07: homological cover lemma]\label{lem:nerve}
\lean{third-party-claims/HomologicalCover.lean}
Suppose \((L_i)\) covers \(K\). Let \(n\ge0\), and assume
\(\operatorname{Acyclic}(\mathcal N,n)\). Suppose also that for every
nonempty \(J\in\mathcal N\) and every \(r\ge0\) with \(|J|+r\le n\),
\(\operatorname{Exact}(L_J,r)\) holds. Then
\(\operatorname{Acyclic}(K,n)\).
In reduced degrees, putting \(n=q+2\), this is the usual assertion that
a \(q\)-acyclic nerve and \((q-|J|+1)\)-acyclic intersections give a
\(q\)-acyclic union, including the augmentation case \(q=-1\).
\end{lemma}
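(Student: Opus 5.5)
We will prove the homological cover lemma (Lemma~\ref{lem:nerve}) by the standard double-complex (zig-zag) argument in the bigraded space \(D_{a,b}\) of Lemma~\ref{lem:h06}, using horizontal exactness for \(b>0\) to push a cycle of \(K\) down to the nerve edge \(b=0\).

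\textbf{Step 1: lift to the double complex.} Fix \(0\le k<n\) and a \(k\)-cell cycle \(c\) of \(K\). For \(k=0\) the statement is that \(K\) has a vertex; this follows from \(\operatorname{Exact}(\mathcal N,0)\), which gives a vertex \(\{i\}\) of \(\mathcal N\) and hence a vertex of \(L_i\subseteq K\). For \(k\ge1\), consider the augmentation column \(a=0\): the row \(D_{0,b}\) is exactly \(C_b(K)\), since \(L_\varnothing=K\). Then \(Hc\in D_{-1,\cdot}\) is zero by convention. I would instead start by applying horizontal exactness to \(c\), viewed as horizontally closed (its outgoing \(H\) is zero) in \(D_{0,k}\) with \(k>0\). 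This produces \(d_1\in D_{1,k}\) with \(Hd_1=c\).

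\textbf{Step 2: zig-zag.} Set \(e_1=Vd_1\in D_{1,k-1}\). Since \(HV=VH\) and \(Vc=0\), we get \(He_1=VHd_1=Vc=0\). If \(k-1>0\), horizontal exactness gives \(d_2\in D_{2,k-1}\) with \(Hd_2=e_1\), and we continue with \(e_2=Vd_2\). After \(k\) steps we reach \(e_k\in D_{k,0}\). This is a horizontally closed chain on the \(b=0\) edge, i.e.\ a \(k\)-cell cycle of the nerve, since \(S=\varnothing\) lies in every \(L_J\). Because \(k<n\), \(\operatorname{Acyclic}(\mathcal N,n)\) gives \(y\in D_{k+1,0}\) with \(Hy=e_k\).

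\textbf{Step 3: zig back up.} This is the main obstacle. We must replace \(d_k\) by \(d_k'=d_k+y'\) with \(Vd_k'=0\), and then climb back up. The difficulty is that we now need vertical fillers inside the complexes \(L_J\), one \(J\) at a time, and this is where the hypothesis \(\operatorname{Exact}(L_J,r)\) for \(|J|+r\le n\) enters. I would organise the climb as a downward induction. At level \(j\), with \(|J|=j\), we have a correction \(z_j\in D_{j,k-j}\) whose vertical boundary we want to be a given vertically closed element. Vertical fillers exist columnwise, since for fixed \(J\) the \(V\)-complex is \(C_\bullet(L_J)\) restricted to \(J\in\mathcal N\), and the degrees satisfy \(|J|+(k-j+1)\le k+1\le n\).

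The cleaner route is the standard one: prove by induction that \(\partial\)-homology of the total complex \(H+V\) in total degree \(\le n\) is controlled by the \(b=0\) edge. This would be done by showing the following.
\begin{enumerate}
\item Every cycle of \(K\) is homologous in the total complex to a chain concentrated on \(b=0\), using the rows for \(b>0\) with horizontal exactness as above.
\item The total complex is acyclic at those degrees, because columns are vertically exact in the needed range and the nerve edge is acyclic.
\end{enumerate}
Concretely, given a total cycle \(x=\sum_a x_a\) with \(x_a\in D_{a,t-a}\), I would kill components from \(a\) large downward using vertical exactness, and the final \(a=0\) component's vertical filler yields the filler of \(c\) in \(C_{k+1}(K)\).

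The careful bookkeeping of the indices \(|J|+r\le n\), especially the augmentation cases \(r=0\) (\(L_J\) has a vertex, guaranteed by \(J\in\mathcal N\)) and \(q=-1\), is where I expect errors to hide. Characteristic two removes all sign issues, which is why \((H+V)^2=0\) holds without twisting.
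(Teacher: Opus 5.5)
Your $k=0$ case and your descent in Steps 1--2 are correct, and they match the first half of the paper's argument. The horizontal fillers $d_j\in D_{j,k-j+1}$ carry $c$ to the nerve cycle $e_k=Vd_k\in D_{k,0}$, which is filled by $y\in D_{k+1,0}$; in fact $w=d_1+\cdots+d_k+y$ already satisfies $(H+V)w=c$. The gap is Step 3, which you flag as the main obstacle and never carry out. The correction $y'$ in $d_k'=d_k+y'$ is never specified. Your degree count $|J|+(k-j+1)\le k+1\le n$ is right, but your corrector $z_j\in D_{j,k-j}$ sits in the column being corrected; it must lie one column to the right and enter through $H$.

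The fallback route is argued the wrong way round. Acyclicity of the total complex in counts below $n$ comes from row exactness for $b>0$ plus the nerve edge, which is exactly what your Steps 1--2 prove, not from column exactness. The $a=0$ column is $C_\bullet(K)$, whose exactness is the conclusion, so no ``vertical filler'' exists for an $a=0$ component. Likewise, killing the components of a total \emph{cycle} from large $a$ downward only shows that total cycles are homologous to cycles of $K$, the wrong direction. The hypotheses $\operatorname{Exact}(L_J,r)$, available only for nonempty $J$, must be applied to the total \emph{filler} $w$.

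The missing mechanism is to correct each $d_j$ by $H$ of a vertical filler taken one column to the right, so that $HV=VH$ cancels the vertical boundary. Put $u_{k+1}=y$. Going down $j=k,\dots,1$, assume $u_{j+1}\in D_{j+1,k-j}$ is vertically closed with $Hu_{j+1}=Vd_j$. Since $(j+1)+(k-j)=k+1\le n$, the hypothesis $\operatorname{Exact}(L_J,k-j)$ for $|J|=j+1$ gives $z_j\in D_{j+1,k-j+1}$ with $Vz_j=u_{j+1}$. Then
\[
 u_j=d_j+Hz_j,\qquad Hu_j=Hd_j,\qquad Vu_j=Vd_j+HVz_j=Vd_j+Hu_{j+1}=0,
\]
and $Hd_j$ equals $Vd_{j-1}$ (or $c$ when $j=1$), so the invariant passes down. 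You end with $u_1\in D_{1,k}$ satisfying $Hu_1=c$ and $Vu_1=0$. A vertical filler $z_0\in D_{1,k+1}$ of $u_1$ (singleton intersections, $1+k\le n$) gives $f=(Hz_0)(\varnothing,\cdot)\in C_{k+1}(K)$ with $\partial f=VHz_0=HVz_0=Hu_1=c$. Equivalently, $w+(H+V)(z_0+\cdots+z_k)=Hz_0$, which is the correct form of your column-killing route.

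The paper packages this zig-zag recursively. By induction on $b$, it shows that for $a+b<n$ every $d\in D_{a,b}$ with $Hd=Vd=0$ has some $u=x+Hz\in D_{a+1,b}$ with $Hu=d$ and $Vu=0$. That formulation makes the index bookkeeping you were worried about automatic.
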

\begin{proof}
We use the finite double coefficients of Lemma~\ref{lem:h06}.
First note two filling facts. On the \(b=0\) edge, \(d(J,\varnothing)\)
is a nerve chain; nerve exactness fills horizontally at counts \(a<n\),
and the filler is vertically closed. For \(a>0\), a vertically closed
\(d\in D_{a,b}\) with \(a+b\le n\) fills vertically in \(D_{a,b+1}\):
for each \(J\) of size \(a\) use the assumed filling in \(L_J\), and
use zero for the other index sets. Nonempty \(J\) ensures that all filler
faces lie in \(K\) as well as the required intersection.

We claim that whenever \(a+b<n\) and \(Hd=Vd=0\), there is
\(u\in D_{a+1,b}\) with \(Hu=d\) and \(Vu=0\).
Induct on \(b\). The case \(b=0\) is the nerve-edge fact.
For \(b>0\), horizontal exactness first gives \(x\in D_{a+1,b}\)
with \(Hx=d\). Then \(Vx\in D_{a+1,b-1}\) is closed in both directions.
Induction gives \(y\in D_{a+2,b-1}\) with \(Hy=Vx\), \(Vy=0\).
The vertical filling fact gives \(z\in D_{a+2,b}\) with \(Vz=y\),
since \((a+2)+(b-1)=a+b+1\le n\). Set \(u=x+Hz\). Then
\[
 Hu=Hx+H^2z=d,\qquad Vu=Vx+HVz=Vx+Hy=0.
\]
This proves the claim by finite induction.

Finally embed a \(k\)-cell cycle \(c\) of \(K\), \(k<n\), as
\(d(\varnothing,S)=c(S)\) in \(D_{0,k}\), with other coefficients zero.
It is closed in both directions. The claim gives \(u\in D_{1,k}\)
with \(Hu=d\), \(Vu=0\). The singleton-intersection filling gives
\(z\in D_{1,k+1}\) with \(Vz=u\), because \(1+k\le n\).
The chain \(f(S)=(Hz)(\varnothing,S)\) belongs to \(C_{k+1}(K)\), and
\(\partial f=(VHz)(\varnothing,\cdot)=(HVz)(\varnothing,\cdot)=c\).
Thus every required cycle fills, including \(k=0\).
\end{proof}

\subsection{Chessboard stars, arithmetic, and assembly}

The chessboard complex \(\Delta_{a,b}\) has vertex set \([a]\times[b]\)
and faces the matchings, namely sets with no repeated row or column.
The closed star of a vertex \(v\) consists of faces \(S\) such that
\(S\cup\{v\}\) is a face. It is a subcomplex and a cone at \(v\).

\begin{lemma}[H08: first-row star cover]\label{lem:h08}
\lean{third-party-claims/ChessboardStarCover.lean}
If \(a+1\le b\), the stars of \((1,j)\), \(j\in[b]\), cover
\(\Delta_{a+1,b}\).
\end{lemma}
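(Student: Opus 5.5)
We must show that every matching \(S\) of \(\Delta_{a+1,b}\) lies in the closed star of some first-row vertex \((1,j)\), meaning that \(S\cup\{(1,j)\}\) is again a matching. Each star is a subcomplex of \(\Delta_{a+1,b}\), so covering means exactly that every face lies in at least one star. The proof is a direct counting argument, split into two cases.

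\emph{Row 1 already used.} Suppose \(S\) contains a cell of row \(1\), say \((1,j)\). Then \(S\cup\{(1,j)\}=S\) is a face, so \(S\) lies in the star of \((1,j)\).

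\emph{Row 1 unused.} Otherwise every cell of \(S\) lies in rows \(2,\ldots,a+1\). Since \(S\) has distinct rows, \(|S|\le a\). Hence \(S\) occupies at most \(a\) columns, and since \(a<a+1\le b\) some column \(j\in[b]\) is free. Adding \((1,j)\) repeats neither a row (row \(1\) is unused) nor a column (column \(j\) is free). So \(S\cup\{(1,j)\}\) is a matching and \(S\) lies in the star of \((1,j)\).

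This also covers the empty face, which lies in every star; at least one first-row vertex exists because \(b\ge a+1\ge1\). There is no real obstacle here. The only point needing care is the strict inequality \(|S|\le a<b\), which guarantees the free column and is exactly where the hypothesis \(a+1\le b\) is used. In the formal version, the main work is presumably just phrasing the face as a finite set with row and column injectivity and extracting the free column from a cardinality comparison.
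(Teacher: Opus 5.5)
Your proof is correct and follows the paper's two-case argument exactly: a matching using row one lies in the star of its own row-one cell, and otherwise it occupies at most \(a<b\) columns, so a free column \(j\) lets you extend it by \((1,j)\).
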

\begin{proof}
A matching using row one belongs to the star of its cell in that row.
Otherwise it uses at most \(a<b\) columns, leaving some column \(j\)
free. It can then be extended by \((1,j)\).
\end{proof}

\begin{lemma}[H09: intersections and residual-board chain maps]\label{lem:h09}
\lean{third-party-claims/ChessboardStarIntersections.lean}
The intersection of \(t\ge2\) distinct first-row stars in
\(\Delta_{a+1,b}\) consists exactly of matchings avoiding row one and
those \(t\) columns. It is chain-isomorphic to \(\Delta_{a,b-t}\),
including when \(t=b\) and the latter has only its empty face.
\end{lemma}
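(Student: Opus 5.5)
The plan is to describe the intersection explicitly, then to build a relabeling that is a face bijection and invoke Lemma~\ref{lem:h03}. Let the stars be those of \((1,j)\) for \(j\in J_0\subseteq[b]\), \(|J_0|=t\ge2\). A matching \(S\) lies in the star of \((1,j)\) exactly when \(S\cup\{(1,j)\}\) is a matching. That means \(S\) uses no cell of row one other than \((1,j)\) itself, and no cell of column \(j\) other than \((1,j)\).

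\emph{Description of the intersection.} First take a matching \(S\) avoiding row one and the columns of \(J_0\). Adjoining any \((1,j)\) with \(j\in J_0\) creates no repeated row or column, so \(S\) lies in every one of the stars. Conversely, let \(S\) lie in all \(t\) stars.
\begin{itemize}
\item Suppose \(S\) contains a cell \((1,c)\). Membership in the star of \((1,j)\) forces \(c=j\), since otherwise row one would repeat. Two distinct \(j,j'\in J_0\) would then give \(c=j=j'\), a contradiction.
\item Suppose \(S\) contains a cell \((i,j)\) with \(j\in J_0\). By the previous point \(i\ne1\), and then \(S\cup\{(1,j)\}\) repeats column \(j\). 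So \(S\) is not in the star of \((1,j)\), again a contradiction.
\end{itemize}
This second point is the only place where \(t\ge2\) enters, and it is needed in the first point. For \(t=1\) the star contains \((1,j)\) itself, and the claim fails.

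\emph{The relabeling.} Let \(\sigma:[b]\setminus J_0\to[b-t]\) be the order-preserving bijection. Define the vertex map \((i,c)\mapsto(i+1,\sigma^{-1}(c))\) from \([a]\times[b-t]\) into \([a+1]\times[b]\). It is injective. It sends matchings to matchings, because it is injective separately on the row coordinate and on the column coordinate. Its image faces avoid row one and \(J_0\). Every face of the intersection arises this way, with the inverse given by \((i,c)\mapsto(i-1,\sigma(c))\), which is again coordinatewise injective.

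So the map is a bijection from the faces of \(\Delta_{a,b-t}\) onto the faces of the intersection. Lemma~\ref{lem:h03} then yields the chain isomorphism together with the transport of fillings.

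\emph{The degenerate case \(t=b\).} Here \(\Delta_{a,0}\) has vertex set \([a]\times\varnothing=\varnothing\) and only the empty face. The intersection is likewise only the empty face, since every column is forbidden. The bijection statement and Lemma~\ref{lem:h03} still apply; \(C_0=\F\) is matched with \(C_0=\F\) and all other groups are zero.

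I expect no real obstacle. The only care needed is the bookkeeping: the empty face is included, and the case \(t=b\) must be handled through the vertex-empty complex rather than assumed away.
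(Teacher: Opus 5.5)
Your proposal is correct and matches the paper's proof. It uses the same two observations: a row-one cell would have to equal two distinct designated cells, and a designated column used in another row blocks extension by that star's cell. It then gives the easy converse and relabels the residual rows and columns, feeding the relabeling into Lemma~\ref{lem:h03}. One wording slip: \(t\ge2\) is used in your first bullet, and the second bullet relies on the first, not the other way around.
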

\begin{proof}
A matching in two different stars cannot use row one, since its row-one
cell would have to be both designated cells. It cannot use a designated
column in another row either, since extension by that star's cell would
fail. Conversely, avoiding all those rows and columns permits every
required extension. Relabel the remaining rows and columns and use
Lemma~\ref{lem:h03} to obtain the chain isomorphism.
\end{proof}

\begin{lemma}[H10: the star nerve and singleton fillings]\label{lem:h10}
\lean{third-party-claims/ChessboardStarNerve.lean}
For \(a\ge1\), \(b\ge2\), the nerve of the first-row stars of
\(\Delta_{a+1,b}\) is the boundary of the simplex on its \(b\) columns.
It is \(\operatorname{Acyclic}(\mathcal N,n)\) for every \(n<b\).
Every individual star is exact at every cell count.
\end{lemma}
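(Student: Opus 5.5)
The nerve is identified directly from Lemma~\ref{lem:h09}, the acyclicity is read off from Lemma~\ref{lem:h05}, and the statement about single stars follows from Lemma~\ref{lem:h04}. Let \(\mathrm{St}_j\) denote the closed star of \((1,j)\) in \(\Delta_{a+1,b}\), indexed by \(j\in[b]\).

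\textbf{Nerve membership.} We decide, for each index set \(J\subseteq[b]\), whether \(L_J=\bigcap_{j\in J}\mathrm{St}_j\) has a vertex.

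\emph{Singletons.} For \(J=\{j\}\), the star contains the vertex \((1,j)\), so every singleton is a nerve face.

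\emph{Sets with \(2\le|J|=t\).} By Lemma~\ref{lem:h09}, \(L_J\) consists of the matchings that avoid row one and the \(t\) columns of \(J\). A vertex of \(L_J\) is then a cell \((i,c)\) with \(i\ge2\) and \(c\notin J\).
\begin{itemize}
\item Since \(a\ge1\), at least one such row \(i\) exists.
\item So \(L_J\) has a vertex exactly when some column lies outside \(J\), that is, when \(t<b\).
\item For \(J=[b]\) the intersection is \(\Delta_{a,0}\), which has only the empty face. Hence \([b]\) is not a nerve face, even though the intersection is nonempty as an augmented complex. This is the point where the paper's convention that an empty-face-only intersection does not give a nerve face matters.
\end{itemize}
Together with the empty face, the nerve \(\mathcal N\) is therefore exactly the family of proper subsets of \([b]\). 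This is the boundary of the simplex on \(b\) vertices; the hypothesis \(b\ge2\) makes it a genuine nonempty boundary complex.

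\textbf{Acyclicity of the nerve.} Lemma~\ref{lem:h05} states that the boundary of a simplex on a vertex set of size \(b\) satisfies \(\operatorname{Acyclic}(\partial\Delta,n)\) for every \(n<b\). Applied to \(\mathcal N\), this is the claimed acyclicity.

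\textbf{Single stars.} Each star is a cone at its vertex \((1,j)\): if \(S\cup\{(1,j)\}\) is a matching, then adjoining \((1,j)\) again gives the same matching. Lemma~\ref{lem:h04} then gives \(\operatorname{Exact}(\mathrm{St}_j,k)\) for every \(k\ge0\).

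The only delicate step is the membership computation, in particular the boundary case \(t=b\); everything else is a direct citation.
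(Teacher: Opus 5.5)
Your proof is correct and follows the paper's argument. You identify the nerve as the proper-subset complex, with the full column set excluded by Lemma~\ref{lem:h09}, then apply Lemma~\ref{lem:h05} to the nerve and Lemma~\ref{lem:h04} to each star as a cone. The only cosmetic difference is in how proper column sets are handled: the paper exhibits a common vertex for all of them at once (a cell in an unused column and a row other than row one), whereas you treat singletons separately and invoke Lemma~\ref{lem:h09} for sets of size at least two.
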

\begin{proof}
A proper set of columns leaves an unused column and another row; their
cell belongs to every indicated star. The full set of stars has no
common vertex, by Lemma~\ref{lem:h09}. Thus the nerve is exactly the
proper-subset complex. Apply Lemma~\ref{lem:h05}; each singleton star
is a cone, so Lemma~\ref{lem:h04} fills its cycles.
\end{proof}

\begin{lemma}[H11: induction parameters]\label{lem:h11}
\lean{claims/ChessboardParameterArithmetic.lean}
For natural \(a,b\), put
\[
 \nu(a,b)=\min\left\{a,b,\left\lfloor\frac{a+b+1}{3}\right\rfloor\right\}.
\]
It is symmetric, bounded by each side, zero if a side is zero, and at least
one when both sides are positive. For \(b\ge1\), \(\nu(1,b)=1\).
If \(2\le a\le b\), then \(\nu(a,b)\le b-1\), and for \(2\le t<b\),
\[
 \nu(a,b)+1\le\nu(a-1,b-t)+t,
 \qquad \min\{a-1,b-t\}<\min\{a,b\}.
\]
For \(s\ge1\), \(N\ge2s-1\) gives \(\nu(s,N)=s\).
Cell counts \(0\le k<n\) correspond exactly to reduced degrees
\(-1\le k-1\le n-2\), with subtraction interpreted in the integers.
\end{lemma}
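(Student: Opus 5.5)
The plan is to prove each clause of Lemma~\ref{lem:h11} by direct case analysis on the three terms of the minimum. No earlier result is needed: everything reduces to floor arithmetic. Throughout I use the identity \(\lfloor(x+3c)/3\rfloor=\lfloor x/3\rfloor+c\) for integer \(c\), together with monotonicity of the floor.

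\emph{Elementary properties.} Symmetry holds because the three terms are symmetric in \(a,b\), and \(\nu(a,b)\le a\) and \(\nu(a,b)\le b\) are immediate from the definition. If a side is zero, the minimum is zero. If both sides are positive, then \(a+b+1\ge3\), so all three terms are at least one. For \(b\ge1\) we get \(\nu(1,b)=\min\{1,b,\lfloor(b+2)/3\rfloor\}=1\), since \(\lfloor(b+2)/3\rfloor\ge1\). For \(s\ge1\) and \(N\ge2s-1\), the third term satisfies \(\lfloor(s+N+1)/3\rfloor\ge\lfloor 3s/3\rfloor=s\), and \(N\ge s\), so the minimum is \(s\). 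The correspondence between cell counts and reduced degrees is just the substitution \(k\mapsto k-1\) in \(\mathbb Z\).

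\emph{Bound below \(b\).} For \(2\le a\le b\) I bound the third term: \(\nu(a,b)\le\lfloor(2b+1)/3\rfloor\). This is at most \(b-1\) when \(b\ge4\), because \(2b+1\le3b-3\). The cases \(b=2\) and \(b=3\) are checked directly, with values \(1\) and \(2\).

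\emph{Main inequality.} Here \(2\le a\le b\) and \(2\le t<b\). Adding \(t\) inside the minimum gives
\[
 \nu(a-1,b-t)+t=\min\{a-1+t,\;b,\;\lfloor(a+b+2t)/3\rfloor\}.
\]
I compare each term with \(\nu(a,b)+1\).
\begin{itemize}
\item First term: \(a-1+t\ge a+1\ge\nu(a,b)+1\), using \(t\ge2\).
\item Second term: \(b\ge\nu(a,b)+1\) is exactly the previous bound \(\nu(a,b)\le b-1\).
\item Third term: \(\lfloor(a+b+2t)/3\rfloor\ge\lfloor(a+b+4)/3\rfloor=\lfloor(a+b+1)/3\rfloor+1\ge\nu(a,b)+1\).
\end{itemize}
The strict decrease of the minimum follows from \(\min\{a-1,b-t\}\le a-1<a=\min\{a,b\}\).

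The only point needing care is the bound \(\nu(a,b)\le b-1\), specifically its small cases. This bound is exactly where the hypothesis \(a\le b\) is used.
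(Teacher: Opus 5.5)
Your proposal is correct and follows essentially the paper's route: elementary floor arithmetic, comparing the three entries of the minimum one at a time. The differences are cosmetic. You bound \(\nu(a,b)\le b-1\) through the third term alone, whereas the paper splits \(a<b\) from \(a=b\). You also add \(t\) inside \(\nu(a-1,b-t)\) rather than showing each of its entries is at least \(\nu-t+1\), which spares you the paper's remark about a negative right side.
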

\begin{proof}
Symmetry and the basic bounds follow directly from the minimum.
If \(a<b\), \(\nu\le a\le b-1\); if \(a=b\ge2\), then
\(\lfloor(2b+1)/3\rfloor\le b-1\).
For the recursive bound write \(\nu=\nu(a,b)\). Each smaller-board
entry is at least \(\nu-t+1\):
\[
 a-1\ge\nu-t+1,\qquad b-t\ge\nu-t+1,
 \qquad a+b-t\ge3\nu-1-t\ge3(\nu-t+1).
\]
The last inequality uses \(t\ge2\). Taking the smaller minimum proves
the bound; if the right side is negative the comparison is automatic.
The induction measure decreases because \(a-1<a=\min\{a,b\}\).
Finally \(N\ge2s-1\) implies \(N\ge s\) and
\(\lfloor(s+N+1)/3\rfloor\ge s\), proving the narrow-board identity.
The degree translation is the integer inequality obtained by subtracting one.
\end{proof}

\begin{theorem}[H12: homological chessboard bound {\cite{BLVZ94}}]
\label{thm:chessboard}\lean{third-party-claims/ChessboardHomology.lean}
For \(a,b\ge1\), every \(k\)-cell cycle of \(\Delta_{a,b}\) has a
\((k+1)\)-cell filler whenever \(0\le k<\nu(a,b)\). Equivalently,
\(\widetilde H_j(\Delta_{a,b};\F)=0\) for
\(-1\le j\le\nu(a,b)-2\).
\end{theorem}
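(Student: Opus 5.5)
The plan is a strong induction on \(\min\{a,b\}\), applying the homological cover lemma (Lemma~\ref{lem:nerve}) to the first-row star cover. Chessboard transpose is a chain isomorphism (Lemma~\ref{lem:h03}) and \(\nu\) is symmetric (Lemma~\ref{lem:h11}). So I may assume \(a\le b\) and write the board as \(\Delta_{a,b}=\Delta_{(a-1)+1,b}\).

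\textbf{Base cases.} If \(a=1\), then \(\nu(1,b)=1\). The only requirement is \(\operatorname{Exact}(K,0)\), which holds because the complex has a vertex (\(b\ge1\)).

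\textbf{Inductive step, \(2\le a\le b\).} Put \(n=\nu(a,b)\). By Lemma~\ref{lem:h11} we have \(n\le b-1\), and \(b\ge2\).
\begin{itemize}
\item \emph{Cover.} Lemma~\ref{lem:h08} with \(a-1+1\le b\) says the stars of \((1,j)\) cover \(\Delta_{a,b}\).
\item \emph{Nerve.} By Lemma~\ref{lem:h10} (with \(a-1\ge1\), \(b\ge2\)) the nerve is the boundary of the \(b\)-simplex. Since \(n<b\), it is \(\operatorname{Acyclic}(\mathcal N,n)\).
\item \emph{Singleton intersections.} For \(|J|=1\), each star is a cone, hence exact at every \(r\).
\end{itemize}

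\textbf{Larger intersections.} For \(|J|=t\ge2\) in the nerve we have \(t<b\), and Lemma~\ref{lem:h09} gives \(L_J\cong\Delta_{a-1,b-t}\). I need \(\operatorname{Exact}(L_J,r)\) for every \(r\) with \(t+r\le n\), i.e.\ \(r<n-t+1\).

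Lemma~\ref{lem:h11} gives \(n+1\le\nu(a-1,b-t)+t\), so \(r\le n-t<\nu(a-1,b-t)\).

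If both \(a-1\) and \(b-t\) are positive, the induction hypothesis applies, since \(\min\{a-1,b-t\}<\min\{a,b\}\). It gives \(\operatorname{Exact}(\Delta_{a-1,b-t},r)\), which transports to \(L_J\) by Lemma~\ref{lem:h03}.

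If a side is zero, then \(\nu(a-1,b-t)=0\) and the range of \(r\) is empty. So no condition is needed, which also covers the only-empty-face case.

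Lemma~\ref{lem:nerve} then yields \(\operatorname{Acyclic}(\Delta_{a,b},n)\). The reduced-degree reformulation is the index translation recorded in Lemma~\ref{lem:h11}.

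\textbf{Main obstacle.} The delicate step is the bookkeeping that matches the hypothesis of Lemma~\ref{lem:nerve} (\(|J|+r\le n\)) to the recursive inequality of Lemma~\ref{lem:h11}. This includes the degenerate residual boards and the restriction \(t<b\) coming from nerve membership.

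The intended induction measure is \(\min\{a,b\}\), possibly after transposition. To keep the transposition harmless, I will state the induction hypothesis for all ordered pairs with smaller minimum.
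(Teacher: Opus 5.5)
Your proposal is correct and follows the paper's proof essentially step for step. Like the paper, you use strong induction on \(\min\{a,b\}\) after transposing to \(a\le b\), the first-row star cover with its simplex-boundary nerve, cones for the singleton intersections, and the recursive inequality of Lemma~\ref{lem:h11} to feed the induction hypothesis for \(\Delta_{a-1,b-t}\) into Lemma~\ref{lem:nerve}. The one difference is your degenerate-side branch, and it is vacuous: nerve membership forces \(t<b\), and together with \(a\ge2\) this makes both sides of the residual board positive, as the paper notes.
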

\begin{proof}
Induct strongly on \(\min\{a,b\}\), transposing by Lemma~\ref{lem:h03}
so that \(a\le b\). For \(a=1\), \(\nu=1\), and a vertex fills the
augmentation generator. For \(a\ge2\), cover by the first-row stars.
Their nerve is acyclic through count \(\nu\), since \(\nu\le b-1\).
Consider a nonempty nerve face \(J\) of size \(t\) and a count \(r\)
with \(t+r\le\nu\). For \(t=1\) the star is a cone. Otherwise
\(2\le t<b\), and its intersection is \(\Delta_{a-1,b-t}\), with both
sides positive. Write \(\mu=\nu(a-1,b-t)\).
Lemma~\ref{lem:h11} gives \(\nu+1\le\mu+t\), hence \(r<\mu\).
The smaller induction measure and the chain isomorphism of
Lemma~\ref{lem:h09} supply the required intersection filling.
Lemma~\ref{lem:nerve} now fills every \(k<\nu\) cycle of the original board.
\end{proof}

\begin{corollary}[H13: the matching-moment filling interface]
\label{cor:chessboard-filling}\lean{third-party-claims/ChessboardFillingProof.lean}
Let \(s\ge2\), \(N\ge2s-1\). Every cycle on exact-size
\((s-1)\)-faces of \(\Delta_{s,N}\) is the boundary of a chain on
exact-size \(s\)-faces. In reduced homology,
\[
 \widetilde H_{s-2}(\Delta_{s,N};\F)=0.
\]
For \(s=2\), the cycle hypothesis includes zero augmentation.
\end{corollary}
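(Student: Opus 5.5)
This follows from Theorem~\ref{thm:chessboard} applied with \(a=s\), \(b=N\), plus two small translations: from the cycle/filler language to exact-size faces, and from cell counts to reduced degrees. By Lemma~\ref{lem:h11}, the hypotheses \(s\ge1\) and \(N\ge2s-1\) give \(\nu(s,N)=s\). Since \(s\ge2\) and \(N\ge3\), both sides of the board are positive, so Theorem~\ref{thm:chessboard} applies. It fills every \(k\)-cell cycle with \(0\le k<s\), and in particular every cycle with \(k=s-1\) cells.

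Next I identify a cycle on exact-size \((s-1)\)-faces of \(\Delta_{s,N}\) with a supported augmented chain in \(C_{s-1}(\Delta_{s,N})\). This is the last assertion of Proposition~\ref{prop:h01}, together with the identification of the two boundary formulas in Lemma~\ref{lem:h03}. The boundary of the filler, which lies in \(C_s\), then equals the given cycle, and restricting the filler to exact-size \(s\)-faces gives the asserted chain. For \(s=2\) the cell count is \(k=1\). A \(1\)-cell cycle in the augmented complex is precisely a vertex chain whose augmentation, the sum of its vertex coefficients, is zero, because the boundary \(C_1\to C_0\) is that sum. So the zero-augmentation hypothesis is exactly the cycle condition, and the filler is a \(2\)-cell chain.

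Finally, the reduced-homology statement follows from the cell-count/degree dictionary at the end of Lemma~\ref{lem:h11}: \(k\) cells corresponds to reduced degree \(k-1\), so \(k=s-1\) gives \(\widetilde H_{s-2}=0\). The only real care needed, and the step I would check most closely, is this off-by-one bookkeeping together with the \(s=2\) augmentation edge case. The mathematical content is already in Theorem~\ref{thm:chessboard}.
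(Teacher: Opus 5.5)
Your proposal is correct and follows essentially the same route as the paper: Lemma~\ref{lem:h11} gives \(\nu(s,N)=s\), Theorem~\ref{thm:chessboard} is applied at cell count \(s-1<s\), and the result is transferred to exact-size faces through the boundary identification of Lemma~\ref{lem:h03} (with Proposition~\ref{prop:h01}). Your explicit handling of the \(s=2\) augmentation case and of the shift from cell count to reduced degree only spells out steps the paper leaves implicit.
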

\begin{proof}
Lemma~\ref{lem:h11} gives \(\nu(s,N)=s\).
Apply Theorem~\ref{thm:chessboard} at cell count \(s-1<s\), then use the
exact-face boundary equivalence of Lemma~\ref{lem:h03}. This is precisely
the boundary equation used in Theorem~\ref{thm:moments}.
\end{proof}

\section{Formal-verification map and reproduction}\label{sec:verification}

The mathematical arguments are given in the body and Appendix~\ref{sec:topology};
reading them does not require Lean or a supplementary document. This appendix
locates the principal formal declarations and gives a complete command sequence
for checking them. Formal verification establishes the statements under the
encoded definitions; external review of those definitions and of the literature
claims remains separate.

\subsection{From paper statements to formal declarations}

The table uses the pinned [Lean] links of the paper. File paths are relative to
\texttt{formalization/}; all names are in namespace \texttt{MathResearch}.
The prefix \texttt{PC.} below abbreviates
\nolinkurl{MathResearch.PolynomialCalculus.}, and \texttt{TP.} abbreviates
\nolinkurl{MathResearch.ThirdParty.}; unprefixed names have prefix
\texttt{MathResearch.} Each file's header specifies its scope and dependencies.
The table selects the principal declarations rather than every helper lemma;
the per-statement links in the body cover the remaining steps.

\begingroup
\small
\renewcommand{\arraystretch}{1.3}
\begin{longtable}{@{}>{\raggedright\arraybackslash}p{0.30\textwidth}>{\raggedright\arraybackslash}p{0.65\textwidth}@{}}
\toprule
Paper statement and scope & Lean file and declaration\\
\midrule
\endfirsthead
\toprule
Paper statement and scope & Lean file and declaration\\
\midrule
\endhead
\bottomrule
\endfoot
Matching extension over \(\F\), Theorem~\ref{thm:moments}
& \leanref{claims/MatchingMomentExtension.lean}{\nolinkurl{claims/MatchingMomentExtension.lean}}\newline
\nolinkurl{PC.matching_moments_extend}\\
Chessboard filling over \(\F\), Corollary~\ref{cor:chessboard-filling}
& \leanref{third-party-claims/ChessboardFillingProof.lean}{\nolinkurl{third-party-claims/ChessboardFillingProof.lean}}\newline
\nolinkurl{TP.chessboard_filling}\\
Cube degree drop, Lemma~\ref{lem:cube-drop}
& \leanref{claims/BinaryCubeDegreeDrop.lean}{\nolinkurl{claims/BinaryCubeDegreeDrop.lean}}\newline
\nolinkurl{PC.cubeDifference_degree}\\
Coefficient isolation, Lemma~\ref{lem:row-cube}
& \leanref{claims/RowCubeCoefficientIsolation.lean}{\nolinkurl{claims/RowCubeCoefficientIsolation.lean}}\newline
\nolinkurl{PC.row_cube_coefficient_isolation}\\
Old-system separator, Theorem~\ref{thm:row-separation}
& \leanref{claims/CubeResidualDualSeparation.lean}{\nolinkurl{claims/CubeResidualDualSeparation.lean}}\newline
\nolinkurl{PC.cube_residual_dual_separation}\\
Bounded cofactors for ordinary restriction, Lemma~\ref{lem:restriction-ideal}
& \leanref{claims/AffineRestrictionIdeal.lean}{\nolinkurl{claims/AffineRestrictionIdeal.lean}}\newline
\nolinkurl{affine_restriction_coefficients}\\
Common restriction kernel, Lemma~\ref{lem:kernel}
& \leanref{claims/CommonAffineRestrictionKernel.lean}{\nolinkurl{claims/CommonAffineRestrictionKernel.lean}}\newline
\nolinkurl{common_affine_kernel_of_square}\\
Removal including unit-span blocks, Lemma~\ref{lem:hybrid}
& \leanref{claims/AffineFamilyRemovalWithUnits.lean}{\nolinkurl{claims/AffineFamilyRemovalWithUnits.lean}}\newline
\nolinkurl{affine_family_removal_with_units}\\
Finite exclusion with the square condition, Theorem~\ref{thm:affine}
& \leanref{claims/AffineFamilyExclusion.lean}{\nolinkurl{claims/AffineFamilyExclusion.lean}}\newline
\nolinkurl{affine_family_exclusion_of_square}\\
Actual DAG registry, Theorem~\ref{thm:dag-registry}
& \leanref{claims/AffineDAGRegistry.lean}{\nolinkurl{claims/AffineDAGRegistry.lean}}\newline
\nolinkurl{PC.affine_dag_registry}\\
Usual CNF-to-PC transfer, Theorem~\ref{thm:clause-transfer}
& \leanref{claims/BitPHPClauseTransfer.lean}{\nolinkurl{claims/BitPHPClauseTransfer.lean}}\newline
\nolinkurl{PC.usual_bitPHP_PC_transfer}\\
Main bound, both rule conventions, Theorem~\ref{thm:main}
& \leanref{claims/BitPHPExponential.lean}{\nolinkurl{claims/BitPHPExponential.lean}}\newline
\nolinkurl{bitPHP_exponential}; \nolinkurl{bitPHP_exponential_base_two}\\
Generic sufficient condition over \(\F\), Theorem~\ref{thm:generic}
& \leanref{claims/GenericCNFSubspaceCriterion.lean}{\nolinkurl{claims/GenericCNFSubspaceCriterion.lean}}\newline
\nolinkurl{PC.generic_CNF_subspace_criterion}\\
Density form, Corollary~\ref{cor:generic-density}
& \leanref{claims/GenericCNFDensityBound.lean}{\nolinkurl{claims/GenericCNFDensityBound.lean}}\newline
\nolinkurl{PC.generic_CNF_density_bound_simple}\\
\end{longtable}
\endgroup

The homological argument is proved in Lean, including the chessboard filling
used by matching extension; it is not imported as an unchecked axiom.
The final transitive axiom checks described below cover these supporting
results as well as the headline theorem.

\subsection{A pinned checkout and verification commands}

Use a machine with Git, Python 3, a C toolchain and \texttt{make}, and
\href{https://lean-lang.org/install/manual/}{elan}, Lean's toolchain manager,
on the command path. Network access is needed for the checkout, the pinned
Lean toolchain and Mathlib dependencies, and their compiled library cache.
No Python packages are required. Allow disk space for the Lean toolchain and
library build artifacts. The project pins Lean \texttt{4.34.0-rc2};
\texttt{lakefile.lean} and \texttt{lake-manifest.json} pin Mathlib and its
transitive dependencies. Do not update these pins.

The following shell commands create a fresh checkout at commit
\href{https://github.com/kbr-/math-research/tree/8904bf09a5376f48a00d1c25d079bf0c6441502a}
{\texttt{8904bf09}}, which contains the aggregate verification module and the
theorem sources used by every [Lean] link in this paper. Revision 2 changes
exposition only, so it uses the same verification target. The module's name
\texttt{BitPHPPreprintRevision1} is retained deliberately.

\begingroup\small
\begin{verbatim}
git clone https://github.com/kbr-/math-research.git bit-php-check
cd bit-php-check
git checkout --detach 8904bf09a5376f48a00d1c25d079bf0c6441502a
cd formalization
export MATHLIB_NO_CACHE_ON_UPDATE=1
export LEAN_NUM_THREADS=2
lean --version
lake exe cache get claims/BitPHPPreprintRevision1.lean
python3 verify.py --target claims/BitPHPPreprintRevision1.lean \
  --out verification.txt
lake env leanchecker --fresh claims.BitPHPPreprintRevision1
\end{verbatim}
\endgroup

On a fresh checkout the cache command retrieves the required library artifacts;
the verifier builds the project proofs and prints the types and transitive
axioms of the aggregate's listed declarations. It rejects unfinished proofs
and nonstandard axioms. The output file is written in \texttt{formalization/}; use
a new name if repeating the command. All commands must exit successfully.
The final command replays the aggregate and its imports in a fresh Lean kernel
environment. A successful build alone does not substitute for the type and
axiom checks or that replay.

For example, the expected main-theorem axiom report is
\begin{quote}\small\ttfamily
'MathResearch.bitPHP\_exponential' depends on axioms:\par
[propext, Classical.choice, Quot.sound]
\end{quote}
These are Lean's standard foundations; neither \texttt{sorryAx} nor a
custom mathematical axiom is allowed. The aggregate also checks the
superpolynomial corollary, both exponential forms, the generic criterion
and density form, and the complementary-pair short-proof control, with their
imports. It does not formalize the informal Tseitin remark, the literature
comparison, or the AI-assistance disclosure.

\end{document}